\documentclass{amsart}

\usepackage{amsmath,amsthm,amssymb,verbatim, graphicx, url}
\usepackage{bbm, stmaryrd}
\usepackage{extarrows}
\usepackage{tikz-cd}

%\linespread{1.3}

\newtheorem{theorem}[subsection]{Theorem}
\newtheorem{lemma}[subsection]{Lemma}
\newtheorem{corollary}[subsection]{Corollary}
\newtheorem{proposition}[subsection]{Proposition}

\theoremstyle{definition}
\newtheorem{definition}[subsection]{Definition}
\newtheorem{remark}[subsection]{Remark}

\DeclareMathOperator{\A}{\mathbb{A}\mathbb{A}}
\DeclareMathOperator{\N}{\mathbb{N}}
\DeclareMathOperator{\R}{\mathbb{R}}
\DeclareMathOperator{\Z}{\mathbb{Z}}
\renewcommand{\aa}{\mathbf{a}}
\DeclareMathOperator{\bb}{\mathbf{b}}

\DeclareMathOperator{\mm}{\mathbf{m}}
\DeclareMathOperator{\uu}{\mathbf{u}}
\DeclareMathOperator{\vv}{\mathbf{v}}
\DeclareMathOperator{\ww}{\mathbf{w}}
\DeclareMathOperator{\xx}{\mathbf{x}}
\DeclareMathOperator{\yy}{\mathbf{y}}
\DeclareMathOperator{\zz}{\mathbf{z}}
\DeclareMathOperator{\calA}{\mathcal{A}}
\DeclareMathOperator{\calD}{\mathcal{D}}
\DeclareMathOperator{\calF}{\mathcal{F}}
\DeclareMathOperator{\calG}{\mathcal{G}}
\DeclareMathOperator{\calH}{\mathcal{H}}
\DeclareMathOperator{\calI}{\mathcal{I}}
\DeclareMathOperator{\calJ}{\mathcal{J}}
\DeclareMathOperator{\calM}{\mathcal{M}}
\DeclareMathOperator{\calR}{\mathcal{R}}
\DeclareMathOperator{\calX}{\mathcal{X}}
\DeclareMathOperator{\zero}{\mathbf{0}}

\DeclareMathOperator{\im}{\textup{Im}}
\DeclareMathOperator{\Span}{\textup{Span}}
\DeclareMathOperator{\Spec}{\textup{Spec}}
\DeclareMathOperator{\lex}{\ell}
\DeclareMathOperator{\one}{\mathbbm{1}}
\DeclareMathOperator{\HOT}{\textup{higher order terms}}
\DeclareMathOperator{\Sch}{\mathfrak{Sch}}
\DeclareMathOperator{\colim}{\textup{colim}}

\DeclareMathOperator{\into}{\hookrightarrow}

\title{Numeric Invariants from Multidimensional Persistence \\ \emph{May 2015 draft}}

\author{Jacek Skryzalin and Gunnar Carlsson}
\address{Department of Mathematics, Stanford University}
\email{jskryzal@math.stanford.edu}
\email{gunnar@math.stanford.edu}

\begin{document}

\begin{abstract}
We extend the results of Adcock, Carlsson, and Carlsson (\cite{algebraic_functions}) by constructing numeric invariants from the computation of a multidimensional persistence module as given by Carlsson, Singh, and Zomorodian in \cite{computing_multid_persistence}.
\end{abstract}

\maketitle

The use of topology to study point cloud data has been well established (\cite{top_and_data}, \cite{acta_paper}). Given a finite metric space (e.g., a finite subset of $\R^n$), one first constructs a filtered complex which attempts to approximate the shape of the underlying data. Commonly used complexes include the Vietoris-Rips complex, the Cech complex, the $\alpha$-complex, and the witness complex. The \emph{persistent homology} of a filtered complex is an abstract algebraic entity which combines information about the homology of the levelwise complexes of a filtered simplicial complex and the maps on homology induced by the filtration maps of the complex. The process of constructing a filtered complex and taking persistent homology provides abstract algebraic information about the original point cloud data.

It is difficult to interpret raw calculations of persistent homology from a geometric and intuitive standpoint. This is partially remedied by Adcock, Carlsson, and Carlsson, who have successfully studied ways of interpreting persistent homology geometrically through the construction of numeric invariants \cite{algebraic_functions}. They produce an infinite family of functions, each of which takes as input any one-dimensional persistence module (the most notable of such objects being the persistent homology of a one-dimensional filtered complex) and outputs a nonnegative number which has a concrete interpretation in terms of the geometry of the filtered complex. If the filtered complex was obtained from point cloud data, these values provide information about the size and density of prominent geometric features of the point cloud. More importantly, these values can then be used as features in machine learning algorithms. 

Unfortunately, the method of \cite{algebraic_functions} does not generalize nicely to multidimensional persistence modules. The construction of the algebraic functions in \cite{algebraic_functions} relies heavily on the classification theorem of finitely generated modules over a PID and the categorical equivalence between one-dimensional persistence modules and finitely generated modules over a PID. Multidimensional persistence modules are categorically equivalent to finitely presented graded $\R[x_1, ..., x_n]$-modules, for which there is no analogous classification theorem. Furthermore, the results of \cite{theory_multid_persistence} show that there is no complete discrete invariant for multidimensional persistence; any numeric invariants that we provide must necessarily be incomplete.

It is our goal to provide functions which generalize those of \cite{algebraic_functions}. Although our functions are identical to those of \cite{algebraic_functions} in the case of one-dimensional persistence, our construction is measure-theoretic rather than algebraic. As such, our functions can be defined just as easily on the space of multidimensional persistence modules as on the space of one-dimensional persistence modules.

In Section \ref{preliminaries} we review multidimensional persistence and introduce convenient notation. In Section \ref{the-ring} we calculate the ring of $K$-finite functions on a convenient set of multidimensional persistence modules, which we then extend to all multidimensional persistence modules in Section \ref{calculate-invariants}. In Section \ref{group_completions_geo_insights}, we present results pertaining to the power of the invariants discussed in Sections \ref{the-ring} and \ref{calculate-invariants}, and we also propose an alternative way to calculate these invariants. Finally, in Section \ref{recovering}, we show how to recover information about a multidimensional persistence module from the functions defined in Section \ref{calculate-invariants}.

\section{Preliminaries and Conventions}\label{preliminaries}

\subsection*{Notation:} In this paper, $k$ will denote a field of arbitrary characteristic. We will denote elements of $k^n$ by bold letters (i.e., $\aa \in k^n$), but we will denote their components by italic letters (i.e., $a_j \in k$). For $\aa, \bb \in k^n$ and $\mm \in \N^n$, we will denote the sum $\sum_{j = 1}^n a_i + b_i$ by $\aa + \bb$ and the product $\prod_{i=1}^n a_i^{m_i}$ by $\aa^{\mm}$. If additionally $c,d \in k$, we denote the quantity $\left(\sum_{i=1}^n a_i\right) + c$ by $\aa + c$, the quantity $\prod_{i=1}^n c^{m_i}$ by $c^{\mm}$, and the quantity $\prod_{i=1}^n (a_i + c)(b_i + d)$ by $(\aa + c)(\bb + d)$. Note that we will frequently make use of these abbreviations for \emph{constants} in $k^n$, but we will never make use of these abbreviations for \emph{variables}. That is, for variables $\xx, \yy \in k^n$, the quantity $\xx + \yy$ denotes the vector sum of $\xx$ and $\yy$ in $k^n$.

For a variable \emph{or} constant $\xx \in \R^n$, denote by $\lfloor \xx \rfloor$ (respectively, $\lceil \xx \rceil$) the elements of $\R^n$ obtained by taking the floor (respectively, ceiling) of each of the components $x_j$ of $\xx$. Similarly, for $\xx, \yy \in \R^n$, we say that $\xx \leq \yy$ if $x_j \leq y_j$ for all $i$. Additionally, we will use $\leq_{\lex}$ to denote the lexicographic order on $\R^n$.

Finally, unless otherwise noted, sets in this paper will be sets with repetition.

\subsection*{Multidimensional persistence:}

We begin by reviewing the concept of persistence.

\begin{definition}
A persistence module $M$ indexed by the partially ordered set $V$ is a family of $k$-modules $\{M_{\vv}\}_{\vv \in V}$ together with homomorphisms $\phi_{\uu, \vv} : M_{\uu} \to M_{\vv}$ for all $\uu \leq \vv$, such that $\phi_{\uu, \vv} \circ \phi_{\vv, \ww} = \phi_{\uu, \ww}$ whenever $\uu \leq \vv \leq \ww$.
\end{definition}

In this paper, the indexing set $V$ with either be $\N^n$ or $\R^n$ for some $n \in \N_{> 0}$. When the indexing set $V$ is $\N^1$, we refer to $M$ as a one-dimensional persistence module. When $V$ is $\N^n$ for $n > 1$, we refer to $M$ as a multidimensional persistence module. Furthermore, when $V$ is $\N^n$, we refer to $M$ as an integral persistence module; when $V$ is $\R^n$, we refer to $M$ as a real persistence module.

\begin{definition}
Given a persistence module $M$ indexed over $\N^n$, we can define an $n$-graded module $\alpha(M)$ over $k[x_1, ..., x_n]$ via the following: \[\alpha(M) = \bigoplus_{\vv \in \N^n} M_{\vv},\] where the $k[x_1, ..., x_n]$-module structure is given by \[\xx^{\vv - \uu}m_{\uu} = \phi_{\uu, \vv}(m_{\uu})\] for $m_{\uu} \in M_{\uu}$ whenever $\uu \leq \vv$. 
\end{definition}

Furthermore, we have the following theorem:

\begin{theorem}[\cite{theory_multid_persistence}]\label{correspond}
The correspondence $\alpha$ defines an equivalence of categories between the category of finite persistence modules over $k$ and the category of finitely presented $n$-graded modules over $k[x_1, ..., x_n]$.
\end{theorem}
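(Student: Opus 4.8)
The plan is to produce an explicit quasi-inverse $\beta$ to $\alpha$, check that $\alpha$ and $\beta$ are mutually inverse functors on the full categories of persistence modules and graded modules, and then verify that they identify the two finiteness conditions in the statement. First I would confirm that $\alpha$ is a well-defined functor. For a persistence module $M$, the rule $\xx^{\vv-\uu}m_\uu = \phi_{\uu,\vv}(m_\uu)$ prescribes the action of each monomial on the degree-$\uu$ piece $M_\uu$; the relation $\phi_{\uu,\vv}\circ\phi_{\vv,\ww} = \phi_{\uu,\ww}$ is precisely what makes these monomial actions associative and makes the operators for distinct variables $x_i, x_j$ commute, so $\alpha(M) = \bigoplus_{\vv\in\N^n} M_\vv$ is a genuine $\N^n$-graded $k[x_1,\dots,x_n]$-module. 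A morphism of persistence modules is a family $f_\vv : M_\vv \to M'_\vv$ commuting with all the $\phi$'s, hence with all monomial actions, so $\bigoplus_\vv f_\vv$ is a degree-preserving module homomorphism; functoriality is then immediate.

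Next I would define $\beta$ on a graded module $N = \bigoplus_{\vv\in\N^n} N_\vv$ by setting $\beta(N)_\vv = N_\vv$ and letting $\phi_{\uu,\vv}$ be multiplication by the monomial $\xx^{\vv-\uu}$; since multiplication by $x_i$ raises degree by $e_i$, this lands in the correct graded component, and associativity of the $k[x_1,\dots,x_n]$-action supplies the cocycle relation, so $\beta(N)$ is a persistence module, and a graded homomorphism restricts degreewise to a morphism of persistence modules. By inspection $\beta\circ\alpha$ and $\alpha\circ\beta$ are the identity on objects and on morphisms: the grading of $\alpha(M)$ returns the modules $M_\vv$, multiplication by $\xx^{\vv-\uu}$ returns the maps $\phi_{\uu,\vv}$, and vice versa. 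Thus $\alpha$ is an isomorphism, not merely an equivalence, between \emph{all} persistence modules over $\N^n$ and \emph{all} $\N^n$-graded $k[x_1,\dots,x_n]$-modules.

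It remains to reconcile the two finiteness conditions. Because $k[x_1,\dots,x_n]$ is Noetherian by Hilbert's basis theorem, an $\N^n$-graded module over it is finitely presented if and only if it is finitely generated, so it suffices to show that $\alpha$ and $\beta$ restrict to a bijection between finite persistence modules and finitely generated graded modules. This is a matter of unwinding definitions: a finite homogeneous generating set $g_1,\dots,g_r$ of $\alpha(M)$, with $g_i$ of degree $\vv_i$, is the same thing as a finite collection of elements $g_i \in M_{\vv_i}$ such that each $M_\vv$ is spanned by the images $\phi_{\vv_i,\vv}(g_i)$ over those $i$ with $\vv_i \le \vv$, which is exactly the condition that $M$ be finite; moreover finite-dimensionality of every $M_\vv$ is automatic on the algebraic side, since each graded component of $k[x_1,\dots,x_n]$ is one-dimensional and hence a finitely generated graded module has finite-dimensional components. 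Restricting $\alpha$ and $\beta$ to these subcategories yields the asserted equivalence of categories.

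The categorical half of this argument — that $\alpha$ and $\beta$ are inverse functors — is entirely formal and drops out of the definitions. The only real work, and hence the main obstacle, is the last step: pinning down the precise meaning of ``finite persistence module'' and checking, in whatever form that definition takes, that it matches finite generation of $\alpha(M)$ on the nose; once this dictionary is in place, Hilbert's basis theorem does the rest.
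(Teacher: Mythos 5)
The paper does not prove Theorem \ref{correspond}; it is quoted from \cite{theory_multid_persistence}, so there is nothing internal to compare against. Your construction of $\beta$, the verification that $\alpha$ and $\beta$ are strictly inverse on the unrestricted categories, and the appeal to Hilbert's basis theorem to identify finite presentation with finite generation over $k[x_1,\dots,x_n]$ are all correct and are the standard route.

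The genuine gap is the one you flag at the end, and it is more than unwinding definitions. In \cite{theory_multid_persistence}, a persistence module is \emph{finite} (of finite type) when (i) each $M_\vv$ is a finitely generated $k$-module and (ii) there exists $\mm \in \N^n$ such that $\phi_{\uu,\vv}$ is an isomorphism for all $\mm \le \uu \le \vv$. Your paraphrase --- that each $M_\vv$ be spanned by images of finitely many fixed elements under the structure maps --- is literally the statement that $\alpha(M)$ is finitely generated, which would make the theorem circular. What actually needs proving is that finite generation of $\alpha(M)$ forces the eventual-isomorphism condition (ii). The argument: choose a finite graded presentation $F_1 \to F_0 \to \alpha(M) \to 0$ with $F_0$, $F_1$ finite direct sums of grading-shifted copies of $k[x_1,\dots,x_n]$, and take $\mm$ dominating all the shifts. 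For $\mm \le \uu \le \vv$, each graded piece of each shifted free summand in that range is one-dimensional and multiplication by $\xx^{\vv-\uu}$ carries it isomorphically forward, so the induced maps $(F_j)_\uu \to (F_j)_\vv$ are isomorphisms; a short diagram chase on the two presentation sequences then shows $M_\uu \to M_\vv$ is an isomorphism. The converse (finite type implies finitely generated) is the direction you did sketch. With this equivalence supplied, the rest of your argument closes the proof.
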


Theorem \ref{correspond} allows us to interpret persistence modules indexed by $\N^n$ as finitely presented graded $k[x_1, ..., x_n]$-modules. This correspondence allows us to study persistence modules using the well-developed theory of graded modules. 

\begin{definition}
Define $\calM(k,n)$ to be the category of all finite persistence $k$-modules indexed by $\N^n$. For an element $M \in \calM(k,n)$, we will often consider $M$ simultaneously as a persistence module and as a finitely presented $n$-graded $k[x_1, ..., x_n]$-module.
\end{definition}

Furthermore, it should be noted that we will usually focus on the case $k = \R$.

\subsection*{Algebraic geometry:}

In general, for a scheme $X$, we denote the ring of global sections of $X$ by $A[X]$. Let $\A_n^m$ denote the affine $2mn$ $\R$-space \[\A_n^m = \Spec\left(\R[x_{ij},\, y_{ij}]_{1 \leq i \leq m,\, 1 \leq j \leq n}\right).\] We consider $\A_n^m$ as having coordinates $(\xx_1, \yy_1, ..., \xx_m, \yy_m)$. The symmetric group $S_m$ acts on $A[\A_n^m]$ by simultaneously permuting the $\xx_i$ and $\yy_i$. Denote by $A[\A_n^m]^{S_m}$ the elements of $A[\A_n^m]$ invariant under this $S_m$-action.

\section{The Ring of $K$-finite Algebraic Functions on Persistent Cubes}\label{the-ring}

In this section, we calculate the ring of $K$-finite algebraic functions on a simple and easily studied subset $\calR(k,n) \subseteq \calM(k,n)$. The definition of $\calR(k,n)$, given below, is very easily parameterized, and hence is easily analyzed from the standpoint of algebra and algebraic geometry.  This section extends a result of \cite{algebraic_functions}.

\subsection{Defining $\widetilde{Sp}$}

Let $\calR^m(k,n) \subseteq \calM(k,n)$ consist of all multidimensional persistence modules isomorphic to those of the form \[\bigoplus_{i=1}^{m^\prime} \left(\frac{k[x_1, ..., x_n]}{\left(x_1^{d_{i1}},\cdots, x_n^{d_{in}}\right)}\right)_{\vv_i},\] where $m^{\prime} \leq m$, $\vv_i$ represents a grading and $0 < d_{ij} < \infty$. Define $\calR(k,n) \subseteq \calM(k,n)$ by \[\calR(k,n) = \bigcup_{m=1}^\infty \calR^m(k,n).\] Each summand of an element of $\calR(k,n)$ can be represented by an element in $\N^{2n}$: $(v_{i1}, ..., v_{in}, v_{i1} + d_{i1}, ..., v_{in} + d_{in})$. The first $n$ coordinates can be viewed as when the summand is ``born", and the last $n$ coordinates represent when the summand ``dies." Note additionally that the ordering of each summand within the direct sum decomposition given above is irrelevant. We now formulate this intuition algebraically.

\begin{definition}
Let $\calJ$ denote any object in a category. Define $Sp^m(\calJ)$ to be the colimit, if it exists, of the diagram
\begin{center} \includegraphics{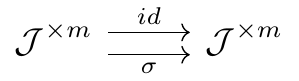}, \end{center}
where $\sigma$ varies over all elements in the symmetric group $S_m$.
\end{definition}

We have natural maps \[+ : Sp^m(\calJ) \times Sp^{m^\prime}(\calJ) \to Sp^{m+m^\prime}(\calJ).\] Fixing some basepoint $j_0 : * \to \calJ$, we have natural inclusions $\iota_m : Sp^{m}(\calJ) \into Sp^{m+1}(\calJ)$, where $\iota_m$ is defined as the composite \[Sp^{m}(\calJ) \longrightarrow Sp^{m}(\calJ) \times * \xlongrightarrow{id \otimes j_0} Sp^{m}(\calJ) \times Sp^{1}(\calJ) \xlongrightarrow{+} Sp^{m+1}(\calJ).\] Define \[Sp^\infty(\calJ) = \varinjlim_{m} Sp^m(\calJ).\] Furthermore, we have canonical inclusions $Sp^{m}(\calJ) \into Sp^{\infty}(\calJ)$. Moreover, $Sp^{\infty}(\calJ)$ is a commutative monoid generated by $Sp^1(\calJ)$ with monoid operation given by $+$.

If $\calJ$ is a scheme, then the inclusions $\iota_m$ induce maps \[\iota_m^* : A[Sp^{m+1}(\calJ)] \to A[Sp^{m}(\calJ)].\] In this case, we also have \[A[Sp^\infty (\calJ)] = \varprojlim_m A[Sp^m (\calJ)].\] We fix as the basepoint of $Sp^1(\A_n^1)$ the map \[j_0^* : * \rightarrow \A_n^1 = Sp^1(\A_n^1)\] defined as the dual of the ``evaluation at $0$" map \[j_0 : \R[\xx, \yy] \rightarrow \R \qquad \textup{defined by} \qquad x_i \mapsto 0, \quad y_i \mapsto 0.\] 

Letting $J(n) = \{ (\xx, \yy) \in \Z^n \times \Z^n \mid \xx < \yy \} \cup \{(\zero,\zero)\}$, we see that our former intuition about $\calR^m(k,n)$ translates into the following statement.

\begin{lemma}
There is a set-isomorphism between $\calR^m(k,n)$ and $Sp^m(J(n))$:
\[
\bigoplus_{i=1}^m \left(\frac{k[x_1, ..., x_n]}{\left(x_1^{d_{i1}},\cdots, x_n^{d_{in}}\right)}\right)_{\vv_i} \xlongrightarrow{\cong} \left\{(v_{i1}, ..., v_{in}, v_{i1} + d_{i1}, ..., v_{in} + d_{in}) \right\}_{1 \leq i \leq m}.
\]
This set-isomorphism extends to a set-isomorphism between $\calR(k,n)$ and $Sp^{\infty}(J(n))$.
\end{lemma}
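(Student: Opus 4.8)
The plan is to make $Sp^m(J(n))$ explicit, write down a manifestly invertible correspondence with $\calR^m(k,n)$, and reduce the whole statement to a Krull--Schmidt uniqueness assertion. Since $J(n)$ is merely a set, $Sp^m(J(n))$ is the colimit in $\mathbf{Set}$ of the $m$-fold self-product $J(n)^{\times m}$ with its permutation $S_m$-action, i.e.\ the quotient $J(n)^{\times m}/S_m$ --- the set of size-$m$ multisets of elements of $J(n)$. An element of $J(n)$ is either the basepoint $(\zero,\zero)$ or a pair $(\xx,\yy)$ with $\xx<\yy$, and the latter is the same datum as a pair $(\vv,\dd)=(\xx,\yy-\xx)$ with every $d_j>0$; write $N_{\vv,\dd}$ for the module $\bigl(k[x_1,\dots,x_n]/(x_1^{d_1},\dots,x_n^{d_n})\bigr)_{\vv}$, so that $N_{\zero,\zero}=0$. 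I would then define: a multiset $\{(\xx_i,\yy_i)\}_{i=1}^{m}$ in $Sp^m(J(n))$ goes to $\bigoplus_{i\,:\,(\xx_i,\yy_i)\neq(\zero,\zero)}N_{\xx_i,\,\yy_i-\xx_i}$, which lies in $\calR^m(k,n)$ since it has at most $m$ nonzero summands; and a module $M\in\calR^m(k,n)$, which by definition is isomorphic to some $\bigoplus_{i=1}^{m'}N_{\vv_i,\dd_i}$ with $m'\le m$, goes to $\{(\vv_i,\vv_i+\dd_i)\}_{i=1}^{m'}$ padded with $m-m'$ copies of $(\zero,\zero)$. This second assignment is the map displayed in the statement, and the two assignments are visibly mutually inverse \emph{provided} the second is well defined on isomorphism classes (which is the sense in which $\calR^m(k,n)$ is to be read). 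So the entire lemma reduces to the claim that the multiset of ``bars'' $\{(\vv_i,\vv_i+\dd_i)\}$ attached to a module in $\calR^m(k,n)$ depends only on its isomorphism type.

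For this --- the only real content --- I would argue by Krull--Schmidt. Each $N_{\vv,\dd}$ with $\dd>0$ has $k$-dimension $\prod_j d_j<\infty$, so every $M\in\calR^m(k,n)$ has finite length as a $k[x_1,\dots,x_n]$-module; hence every decomposition of such an $M$ into indecomposables has finitely many summands and is unique up to permutation and isomorphism. Moreover $N_{\vv,\dd}$ is itself indecomposable: it is cyclic with generator in its (coordinatewise) smallest degree $\vv$, so a grading-preserving endomorphism sends this generator to a scalar multiple of itself, whence $\mathrm{End}_{\calM(k,n)}(N_{\vv,\dd})=k$, which is local. Therefore an isomorphism $\bigoplus_{i=1}^{m'}N_{\vv_i,\dd_i}\cong\bigoplus_{j=1}^{m''}N_{\vv'_j,\dd'_j}$ forces $m'=m''$ together with a permutation matching the summands up to isomorphism, and one finishes by noting that $N_{\vv,\dd}\cong N_{\vv',\dd'}$ implies $(\vv,\dd)=(\vv',\dd')$: a grading-preserving isomorphism identifies the smallest degree in which the module is nonzero, giving $\vv=\vv'$, and then identifies the annihilator $(x_1^{d_1},\dots,x_n^{d_n})$ of the generator, giving $\dd=\dd'$. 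This yields well-definedness, hence (with surjectivity, immediate from the first assignment) the bijection $\calR^m(k,n)\cong Sp^m(J(n))$.

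For the last sentence of the lemma, I would observe that $\iota_m\colon Sp^m(J(n))\into Sp^{m+1}(J(n))$ adjoins one copy of the basepoint $(\zero,\zero)$ to a multiset, which under the bijection just established corresponds to adjoining a summand $N_{\zero,\zero}=0$ --- that is, to the inclusion $\calR^m(k,n)\subseteq\calR^{m+1}(k,n)$. Hence the bijections $\calR^m(k,n)\cong Sp^m(J(n))$ commute with the structure maps of both directed systems and therefore pass to the colimit, giving $\calR(k,n)=\bigcup_m\calR^m(k,n)\cong\varinjlim_m Sp^m(J(n))=Sp^\infty(J(n))$. I expect the genuine obstacle to be the isomorphism-invariance of the bars, i.e.\ the Krull--Schmidt argument; the subtlety there is that the graded Hilbert function does \emph{not} suffice --- for instance $k[x]/(x^3)\oplus\bigl(k[x]/(x^3)\bigr)_{1}$ and $k[x]/(x^4)\oplus\bigl(k[x]/(x^2)\bigr)_{1}$ have identical graded dimensions but different bars --- so the argument must use the module structure and reason with decompositions rather than with dimension counts.
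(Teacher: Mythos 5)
Your proof is correct. The paper states this lemma without a proof, treating it as a direct formalization of the preceding ``birth/death coordinates'' intuition, so there is no paper argument to compare against; what you have done is supply exactly the content that the paper leaves implicit. You correctly unwind $Sp^m(J(n))$ as the set of size-$m$ multisets in $J(n)$, set up the two visibly inverse assignments with the basepoint $(\zero,\zero)$ playing the role of the zero summand, and --- most importantly --- isolate the one genuinely nontrivial point, namely that the multiset of birth/death pairs is an isomorphism invariant of a module in $\calR^m(k,n)$. Your Krull--Schmidt argument for this is sound: each $N_{\vv,\dd}$ has finite $k$-dimension, so $M\in\calR^m(k,n)$ has finite length in the abelian category of $\N^n$-graded modules; the graded endomorphism ring of $N_{\vv,\dd}$ is $k$ (hence local) because a degree-preserving endomorphism is determined by a scalar action on the one-dimensional degree-$\vv$ piece, which generates; and $N_{\vv,\dd}\cong N_{\vv',\dd'}$ forces $(\vv,\dd)=(\vv',\dd')$ by comparing supports (or, equivalently, the degree of the generator together with its annihilator). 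The passage to $\calR(k,n)\cong Sp^\infty(J(n))$ via compatibility with the structure maps $\iota_m$ is correct. Your remark that graded dimension counts alone cannot distinguish modules with the same Hilbert function but different bars is a good sanity check and explains why the Krull--Schmidt step is unavoidable; this is precisely the level of care the lemma deserves even though the paper itself elides it.
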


We now discuss how we might approach our study of $Sp^m(J(n))$ from an algebraic geometric viewpoint.

\begin{lemma} \label{sp_lemma}
There is a set-isomorphism between $Sp^m(J(n))$ and a well-chosen subset of the $\R$-points of $Sp^m(\A_n^1)$. This set-isomorphism extends to a set-isomorphism between $Sp^{\infty}(J(n))$ and a well-chosen subset of the $\R$-points of $Sp^{\infty}(\A_n^1)$.
\end{lemma}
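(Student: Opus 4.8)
The plan is to exhibit $J(n)$ as a well-chosen subset of the $\R$-points of $\A_n^1$, and then argue that the symmetric-product construction $Sp^m$ transports such a set-level inclusion to a set-level inclusion on symmetric products. Recall $\A_n^1 = \Spec\left(\R[x_1,\dots,x_n,y_1,\dots,y_n]\right)$, so its $\R$-points are naturally identified with $\R^n \times \R^n$. Under this identification, $J(n) = \{(\xx,\yy) \in \Z^n\times\Z^n \mid \xx < \yy\} \cup \{(\zero,\zero)\}$ is literally a subset of the $\R$-points of $\A_n^1$ — it is cut out by the integrality conditions $x_j, y_j \in \Z$ together with the semi-algebraic open conditions $x_j < y_j$ (for all $j$), plus the single extra point $(\zero,\zero)$, which is exactly the image of the chosen basepoint $j_0^*$. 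So the first step is just to record this inclusion $J(n) \into (\A_n^1)(\R)$ and to check it is compatible with basepoints: $(\zero,\zero) \in J(n)$ maps to the basepoint $j_0^*$ of $Sp^1(\A_n^1) = \A_n^1$ defined above.

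Next I would pass to symmetric products. The key observation is that $Sp^m$, as a colimit over the $S_m$-action, is functorial for maps in the ambient category, and that for an injection of sets the induced map on $m$-fold symmetric products is again an injection. Concretely, $Sp^m(J(n))$ is the quotient of $J(n)^{\times m}$ by the $S_m$-action permuting coordinates (i.e., unordered $m$-tuples, "sets with repetition" in the paper's convention), and likewise a subset of $Sp^m(\A_n^1)(\R)$ is obtained as the $S_m$-quotient of a subset of $\left((\A_n^1)(\R)\right)^{\times m}$. An injection $J(n) \into (\A_n^1)(\R)$ induces an injection $J(n)^{\times m} \into \left((\A_n^1)(\R)\right)^{\times m}$ which is $S_m$-equivariant, hence descends to an injection on $S_m$-quotients; its image is the "well-chosen subset" — the set of unordered $m$-tuples all of whose entries lie in the image of $J(n)$. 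I would then note that the basepoint compatibility from the first step makes the inclusions $\iota_m$ on the two sides compatible, so passing to the colimit over $m$ yields the stated extension to $Sp^\infty(J(n)) \into Sp^\infty(\A_n^1)(\R)$.

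The one genuinely delicate point — and the step I expect to be the main obstacle — is the interaction between the scheme-theoretic symmetric product $Sp^m(\A_n^1)$ (defined as a colimit in the category of schemes, i.e.\ the quotient $\Spec\left(A[\A_n^m]^{S_m}\right)$) and its $\R$-points: one must verify that the $\R$-points of this quotient scheme really are the $S_m$-orbits of $\R$-points of $\A_n^m$, so that the set-theoretic description above is faithful. For affine space over $\R$ with the permutation action this is standard (the quotient is again affine space, the GIT quotient of $\R^{2n}$ by the finite group $S_m$ separates orbits, and since $\R$ is not algebraically closed one must be slightly careful but the statement still holds because every $S_m$-orbit of $\R$-points maps to a distinct $\R$-point of the quotient). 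I would isolate this as a short lemma or citation and otherwise keep the argument purely set-theoretic, since the statement we are proving is only about set-isomorphisms. The remaining verifications — that $\xx < \yy$ and the integrality cut out an honest subset, and that the construction commutes with the colimit defining $Sp^\infty$ — are routine.
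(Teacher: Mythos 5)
Your approach is essentially the same as the paper's: identify $\R^n\times\R^n$ with $(\A_n^1)(\R)$, note $J(n)$ is literally a subset, symmetrize, and take the colimit over $m$ for the $Sp^\infty$ statement. The paper phrases the ``well-chosen subset'' as the $\R$-points of $Sp^m(\A_n^1)$ that factor through $\Spec(\R[\xx_i,\yy_i])$, which is another way of saying ``those in the image of $S_m$-orbits of honest $\R$-points,'' i.e.\ exactly the set you describe.

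Two small cautions, neither fatal. First, you write that one must verify the $\R$-points of $Sp^m(\A_n^1)$ ``really are'' the $S_m$-orbits of $\R$-points of $\A_n^m$; that would be false (this is precisely the content of the paper's Remark~\ref{sp_remark}, with its explicit example of an $\R$-point of $Sp^2(\A_1^1)$ not coming from any real orbit). What you actually need — and what you correctly retreat to in the parenthetical — is only \emph{injectivity} of orbits into $\R$-points of the quotient, which does hold for a finite constant group over any field. Second, the claim that ``the quotient is again affine space'' is incorrect for $m\geq 2$: $A[Sp^m(\A_n^1)]$ is the ring of multi-symmetric polynomials, which by Lemma~\ref{multi_symmetric_lemma} is generated by power sums \emph{with relations}, so it is not a polynomial ring. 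You also wrote $\R^{2n}$ where you mean $(\R^{2n})^m$. Neither error is used in the part of the argument that matters — the orbit-separation step — so the proof as a whole stands, but these statements should be corrected if this were to be written out in full.
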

\begin{proof}
By symmetrizing the identification of $\R^n \times \R^n$ with the set of $\R$-points of $\A_n^1$, we may identify $Sp^m(\R^n \times \R^n)$ with a subset of the $\R$-points of $Sp^m(\A_n^1)$. In particular, we identify $Sp^m(\R^n \times \R^n)$ with the subset of $\R$-points of $Sp^m(\A_n^1)$ which factor through $\Spec(\R[\xx_i, \yy_i])$ (where the map $\Spec(\R[\xx_i, \yy_i]) \to Sp^m(\A_n^1)$ is induced by the canonical inclusion of the ring of multi-symmetric polynomials into its ambient polynomial ring). Hence, \[Sp^m(J(n)) \subseteq Sp^m(\R^n \times \R^n) \subseteq \left(\R\textup{-points of }Sp^m(\A_n^1)\right).\]
\end{proof}

\begin{remark} \label{sp_remark}
We emphasize that $Sp^m(\R^n \times \R^n)$ is identified with a \emph{subset} of the $\R$-points of $Sp^m(\A_n^1)$. The fact that we work over the non-algebraically closed field $\R$ is crucial here. For example, in the case $m=2$ and $n=1$, the $\R$-point of $Sp^2(\A_1^1)$ induced by the homomorphism $A[Sp^2(\A_1^1)] \to \R$ defined by \[y_1 \mapsto 0 \qquad \quad y_2 \mapsto 0 \qquad \quad x_1 + x_2 \mapsto 0 \qquad \quad x_1^2 + x_2^2 \mapsto -1\] is not identified with any element of $Sp^m(\R^n \times \R^n)$.
\end{remark}

The previous two lemmas combine to yield the following:

\begin{corollary} \label{rect_to_sp}
There is a set-isomorphism between $\calR(k,n)$ and the subset of the $\R$-points of $Sp^{\infty}(\A_n^1)$ induced by homomorphisms \[\varphi : \R[\xx_i, \yy_i]_{1 \leq i \leq m} \to \R\] such that $(\varphi(\xx_i), \varphi(\yy_i)) \in J(n)$ for all $i$.
\end{corollary}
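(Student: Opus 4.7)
The plan is to chain together the two preceding lemmas and then identify explicitly, at the level of ring homomorphisms, the image of $Sp^{\infty}(J(n))$ inside the $\R$-points of $Sp^{\infty}(\A_n^1)$. The preceding lemma provides a set-isomorphism $\calR(k,n) \cong Sp^{\infty}(J(n))$, and Lemma~\ref{sp_lemma} embeds $Sp^{\infty}(J(n))$ into the $\R$-points of $Sp^{\infty}(\A_n^1)$; composing these exhibits $\calR(k,n)$ as a subset of the $\R$-points, so what is left is simply to recognize this subset as the one described in the statement.

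I would work at a fixed level $m$ first and then pass to the colimit. As observed in the proof of Lemma~\ref{sp_lemma}, an $\R$-point of $Sp^m(\A_n^1)$ lies in $Sp^m(\R^n \times \R^n)$ precisely when it factors through $\Spec(\R[\xx_i,\yy_i]_{1 \leq i \leq m})$ via the canonical inclusion of multi-symmetric polynomials into the ambient polynomial ring. Such a factorization corresponds to a ring homomorphism $\varphi : \R[\xx_i,\yy_i] \to \R$ modulo the $S_m$-action that simultaneously permutes the $(\xx_i,\yy_i)$, and the associated element of $Sp^m(\R^n \times \R^n)$ is the unordered multiset $\{(\varphi(\xx_i),\varphi(\yy_i))\}_{1 \leq i \leq m}$. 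Inside this set, the elements arising from $Sp^m(J(n))$ are exactly those whose underlying multiset lies in $J(n)^m$, which is precisely the condition $(\varphi(\xx_i),\varphi(\yy_i)) \in J(n)$ for every $i$. Combined with the preceding lemma, this identifies the image of $\calR^m(k,n)$ as claimed at finite level.

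The passage to $m = \infty$ is bookkeeping: the basepoint $j_0^*$ is dual to evaluation at $0$ and therefore corresponds to $(\zero,\zero) \in J(n)$, which is precisely why that element was adjoined in the definition of $J(n)$, so the stabilization maps $\iota_m$ are compatible on both sides of the correspondence and the description persists in the colimit. The only subtlety, and the point that keeps the statement from being vacuous, is the one emphasized in Remark~\ref{sp_remark}: not every $\R$-point of $Sp^m(\A_n^1)$ factors through $\Spec(\R[\xx_i,\yy_i])$, so the corollary genuinely carves out a proper subset of the $\R$-points rather than describing them all. Once this factoring restriction is imposed, no further obstacles arise; the argument is essentially a translation of the unordered-multiset picture of $Sp^m$ into the language of ring homomorphisms.
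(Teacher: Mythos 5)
Your proposal is correct and follows the same route as the paper, which presents this corollary without a written proof, simply as the combination of the two preceding lemmas (the set-isomorphism $\calR^m(k,n) \cong Sp^m(J(n))$ and the identification of $Sp^m(\R^n\times\R^n)$ with the $\R$-points factoring through $\Spec(\R[\xx_i,\yy_i])$). Your write-up just makes explicit the bookkeeping the paper leaves implicit, including the role of the basepoint $(\zero,\zero)$ and the caveat of Remark~\ref{sp_remark}.
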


\begin{remark}
In a real (rather than integral) formulation of multidimensional persistence, one would instead define $J(n) = \{ (\xx, \yy) \in \R^n \times \R^n \mid \xx < \yy \} \cup \{(\zero, \zero)\}$. In either case, we view $J(n)$ as a subset of $\R^n \times \R^n$; all results discussed above are thus valid in either setting.
\end{remark}

In the definition of $\calR(k,n)$ given above, we required that the $d_{ij}$ be strictly positive. That is, we require strict inequality in our definition of $J(n)$. This inequality is lost if we work with $\R^n \times \R^n$ rather than with $J(n)$. Nevertheless, we still wish to encode algebraically that we wish to disregard any summand of any element of $Sp^{\infty}(\R^n \times \R^n)$ of the form $(\xx, \yy)$ where one coordinate $x_i$ of $\xx$ is equal to the corresponding coordinate $y_i$ of $\yy$. Put differently, we intuitively think of $(\xx, \yy)$ as the opposite vertices of a cube, and we wish to disregard any cubes of volume $0$.

To this end, define \[\widetilde{Sp}(\R^n \times \R^n) = \frac{\coprod_{m} Sp^m(\R^n \times \R^n)}{\simeq},\] where $\simeq$ is the equivalence relation generated by all relations of the form \[\{(\xx_1, \yy_1), (\xx_2, \yy_2), ..., (\xx_m, \yy_m), (\zz, \zz^\prime) \} \simeq \{(\xx_1, \yy_1), (\xx_2, \yy_2), ..., (\xx_m, \yy_m)\},\] where one of the coordinates of $\zz$ equals one of the coordinates of $\zz^\prime$.

We now generalize our definition of $\widetilde{Sp}(\R^n \times \R^n)$ to the algebraic geometric setting to define $\widetilde{Sp}(\A_n^1)$. We define $\widetilde{Sp}(\A_n^1)$ to be the colimit of the diagram
\begin{center} \includegraphics{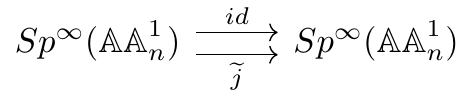}, \end{center}
where $\widetilde{j}$ runs over all maps induced on $Sp^{\infty}(\A_n^1)$ by maps of the form \[Sp^m(\A_n^1) \longrightarrow Sp^{m}(\A_n^1) \times * \xlongrightarrow{id \otimes j^*} Sp^{m}(\A_n^1) \times Sp^{1}(\A_n^1) \xlongrightarrow{+} Sp^{m+1}(\A_n^1),\] where $j^*$ is the dual of a map $j : \R[\xx, \yy] \to \R$ such that $j(x_i) = j(y_i)$ for some $i$.

Since $\widetilde{Sp}(\A_n^1)$ is a quotient of $Sp^\infty(\A_n^1)$, we have that \[A\left[\widetilde{Sp}(\A_n^1)\right] \subseteq A\left[Sp^\infty(\A_n^1)\right].\] It is our goal to investigate $A\left[\widetilde{Sp}(\A_n^1)\right]$ and gain insight into its structure. Let us say, perhaps somewhat preemptively, that the reader may identify $\widetilde{Sp}(\R^n \times \R^n)$ with the ``finite" $\R$-points of $\widetilde{Sp}(\A_n^1)$ -- those which can be induced by homomorphisms \[\R[\xx_i, \yy_i]_{1 \leq i \leq m} \to \R.\] The next section will justify this identification. For $X \in Sp^{\infty}(\R^n \times \R^n)$, we denote the associated $\R$-point of $Sp^{\infty}(\A_n^1)$ (or $\widetilde{Sp}(\A_n^1)$) by $\varphi_X^*$.

\subsection{Multisymmetric and Invariant Polynomials} \label{multisymmetric_section}

We review some facts about $A[Sp^\infty (\A_n^1)]$ from \cite{multi_symmetric}.

\begin{lemma}[\cite{multi_symmetric}]\label{multi_symmetric_lemma} 
$Sp^m(\A_n^1)$ and $Sp^{\infty}(\A_n^1)$ are affine schemes. Furthermore, $A[Sp^m(\A_n^1)] = A[\A_n^m]^{S_m}$ is generated (with relations) as an $\R$-algbera by the multi-symmetric power sums \[p_{\aa, \bb, m} = \sum_{i=1}^m x_{i1}^{a_1} x_{i2}^{a_{2}} \cdots x_{in}^{a_{n}}y_{i1}^{b_1} y_{i2}^{b_{2}} \cdots y_{in}^{b_{n}},\] where $a_j, b_j \in \N$. Additionally, $A\left[Sp^\infty(\A_n^1)\right]$ is the subset of the ring of power series $\R\llbracket p_{\aa, \bb}\rrbracket$, where \[p_{\aa, \bb} = \sum_{i=1}^\infty x_{i1}^{a_1} x_{i2}^{a_{2}} \cdots x_{in}^{a_{n}}y_{i1}^{b_1} y_{i2}^{b_{2}} \cdots y_{in}^{b_{n}},\] consisting of power series $p$ with the property that, given any finite subset $S$ of $\{p_{\aa, \bb}\}$, there are finitely many terms in $p$ involving only elements of $S$. In particular, there are no relations among the $p_{\aa, \bb}$ in $A\left[Sp^\infty(\A_n^1)\right]$.
\end{lemma}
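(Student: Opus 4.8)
The plan is to assemble the lemma from three standard ingredients: a finite-group-quotient argument for the affineness and the invariant description of $A[Sp^m(\A_n^1)]$, the classical structure theory of multisymmetric functions in characteristic zero for the power-sum generators, and a degreewise passage to the limit for the $m=\infty$ statement. First I would treat $Sp^m(\A_n^1)$. Applied to $\calJ = \A_n^1 = \Spec \R[x_{1j}, y_{1j}]$, the diagram defining $Sp^m(\calJ)$ is exactly the diagram recording the $S_m$-action on $\A_n^m$ permuting the index $i$, so its colimit, when it exists, is the categorical quotient of $\A_n^m$ by $S_m$. Since $S_m$ is finite and $\A_n^m$ is affine, Emmy Noether's finiteness theorem gives that $\R[x_{ij},y_{ij}]^{S_m}$ is a finitely generated $\R$-algebra and that $\Spec$ of it is this categorical quotient in the category of schemes; hence $Sp^m(\A_n^1)$ exists, is affine, and $A[Sp^m(\A_n^1)] = A[\A_n^m]^{S_m}$. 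Beyond the classical statement one only needs that this categorical quotient in affine schemes remains a colimit among all schemes, which is routine since the structure maps $\iota_m$ are affine.

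Next, that the multisymmetric power sums $p_{\aa,\bb,m}$ generate $A[\A_n^m]^{S_m}$ as an $\R$-algebra is the ``several vectors'' version of the fundamental theorem of symmetric functions, and in characteristic zero it follows by polarization. Writing $\mathbf v_i = (x_{i1},\dots,x_{in},y_{i1},\dots,y_{in})$ for the $i$-th block of variables, the power sums $\sum_{i=1}^m \ell(\mathbf v_i)^k$ in an arbitrary linear form $\ell$ generate the invariants by Newton's identities applied to the scalars $\ell(\mathbf v_i)$, and polarizing in $\ell$ rewrites each of these as a polynomial in the $p_{\aa,\bb,m}$. This argument, the sharpening that one may restrict to $|\aa|+|\bb|\le m$, and the catalogue of relations among the generators at a fixed $m$, are exactly what I would import from \cite{multi_symmetric}; for what follows we need only that the $p_{\aa,\bb,m}$ generate and that in each total degree the module of relations is finitely generated.

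Finally, for $Sp^\infty(\A_n^1)$ I would use the identity $A[Sp^\infty(\A_n^1)] = \varprojlim_m A[Sp^m(\A_n^1)]$ already noted above, which also exhibits $Sp^\infty(\A_n^1)$ as the affine scheme $\Spec(\varprojlim_m A[Sp^m(\A_n^1)])$; the transition maps are the $\iota_m^*$, and since $\iota_m$ adjoins a point at the origin, $\iota_m^*$ sends $p_{\aa,\bb,m+1}$ to $p_{\aa,\bb,m}$ for $(\aa,\bb)\neq(\zero,\zero)$. Each $A[\A_n^m]^{S_m}$ is graded by total degree with finite-dimensional graded pieces and with every generator $p_{\aa,\bb,m}$ in positive degree; consequently an element of the inverse limit is a compatible family $(f_m)$ whose coefficient on each fixed monomial in the $p$'s is eventually constant, and recording these stable coefficients embeds $A[Sp^\infty(\A_n^1)]$ into $\R\llbracket p_{\aa,\bb}\rrbracket$. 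There are no relations in the limit because a polynomial relation among finitely many $p_{\aa,\bb}$ would hold at every finite level $m$, contradicting the algebraic independence of the corresponding power sums for $m$ large (again from \cite{multi_symmetric}). To pin down the image, one checks that a formal power series $p$ arises from a compatible family exactly when, for every $m$, the substitution $p_{\aa,\bb}\mapsto p_{\aa,\bb,m}$ followed by reduction modulo the level-$m$ relations returns an honest polynomial; since at level $m$ only the finitely many power sums with $|\aa|+|\bb|\le m$ are needed as generators, this is equivalent to the stated finiteness condition, namely that for every finite $S\subseteq\{p_{\aa,\bb}\}$ only finitely many terms of $p$ involve solely elements of $S$.

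The main obstacle is this last identification of the image: converting ``every level-$m$ truncation is a genuine polynomial'' into the combinatorial statement about monomials is where the fine structure of the multisymmetric relations — which power sums become superfluous at level $m$, and exactly how — is used, and I would lean on \cite{multi_symmetric} for that bookkeeping rather than redo it. The first two steps are routine and would be dispatched with standard references.
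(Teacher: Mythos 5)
The paper does not prove Lemma~\ref{multi_symmetric_lemma}; it is stated with a citation to \cite{multi_symmetric}, and no argument is given. So there is no proof in the paper to compare against. That said, a few remarks on your sketch.

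Your first two steps are standard and essentially correct: $Sp^m(\A_n^1)$ is the affine quotient $\A_n^m /\!/ S_m$, Noether's theorem gives that $A[\A_n^m]^{S_m}$ is a finitely generated $\R$-algebra, and in characteristic zero the multisymmetric power sums generate. One small imprecision: the claim that the one-variable power sums $\sum_i \ell(\vv_i)^k$ generate the invariants is itself the content of the polarization theorem (Weyl), not a consequence of Newton's identities; Newton's identities convert between power sums and elementary symmetric functions of the fixed scalar sequence $\ell(\vv_1),\dots,\ell(\vv_m)$, which is a different reduction. The step from $\sum_i \ell(\vv_i)^k$ to the $p_{\aa,\bb,m}$ is also not polarization but simply multinomial expansion in the coefficients of $\ell$. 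The underlying mathematics is fine; the names of the tools are shuffled.

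The passage to $m=\infty$ is where the real content is, and your sketch correctly locates the right criterion: a formal series $p$ lies in $\varprojlim_m A[Sp^m(\A_n^1)]$ precisely when, for every $m$, the specialization $p_{\aa,\bb}\mapsto p_{\aa,\bb,m}$ produces an honest polynomial (equivalently, for each $m$ the graded pieces $p^{(d)}$ eventually vanish in $A[\A_n^m]^{S_m}$ as $d\to\infty$). Where your argument has a genuine gap is the final sentence claiming this is ``equivalent to the stated finiteness condition.'' That equivalence is neither proven nor obvious, and in fact the finiteness condition as written in the lemma appears too weak: the series $\sum_{k\ge 1} p_{(k,0,\dots,0),\zero}$ satisfies it (for each finite $S$ only finitely many of its monomials lie in the subalgebra generated by $S$), yet its specialization at any fixed level $m$ is $\sum_k\sum_i x_{i1}^k$, a power series of unbounded degree, hence not in $A[\A_n^m]^{S_m}$. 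You flag this identification as the ``main obstacle'' and defer to \cite{multi_symmetric}, which is a reasonable thing to do for a cited lemma, but as written your claimed equivalence would not hold up. Fortunately this does not affect the rest of the paper, which only uses the $K$-finite subring $A_{fin}[Sp^\infty(\A_n^1)]=\R[p_{\aa,\bb}]$, and for that the essential facts are exactly the generation statement plus asymptotic algebraic independence of the $p_{\aa,\bb,m}$, both of which your sketch does address. One more omission worth noting: identifying $A[Sp^\infty(\A_n^1)]$ with $\varprojlim A[Sp^m(\A_n^1)]$ does not by itself exhibit the colimit scheme as $\Spec$ of that ring — a filtered colimit of affines along closed immersions is in general only an ind-scheme — so the assertion that $Sp^\infty(\A_n^1)$ is affine also requires an argument you do not supply.
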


Unfortunately, the ring $A\left[Sp^\infty(\A_n^1)\right]$ is too infinite for our purposes. We wish to isolate a tractable yet still suitably large subring $A_{fin}\left[Sp^\infty(\A_n^1)\right]$ of $A\left[Sp^\infty(\A_n^1)\right]$ so that if we are given $X \in Sp^{\infty}(\R^n \times \R^n)$ and $f \in A_{fin}\left[Sp^\infty(\A_n^1)\right]$, the quantity $\varphi_X(f)$ is guaranteed to be finite.

To achieve this goal, it will be necessary to work in the category of graded rings. We give all variables $x_{ij}$ and $y_{ij}$ the grading $1$. Define \[A_{fin}\left[Sp^\infty(\A_n^1)\right] = \varprojlim_{m} A\left[Sp^m(\A_n^1)\right],\] where this inverse limit is taken in the category of graded rings. If we forget the gradings, we see that $A_{fin}\left[Sp^\infty(\A_n^1)\right] \subseteq A\left[Sp^\infty(\A_n^1)\right]$ is simply the polynomial ring $\R\left[p_{\aa, \bb}\right]$ (because $A\left[Sp^\infty(\A_n^1)\right]$ has only finitely many generators in each degree). Due to the interpretation of $A_{fin}\left[Sp^\infty(\A_n^1)\right]$ as an inverse limit in the category of graded rings, we call $A_{fin}\left[Sp^\infty(\A_n^1)\right]$ the $K$-finite elements of $A\left[Sp^\infty(\A_n^1)\right]$.

Our goal is to isolate the elements $f$ of $A_{fin}\left[Sp^\infty(\A_n^1)\right]$ which ``disregard" any cubes with volume $0$ in the following sense: for all $X_1, X_2 \in Sp^\infty(\R^n \times \R^n)$ such that $X_1$ and $X_2$ are identified in $\widetilde{Sp}(\R^n \times \R^n)$, we require that $\varphi_{X_1}(f) = \varphi_{X_2}(f)$ (where $\varphi_{X_1}^*$ and $\varphi_{X_2}^*$ are the $\R$-points of $\widetilde{Sp}(\A_n^1)$ associated with $X_1$ and $X_2$). We denote the collection of such $f$ by $A_{fin}\left[\widetilde{Sp}(\A_n^1)\right]$.

Let $W_{i} \subseteq (\R^n \times \R^n)^m$ be the closed subset of all points which satisfy the equation \[\prod_{k=1}^n y_{ik} - x_{ik} = 0.\] Note that $W_i$ is the union of $n$ hyperplanes in $\R^n \times \R^n$. To any point $X \in W_i$, we can associate an $\R$-point $\varphi_X^*$ of $\A_n^m$. Consider the subring $R_n^m$ of $A[\A_n^m]$ consisting of all $f \in A[\A_n^m]$ such that for all $i$ and for all $X_1, X_2 \in W_i$, $\varphi_{X_1}(f) = \varphi_{X_2}(f)$. Put differently, $R_n^m$ is the subring of $A[\A_n^m]$ consisting of all polynomials \[f \in \R[x_{ij},\ y_{ij}]_{1 \leq i \leq m,\ 1 \leq j \leq n}\] such that for all $i$ and $j$, the restriction of $f$ to $W_i$ is independent of $x_{ij}$ and $y_{ij}$. Let $R_n^\infty = \varprojlim_m R_n^m$ (where the inverse limit is taken in the category of graded rings).

\begin{proposition}\label{sp_inv_lim}
Taking inverse limits in the category of graded rings,
\begin{align*}
A_{fin}\left[\widetilde{Sp}(\A_n^1)\right] &= R_n^\infty \cap A_{fin}\left[Sp^\infty(\A_n^1)\right] \\
&= \left(\varprojlim_m R_n^m\right) \cap \left(\varprojlim_m A\left[Sp^m(\A_n^1)\right] \right) \\
&= \varprojlim_m \left( R_n^m \cap A\left[ Sp^m(\A_n^1) \right] \right).
\end{align*}
\end{proposition}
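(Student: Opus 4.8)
The plan is to prove the first equality directly --- it is the only one carrying genuine content --- and then to obtain the other two by unwinding definitions and by a routine property of inverse limits. Throughout I will use the concrete description of $A_{fin}\left[\widetilde{Sp}(\A_n^1)\right]$ recorded just before the Proposition, namely the set of $f \in A_{fin}\left[Sp^\infty(\A_n^1)\right]$ with $\varphi_{X_1}(f) = \varphi_{X_2}(f)$ whenever $X_1$ and $X_2$ become identified in $\widetilde{Sp}(\R^n \times \R^n)$, together with the identification $A_{fin}\left[Sp^\infty(\A_n^1)\right] = \varprojlim_m A\left[Sp^m(\A_n^1)\right]$, under which such an $f$ becomes a compatible family $(f_m)_m$ with $f_m \in A\left[Sp^m(\A_n^1)\right] = A[\A_n^m]^{S_m}$ and $\iota_m^*(f_{m+1}) = f_m$. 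Because the relation defining $\widetilde{Sp}(\R^n \times \R^n)$ is generated by the single move of deleting a pair $(\zz, \zz')$ in which some coordinate of $\zz$ equals the corresponding coordinate of $\zz'$, and because ``$\varphi_{X_1}(f) = \varphi_{X_2}(f)$'' is transitive, it suffices to compare an $m$-entry tuple $X_1$ with the $(m-1)$-entry tuple $X_2$ obtained from $X_1$ by deleting one such pair. Listing that pair last presents $X_1$, after a choice of ordering, as an $\R$-point of $\A_n^m$ lying in $W_m$, and a short computation gives $\varphi_{X_1}(f) = \varphi_{X_1}(f_m)$ while $\varphi_{X_2}(f) = \varphi_{X_2}(f_{m-1}) = \varphi_{\iota_{m-1}(X_2)}(f_m) = \varphi_{X_1^\prime}(f_m)$, where $X_1^\prime$ is $X_1$ with its last entry replaced by $(\zero, \zero)$, which also lies in $W_m$.

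For the inclusion $\supseteq$, I would argue: if $f \in R_n^\infty \cap A_{fin}\left[Sp^\infty(\A_n^1)\right]$ then each $f_m$ lies in $R_n^m$, so its restriction to $W_m$ is independent of the last pair of variables; since $X_1$ and $X_1^\prime$ are $\R$-points of $W_m$ agreeing off their last entry, $\varphi_{X_1}(f_m) = \varphi_{X_1^\prime}(f_m)$, and the identities above then give $\varphi_{X_1}(f) = \varphi_{X_2}(f)$. Hence $f$ respects collapsing, that is, $f \in A_{fin}\left[\widetilde{Sp}(\A_n^1)\right]$.

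For the inclusion $\subseteq$, suppose $f \in A_{fin}\left[Sp^\infty(\A_n^1)\right]$ respects collapsing; I must show $f_m \in R_n^m$ for all $m$, and by $S_m$-invariance of $f_m$ it is enough to treat the last index $i = m$. I would take $g$ to be $f_m$ with every $x_{mj}$ and $y_{mj}$ set to $0$, a polynomial in which the last pair of variables does not occur. For an arbitrary $\R$-point $X$ of $W_m$, applying the collapsing hypothesis to $X$ and to the tuple obtained by deleting the last entry of $X$ (together with the reduction above) gives $\varphi_X(f_m) = \varphi_{X^\prime}(f_m)$ with $X^\prime$ the point $X$ with last entry $(\zero, \zero)$, and $\varphi_{X^\prime}(f_m) = \varphi_X(g)$ by the choice of $g$; hence $f_m - g$ vanishes at every $\R$-point of $W_m$. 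Since $W_m$ is a finite union of affine subspaces over the infinite field $\R$, its $\R$-points are Zariski dense in it, so $f_m - g$ vanishes on all of $W_m$; thus the restriction of $f_m$ to $W_m$ agrees with $g$ and in particular is independent of the last pair of variables. Conjugating by the transposition interchanging $i$ and $m$ and using $f_m \in A[\A_n^m]^{S_m}$ transports this to every index $i \le m$, so $f_m \in R_n^m$; the $(f_m)_m$ remain compatible, so $f \in R_n^\infty$, whence $f \in R_n^\infty \cap A_{fin}\left[Sp^\infty(\A_n^1)\right]$.

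The second equality is immediate once one recalls $R_n^\infty = \varprojlim_m R_n^m$ and $A_{fin}\left[Sp^\infty(\A_n^1)\right] = \varprojlim_m A\left[Sp^m(\A_n^1)\right]$; I would just note that both are subsystems of $\left(A[\A_n^m]\right)_m$ equipped with the transition maps that set the last pair of variables to $0$ --- these restrict to the maps $\iota_m^*$ on the multi-symmetric subrings and carry $R_n^{m+1}$ into $R_n^m$ --- so the two limits literally live inside $\varprojlim_m A[\A_n^m]$ and may be intersected there. The third equality is the elementary fact that, for subsystems $(A_m)_m$ and $(B_m)_m$ of a fixed inverse system, a compatible sequence lies in $\varprojlim_m A_m \cap \varprojlim_m B_m$ precisely when each of its terms lies in $A_m \cap B_m$; this is checked degree by degree in the category of graded rings. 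The step I expect to be the genuine obstacle is the Zariski-density argument inside the inclusion $\subseteq$: since $\R$ is not algebraically closed --- the subtlety flagged in Remark \ref{sp_remark} --- one cannot invoke a Nullstellensatz and must argue density of the $\R$-points of $W_m$ directly, and one must check that ``$f_m|_{W_m}$ agrees with a polynomial free of the last pair of variables'' is exactly the defining condition for $R_n^m$ at the index $m$.
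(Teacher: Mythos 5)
The paper states this proposition without proof, so there is no in-text argument to compare against; your proof fills that gap correctly. The core steps --- reduce by transitivity to a single deletion, translate it into a comparison of $\varphi_{X_1}(f_m)$ with $\varphi_{X_1'}(f_m)$ for $X_1, X_1' \in W_m$ via the compatibility $f_{m-1} = \iota_{m-1}^*(f_m)$, and pass from vanishing of $f_m - g$ on the $\R$-points of $W_m$ to vanishing on $W_m$ by density --- are sound. In particular the density step, which you flag as the delicate point, works precisely because each component $W_{mk} = \{\eta_{mk} = 0\}$ is a linear subspace over the infinite field $\R$, so vanishing of $f_m - g$ on its $\R$-points forces $f_m - g \in (\eta_{mk})$ for every $k$; comparing remainders modulo $\eta_{mk}$ then gives $r_{mk} = g$, which is exactly the membership criterion for $R_n^m$ used in the proof of Proposition~\ref{diff_char}. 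One hypothesis you invoke for the second and third equalities but leave implicit, and which is worth recording, is that the ``set the last pair of variables to zero'' map $A[\A_n^{m+1}] \to A[\A_n^m]$ carries $R_n^{m+1}$ into $R_n^m$; this follows at once from the differential characterization of Proposition~\ref{diff_char}, since partial differentiation with respect to $\xi_{ij}$ or $\eta_{ij}$ (for $i \leq m$) commutes with setting the last pair to zero and the ideals $(\eta_{ik})$ pass to the corresponding ideals downstairs, and it is precisely what lets $(R_n^m)_m$ and $(A[Sp^m(\A_n^1)])_m$ sit as subsystems of the single ambient system $(A[\A_n^m])_m$, as your intersection-of-limits step requires.
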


\subsection{The Calculation}

In this section, we examine the structure and properties of $R_n^m$ and $A_{fin}\left[\widetilde{Sp}(\A_n^1)\right]$. It will be convenient now to change coordinates. Define \[\eta_i = \yy_i - \xx_i \ \ \textup{and} \ \ \xi_i = \yy_i + \xx_i. \] Note that \[A[\A_n^m] = \R[\eta_{ij},\ \xi_{ij}]_{1 \leq i \leq m,\ 1 \leq j \leq n}.\] 

\begin{proposition} \label{diff_char}
The ring $R_n^m$ is characterized algebraically as the subring of all $f \in A[\A_n^m]$ such that $\frac{\partial f}{\partial \xi_{ij}}\in (\eta_{ik})$ for all $i, j, k$, and such that $\frac{\partial f}{\partial \eta_{ij}}\in (\eta_{ik})$ for all $i, j, k$ with $j \neq k$.
\end{proposition}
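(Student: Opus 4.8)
The plan is to translate the geometric condition defining $R_n^m$ — that for each $i$, the restriction of $f$ to the hypersurface $W_i = \{\prod_{k} \eta_{ik} = 0\}$ is independent of $\xi_{ij}$ and $\eta_{ij}$ for all $j$ — into the stated differential condition by working one block of variables $i$ at a time, and within that block exploiting the fact that $W_i$ is a union of the $n$ coordinate hyperplanes $\{\eta_{ik} = 0\}$. Since the variables with different first index are disjoint, I would fix $i$ throughout and simply ask: for which polynomials $f \in \R[\eta_{ij}, \xi_{ij}]$ (with the remaining variables treated as constants, i.e. working over the polynomial ring in all the other $\eta_{i'j}, \xi_{i'j}$) is the restriction $f|_{\eta_{ik} = 0}$ independent of every $\xi_{ij}$ and of $\eta_{ij}$ for $j \neq k$, for every $k = 1, \dots, n$?

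The key computational step is a Taylor expansion in the block-$i$ variables. Write $f$ as a polynomial in $\eta_{ij}, \xi_{ij}$. The restriction to the hyperplane $\{\eta_{ik} = 0\}$ is obtained by setting $\eta_{ik} = 0$, i.e. by dropping all monomials divisible by $\eta_{ik}$. Saying this restriction does not depend on a variable $z \in \{\xi_{i1},\dots,\xi_{in}\} \cup \{\eta_{ij} : j \neq k\}$ is exactly saying $\partial f/\partial z$, restricted to $\{\eta_{ik}=0\}$, vanishes; and a polynomial vanishes on the hyperplane $\{\eta_{ik}=0\}$ iff it lies in the ideal $(\eta_{ik})$. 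Thus the condition ``$f|_{W_i}$ is independent of $\xi_{ij}$" unwinds, since $W_i = \bigcup_k \{\eta_{ik}=0\}$, to: $\partial f/\partial \xi_{ij} \in (\eta_{ik})$ for every $k$ (this is the first family of conditions). Similarly ``$f|_{W_i}$ is independent of $\eta_{ij}$" gives $\partial f/\partial \eta_{ij} \in (\eta_{ik})$ for every $k \neq j$; the case $k = j$ is automatic and imposes nothing, since the derivative of $f$ with respect to $\eta_{ij}$ restricted to $\{\eta_{ij}=0\}$ is unconstrained (varying $\eta_{ij}$ moves off the hyperplane). Running over all $i$ and all $j$ gives precisely the asserted characterization.

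Two small points need care. First, the phrase ``the restriction of $f$ to $W_i$ is independent of $x_{ij}$ and $y_{ij}$" must be read as: $f|_{W_i}$, a function on the reducible variety $W_i$, is invariant under the simultaneous translation $x_{ij} \mapsto x_{ij} + t$, $y_{ij} \mapsto y_{ij} + t$ (which is translation in $\xi_{ij}$ fixing $\eta_{ij}$) and under translation in $\eta_{ij}$ fixing $\xi_{ij}$ where this makes sense on $W_i$; since $W_i$ is a union of coordinate hyperplanes, each irreducible component $\{\eta_{ik}=0\}$ is itself stable under translating $\xi_{ij}$ (all $j$) and $\eta_{ij}$ ($j \neq k$), so ``independent of'' really does mean the partial derivatives vanish on each component. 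I would make this reduction to components explicit at the outset, noting that a polynomial function on $W_i = \bigcup_k \{\eta_{ik}=0\}$ is determined by its restrictions to the components and that restriction commutes with $\partial/\partial z$ for $z$ not among the coordinates cut out. Second, I should confirm the nullstellensatz-type fact I am using — that $g$ vanishes on $\{\eta_{ik}=0\}$ over $\R$ iff $\eta_{ik} \mid g$ — which is immediate because $\{\eta_{ik}=0\}$ is an affine space (Zariski-dense in its complexification) and $(\eta_{ik})$ is prime.

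The main obstacle is purely one of bookkeeping: making sure the indexing ``for all $i,j,k$'' versus ``for all $i,j,k$ with $j\neq k$'' comes out correctly, i.e. that the diagonal case $j = k$ for the $\eta$-derivatives genuinely contributes no constraint, and that the union-of-hyperplanes structure of $W_i$ is handled component-by-component rather than on $W_i$ as a whole (a polynomial can be divisible by $\prod_k \eta_{ik}$ without being divisible by any single $\eta_{ik}$, so the conditions must be imposed on each $\{\eta_{ik}=0\}$ separately, which is exactly what ``$\partial f/\partial\xi_{ij} \in (\eta_{ik})$ for all $k$'' records). Once that is laid out carefully, the equivalence is a direct unwinding of definitions with no hard input beyond the elementary fact that vanishing on a coordinate hyperplane is equivalent to divisibility by the corresponding coordinate.
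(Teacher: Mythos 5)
Your proof is correct and follows essentially the same route as the paper's: decompose $W_i$ into the coordinate hyperplanes $W_{ik}=\{\eta_{ik}=0\}$, translate "the restriction is independent of $z$" into "$\partial f/\partial z$ vanishes on $W_{ik}$," and identify vanishing on $W_{ik}$ with membership in the prime ideal $(\eta_{ik})$, with the diagonal case $j=k$ correctly seen to impose nothing. The only cosmetic difference is that the paper makes the divisibility concrete by writing $f = q\,\eta_{ik} + r_{ik}$ with $\eta_{ik}$ absent from $r_{ik}$ and working with $r_{ik}$ directly, whereas you invoke the Nullstellensatz-type fact directly; the content is identical.
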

\begin{proof}
With our new coordinates, $W_i$ can be defined by the equation $\prod_{k=1}^n \eta_{ik} = 0$. Let $W_{ik} \subseteq \R^n \times \R^n$ denote the hyperplane defined by the equation $\eta_{ik} = 0$, and note that $W_i = \bigcup_{k=1}^n W_{ik}$. Then $R_n^m$ is the subring of $A[\A_n^m]$ consisting of all functions whose restriction to $W_i$ is independent of $\xi_{ij} - \eta_{ij}$ and $\xi_{ij} + \eta_{ij}$ for all $i,j$. Equivalently, $R_n^m$ is the subring of $A[\A_n^m]$ consisting of all functions whose restriction to $W_{ik}$ is independent of $\xi_{ij}$ and $\eta_{ij}$ for all $i,j,k$. 

Suppose $f \in \R[\xi_{ij},\, \eta_{ij}]$. For any $i,k$, we can write \[f(\xi, \eta) = q(\xi, \eta) \eta_{ik} + r_{ik}(\xi, \eta),\] where $\eta_{ik}$ does not appear in $r_{ik}(\xi, \eta)$. Since $f|_{W_{ik}} = r|_{W_{ik}}$, it follows that $f \in R_n^m$ if and only if for all $i,j,k$, $\frac{\partial r_{ik}}{\partial \xi_{ij}} = \frac{\partial r_{ik}}{\partial \eta_{ij}} = 0$. This in turn is equivalent to the condition that $\frac{\partial f}{\partial \xi_{ij}}\in (\eta_{ik})$ for all $i, j, k$ and $\frac{\partial f}{\partial \eta_{ij}}\in (\eta_{ik})$ for all $i, j, k$ with $j \neq k$.
\end{proof}

\begin{proposition}
The following set of monomials forms an $\R$-basis for $R_n^m$: \[\left\{\prod_{i,j} \eta_{ij}^{a_{ij}} \xi_{ij}^{b_{ij}}\ \middle|\ \left(\forall i\right)\left(\left(\exists j \textup{ such that } a_{ij} > 0 \textup{ or } b_{ij} > 0\right) \Longrightarrow \left(\forall k, a_{ik} > 0\right)\right)\right\}.\]
\end{proposition}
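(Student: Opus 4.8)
The plan is to check the two containments separately. Linear independence is free: the listed elements are pairwise distinct monomials of the polynomial ring $A[\A_n^m]=\R[\eta_{ij},\xi_{ij}]$. For the inclusion of their span into $R_n^m$, I would verify the differential criterion of Proposition \ref{diff_char} directly on a listed monomial $M=\prod_{i,j}\eta_{ij}^{a_{ij}}\xi_{ij}^{b_{ij}}$. Fix a block $i$: if $M$ uses no block-$i$ variable, all the relevant partials vanish; otherwise the defining condition forces $a_{ik}\ge 1$ for every $k$, so $\eta_{i1}\cdots\eta_{in}$ divides $M$. Differentiating with respect to $\xi_{ij}$ does not touch any $\eta$, so $\partial M/\partial\xi_{ij}$ is still divisible by $\eta_{i1}\cdots\eta_{in}$, hence lies in $(\eta_{ik})$ for every $k$; differentiating with respect to $\eta_{ij}$ lowers only the exponent of $\eta_{ij}$, so the factor $\prod_{k\ne j}\eta_{ik}$ survives and $\partial M/\partial\eta_{ij}\in(\eta_{ik})$ for all $k\ne j$. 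As $R_n^m$ is an $\R$-subspace, it then contains the whole span.

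For the reverse inclusion — that the listed monomials span — the key device is a block-by-block decomposition of an arbitrary $f\in R_n^m$: for each fixed block index $i$,
\[ f=f_0^{(i)}+\bigl(\eta_{i1}\cdots\eta_{in}\bigr)\,h_i, \]
where $f_0^{(i)}$ involves only variables from blocks other than $i$ and $h_i\in A[\A_n^m]$. To see this, observe that Proposition \ref{diff_char} says exactly that after the substitution $\eta_{ik}=0$ the polynomial $f|_{\eta_{ik}=0}$ has vanishing partial derivative with respect to every $\xi_{ij}$ and with respect to every $\eta_{ij}$ with $j\ne k$ — and it manifestly does not involve $\eta_{ik}$ — so $f|_{\eta_{ik}=0}$ is free of every block-$i$ variable. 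Evaluating its (absent) block-$i$ variables at $0$ therefore identifies $f|_{\eta_{ik}=0}$ with $f_0^{(i)}:=f|_{\text{all block-}i\text{ variables}=0}$, which in particular does not depend on $k$. Hence $f-f_0^{(i)}$ vanishes on each hyperplane $\{\eta_{ik}=0\}$, so it is divisible by each of the distinct variables $\eta_{i1},\dots,\eta_{in}$, hence by their product, which gives the claimed decomposition.

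Now let $M$ be a monomial occurring in $f\in R_n^m$ and suppose it fails the stated condition: there is a block $i_0$ such that some block-$i_0$ variable occurs in $M$ while $a_{i_0 k_0}=0$ for some $k_0$. Apply the decomposition with $i=i_0$. Since $f_0^{(i_0)}$ carries no block-$i_0$ variable, $M$ must occur in $\bigl(\eta_{i_0 1}\cdots\eta_{i_0 n}\bigr)h_{i_0}$; but every monomial there is divisible by $\eta_{i_0 1}\cdots\eta_{i_0 n}$, forcing $a_{i_0 k_0}\ge 1$ — a contradiction. Thus every monomial of $f$ satisfies the condition, so $f$ lies in the span of the listed monomials, completing the proof.

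The step demanding the most care is the block decomposition: reading Proposition \ref{diff_char} as the statement that $f|_{\eta_{ik}=0}$ is free of all block-$i$ variables, and noticing that this block-$i$-free piece does not depend on $k$, so that $f-f_0^{(i)}$ is simultaneously divisible by all of $\eta_{i1},\dots,\eta_{in}$. The only ring-theoretic input, used here and in the first containment, is the elementary fact that $\bigcap_k(\eta_{ik})=(\eta_{i1}\cdots\eta_{in})$ for distinct variables $\eta_{ik}$ in a polynomial ring.
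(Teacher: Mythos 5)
Your proof is correct, and the second half takes a genuinely different route from the paper's. The paper's argument is very compact: it observes that the differential criteria of Proposition \ref{diff_char} are monomial conditions --- each ideal $(\eta_{ik})$ is a monomial ideal, and partial differentiation sends distinct monomials to distinct monomials (or to zero), so $\frac{\partial f}{\partial \xi_{ij}}\in(\eta_{ik})$ holds for $f$ iff it holds for every monomial summand of $f$ --- hence $R_n^m$ automatically has a monomial basis, and one simply reads off which monomials pass the test. You instead split the statement into two inclusions. For the easy inclusion (span $\subseteq R_n^m$) you run the same monomial-level check as the paper would. For the converse you replace the ``monomial-wise'' observation with a direct structural decomposition: you read Proposition \ref{diff_char} as saying that each restriction $f|_{\eta_{ik}=0}$ is free of all block-$i$ variables, deduce that these restrictions all coincide with $f_0^{(i)}=f|_{\text{block-}i\text{ vars}=0}$, conclude $\eta_{i1}\cdots\eta_{in}\mid (f-f_0^{(i)})$ because the $\eta_{ik}$ are pairwise coprime, and then note the two summands $f_0^{(i)}$ and $(\eta_{i1}\cdots\eta_{in})h_i$ have disjoint monomial supports so no cancellation can occur. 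Both arguments are sound; the paper's is shorter, while yours makes the geometry of the hypersurfaces $W_{ik}$ explicit and avoids having to assert and justify the ``conditions descend to monomials'' step, at the cost of a bit more bookkeeping. One small point worth keeping in mind when you write this up formally: the quantifiers in Proposition \ref{diff_char} should be read with the block index $i$ shared between the differentiated variable and the ideal, i.e.\ $\frac{\partial f}{\partial\xi_{ij}}\in(\eta_{ik})$ and $\frac{\partial f}{\partial\eta_{ij}}\in(\eta_{ik})$ both carry the same $i$; your proof uses this correctly, but it is worth stating so the ``$f|_{\eta_{ik}=0}$ is free of all block-$i$ variables'' step is unambiguous.
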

\begin{proof}
The differential conditions given in proposition \ref{diff_char} which are necessary and sufficient for a polynomial $f \in A[\A_n^m]$ to belong to $R_n^m$ are true for a polynomial $f$ if and only if they are true for each of the monomial summands of $f$. Hence, we know that there exists a basis of monomials for $R_n^m$. The monomials which satisfy the conditions of proposition \ref{diff_char} are exactly those listed in the statement of this proposition.
\end{proof}

Having completed these initial calculations, it remains to calculate $A_{fin}\left[\widetilde{Sp}(\A_n^1)\right]$. Recall that the Hilbert series $HS_\Omega (t)$ of a real vector space $\Omega$ is defined to be the sum \[HS_\Omega(t) = \sum_{i=1}^\infty \dim_{\R}(\Omega)\, t^i.\]

\begin{theorem}\label{algebraic_theorem}
$A_{fin}\left[\widetilde{Sp}(\A_n^1)\right]$ is freely generated as an $\R$-algebra by the infinite symmetric polynomials \[p_{\aa, \bb} = \sum_{i=1}^\infty \eta_{i1}^{a_1} \eta_{i2}^{a_{2}} \cdots \eta_{in}^{a_{n}}\xi_{i1}^{b_1} \xi_{i2}^{b_{2}} \cdots \xi_{in}^{b_{n}}\] where $a_j \geq 1$ for all $j$.
\end{theorem}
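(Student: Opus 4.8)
The plan is to reduce, via Proposition \ref{sp_inv_lim} and Lemma \ref{multi_symmetric_lemma}, to a purely algebraic computation: since $A[Sp^m(\A_n^1)] = A[\A_n^m]^{S_m}$ and $R_n^m$ is an $S_m$-stable subring of $A[\A_n^m]$ (the $S_m$-action permutes the blocks $i$, hence the subvarieties $W_i$), we have $R_n^m \cap A[Sp^m(\A_n^1)] = (R_n^m)^{S_m}$, and therefore $A_{fin}\left[\widetilde{Sp}(\A_n^1)\right] = \varprojlim_m (R_n^m)^{S_m}$ (graded inverse limit). I will compute the right-hand side via an explicit basis and then match it against the free polynomial ring on the claimed generators.

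First I would read off a basis of $(R_n^m)^{S_m}$. The preceding proposition gives $R_n^m$ a basis of monomials $\prod_{i,j}\eta_{ij}^{a_{ij}}\xi_{ij}^{b_{ij}}$ whose restriction to each block $i$ is either trivial or has every $\eta$-exponent positive; writing $z_{i,(\aa,\bb)} = \eta_{i1}^{a_1}\cdots\eta_{in}^{a_n}\xi_{i1}^{b_1}\cdots\xi_{in}^{b_n}$ for such a block monomial, the basis elements are the products $\prod_{i} z_{i,(\aa_i,\bb_i)}$ over the nonempty blocks, with each $\aa_i \geq \one$. Since $S_m$ permutes this monomial basis, the corresponding orbit sums $q_\lambda$ form an $\R$-basis of $(R_n^m)^{S_m}$, indexed by finite multisets $\lambda = \{(\aa^{(1)},\bb^{(1)}),\dots,(\aa^{(r)},\bb^{(r)})\}$ with $\aa^{(s)} \geq \one$, $\bb^{(s)} \in \N^n$, and $r \leq m$. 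Each $p_{\aa,\bb}$ in the statement has degree $|\aa|+|\bb| \geq n$, so a homogeneous element of degree $d$ involves only $\lambda$ with $|\lambda| \leq d/n$; hence for $m$ large (relative to $d$) the truncation maps $(R_n^{m+1})^{S_{m+1}}_d \to (R_n^m)^{S_m}_d$ (set the block-$(m+1)$ variables to $0$) are isomorphisms, and $A_{fin}\left[\widetilde{Sp}(\A_n^1)\right]_d$ has basis $\{q_\lambda\}$ over all such multisets $\lambda$ of total degree $d$ (the bound on $r$ now being vacuous). In particular this ring is finite-dimensional in each degree.

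Next I would verify that $p_{\aa,\bb} \in A_{fin}\left[\widetilde{Sp}(\A_n^1)\right]$ whenever $\aa \geq \one$: its $m$-th component is the multi-symmetric power sum $\sum_{i=1}^m z_{i,(\aa,\bb)}$, which is $S_m$-invariant and whose monomial summands all have every $\eta$-exponent equal to some $a_k \geq 1 > 0$, hence lie in the monomial basis of $R_n^m$; the components are compatible under truncation, so the sequence defines an element of $\varprojlim_m (R_n^m)^{S_m}$. (Geometrically, $z_{i,(\aa,\bb)}$ vanishes on each $W_{ik}$ precisely because $a_k \geq 1$, which is what it means to survive to $\widetilde{Sp}$.) Moreover these $p_{\aa,\bb}$ form a subfamily of the full family $\{p_{\aa,\bb}\}$, which is algebraically independent over $\R$ by Lemma \ref{multi_symmetric_lemma}; so the natural map $\Phi \colon \R[p_{\aa,\bb} : \aa \geq \one] \to A_{fin}\left[\widetilde{Sp}(\A_n^1)\right]$ is an injective homomorphism of graded rings, and it remains only to prove surjectivity.

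For surjectivity I would compare Hilbert series. A basis of the degree-$d$ part of the source $\R[p_{\aa,\bb} : \aa \geq \one]$ is indexed by finite multisets of pairs $(\aa,\bb)$ with $\aa \geq \one$ of total degree $d$ — exactly the index set of the basis $\{q_\lambda\}$ of $A_{fin}\left[\widetilde{Sp}(\A_n^1)\right]_d$ found above — so source and target have the same finite dimension in every degree, and an injective graded linear map between graded spaces of equal finite dimensions degreewise is an isomorphism. Alternatively, surjectivity follows by triangularity: expanding $p_\lambda := \prod_s p_{\aa^{(s)},\bb^{(s)}}$ and using $z_{i,(\aa,\bb)}\,z_{i,(\aa',\bb')} = z_{i,(\aa+\aa',\bb+\bb')}$ gives $p_\lambda = c_\lambda\, q_\lambda + \sum_{|\mu| < |\lambda|} c_{\lambda\mu}\, q_\mu$, where $\mu$ runs over multisets obtained from $\lambda$ by merging blocks — still of the required form since $\aa + \aa' \geq \one$ — and $c_\lambda$ is a product of factorials of multiplicities, hence a nonzero element of $\R$; this is an upper-triangular, invertible change of basis from $\{p_\lambda\}$ to $\{q_\lambda\}$ ordered by block count. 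The main obstacle is the first computation: carefully passing from the monomial basis of $R_n^m$ to the orbit-sum basis of $(R_n^m)^{S_m}$ and checking that the graded inverse limit stabilizes degreewise onto the stated index set of multisets. Once that index set is identified, both the containment $p_{\aa,\bb} \in A_{fin}\left[\widetilde{Sp}(\A_n^1)\right]$ and the Hilbert-series count are routine, and the theorem follows.
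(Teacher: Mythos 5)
Your proof is correct and takes a genuinely different route to the same conclusion. The paper's proof also reduces to a Hilbert series comparison between the free subalgebra $\calA = \R[p_{\aa,\bb} : \aa \geq \one]$ and $A_{fin}\left[\widetilde{Sp}(\A_n^1)\right]$, but then evaluates both series as the explicit infinite product $\prod_{d \geq 0}(1-t^{n+d})^{-\binom{d+2n-1}{2n-1}}$: Lemma \ref{subalgebra-lemma} does this by counting generators degree by degree, while Lemma \ref{algebra-lemma} proceeds by a rather intricate recursion on a total order $\prec$ on $\N_+^n \times \N^n$, with the technical five-case Lemma \ref{super-icky} verifying the recursion. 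You instead bypass the explicit product entirely. By noting $R_n^m$ is $S_m$-stable, so $R_n^m \cap A[Sp^m(\A_n^1)] = (R_n^m)^{S_m}$, you pass directly from the monomial basis of $R_n^m$ (given in the paper's proposition preceding the theorem) to the orbit-sum basis $\{q_\lambda\}$ of $(R_n^m)^{S_m}$, observe degreewise stabilization of the graded inverse limit, and conclude that $A_{fin}\left[\widetilde{Sp}(\A_n^1)\right]$ has basis indexed by finite multisets of pairs $(\aa,\bb)$ with $\aa \geq \one$ — which is exactly the monomial index set of the free polynomial ring on $\{p_{\aa,\bb} : \aa \geq \one\}$. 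Your secondary argument via the triangular expansion $p_\lambda = c_\lambda q_\lambda + (\text{fewer-block terms})$ is even cleaner, since it proves surjectivity of $\Phi$ without any counting and in particular reproduces both directions at once (the same triangularity shows injectivity by pairing bases). The paper's approach has the advantage of delivering the closed-form Hilbert series as a byproduct; yours is substantially shorter, avoids the case analysis of Lemma \ref{super-icky}, and makes the index-set coincidence — the real content — transparent. Both arguments lean on the same three inputs (Proposition \ref{sp_inv_lim}, the monomial basis of $R_n^m$, and the algebraic independence of the $p_{\aa,\bb}$ from Lemma \ref{multi_symmetric_lemma}), so the logical dependencies are the same; only the bookkeeping differs.
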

\begin{proof}
Let $\calA$ denote the $\R$-algebra generated by the $p_{\aa, \bb}$ such that $a_j \geq 1$ for all $j$. By lemma \ref{multi_symmetric_lemma}, $\calA$ is freely generated by these $p_{\aa, \bb}$. Since each of these $p_{\aa, \bb}$ is in $A_{fin}\left[\widetilde{Sp}(\A_n^1)\right]$, we have that $\calA$ is a freely generated sub-algebra of $A_{fin}\left[\widetilde{Sp}(\A_n^1)\right]$. It remains to show that these two algebras are in fact equal.

To prove equality, we need only show that $HS_{\calA}(t) = HS_{A_{fin}\left[\widetilde{Sp}(\A_n^1)\right]}(t)$. This equality follows from the results of lemmas \ref{subalgebra-lemma} and \ref{algebra-lemma} below.
\end{proof}

\begin{lemma}\label{subalgebra-lemma}
\[HS_{\calA}(t) = \prod_{d = 0}^\infty \left(1 - t^{n+d}\right)^{-\binom{d+2n-1}{2n-1}}.\]
\end{lemma}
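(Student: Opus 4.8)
The plan is purely a generating-function bookkeeping argument. Since Lemma \ref{multi_symmetric_lemma} guarantees that there are no relations among the $p_{\aa,\bb}$, the algebra $\calA$ is a polynomial ring on the set of generators $\{p_{\aa,\bb} \mid a_j \geq 1 \text{ for all } j\}$. For a polynomial ring that is graded so that each graded piece is finite-dimensional, the Hilbert series factors as a product over the generators: $HS_{\calA}(t) = \prod_{g} (1 - t^{\deg g})^{-1}$. So the first step is simply to record this, noting that finite-dimensionality of each graded piece will follow once we check there are only finitely many generators in each degree.

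The second step is to compute $\deg p_{\aa,\bb}$. Since every variable $\eta_{ij}$ and $\xi_{ij}$ carries grading $1$, the monomial $\eta_{i1}^{a_1}\cdots\eta_{in}^{a_n}\xi_{i1}^{b_1}\cdots\xi_{in}^{b_n}$ has degree $\sum_{j=1}^n a_j + \sum_{j=1}^n b_j$, hence $\deg p_{\aa,\bb} = |\aa| + |\bb|$ where $|\aa| = \sum_j a_j$ and similarly for $\bb$. The third step is the counting: I would reparametrize by setting $a_j = a_j' + 1$ with $a_j' \geq 0$, which is a bijection between the admissible $\aa \in \N^n$ (those with all $a_j \geq 1$) and all of $\N^n$. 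Under this substitution $|\aa| + |\bb| = n + \left(|\aa'| + |\bb|\right)$, so writing $d = |\aa'| + |\bb| \geq 0$, a generator $p_{\aa,\bb}$ of degree $n + d$ corresponds to a solution of $|\aa'| + |\bb| = d$ in $2n$ nonnegative integer unknowns. By the standard stars-and-bars count this number is $\binom{d + 2n - 1}{2n - 1}$, which is finite, justifying the finiteness remark above; in particular no generator has degree below $n$.

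The final step is assembly: grouping the product over generators by their common degree $n+d$ gives
\[
HS_{\calA}(t) = \prod_{g}\left(1 - t^{\deg g}\right)^{-1} = \prod_{d=0}^{\infty}\left(1 - t^{n+d}\right)^{-\binom{d+2n-1}{2n-1}},
\]
which is the claimed formula. I do not anticipate a genuine obstacle here; the only points requiring a moment of care are the bijection $a_j \leftrightarrow a_j - 1$ used to reduce the constraint $a_j \geq 1$ to an unconstrained count, and the observation that each graded component is finite-dimensional so that the free-algebra Hilbert series really does converge to the infinite product as a formal power series.
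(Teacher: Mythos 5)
Your proposal is correct and takes essentially the same route as the paper's proof: count the generators $p_{\aa,\bb}$ (with $a_j \geq 1$) in each degree by shifting $a_j \mapsto a_j - 1$ to reduce to an unconstrained stars-and-bars count, then invoke the product formula for the Hilbert series of a free graded algebra. The paper's version is terser (it states the binomial count directly and leaves the free-algebra product formula implicit), but the content is the same.
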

\begin{proof}
Since we require that $a_j \geq 1$ for all generators $p_{\aa, \bb}$ of $\calA$, it follows that $\calA$ has $0$ generators in degrees $1$ through $n-1$. In degrees $d \geq n$, $\calA$ has $\binom{d+n-1}{d-n} = \binom{d+n-1}{2n-1}$ generators (this multinomial coefficient represents the number of ways to put $d-n$ balls into $2n$ buckets).
\end{proof}

\begin{lemma}\label{algebra-lemma}
\[HS_{A_{fin}\left[\widetilde{Sp}(\A_n^1)\right]}(t) = \prod_{d = 0}^\infty \left(1 - t^{n+d}\right)^{-\binom{d+2n-1}{2n-1}}.\]
\end{lemma}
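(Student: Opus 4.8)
The plan is to compute the Hilbert series of $A_{fin}\left[\widetilde{Sp}(\A_n^1)\right]$ directly by passing, via Proposition \ref{sp_inv_lim}, to the finite-level rings $R_n^m \cap A\left[Sp^m(\A_n^1)\right]$ and taking a graded inverse limit. The guiding principle is that in each fixed degree $d$, the dimension of the degree-$d$ part of $A_{fin}\left[\widetilde{Sp}(\A_n^1)\right]$ stabilizes for $m$ large (indeed for $m \geq d$, since a monomial in the power sums of total degree $d$ involves at most $d$ of the summation indices $i$), so it suffices to understand $\dim_{\R}\left(R_n^m \cap A\left[Sp^m(\A_n^1)\right]\right)_d$ for $m \gg d$ and then read off the Hilbert series.

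First I would recall that $A\left[Sp^m(\A_n^1)\right] = A[\A_n^m]^{S_m}$ by Lemma \ref{multi_symmetric_lemma}, and that in the $(\eta,\xi)$-coordinates the $S_m$-action still just permutes the blocks $(\eta_i,\xi_i)$. So $R_n^m \cap A\left[Sp^m(\A_n^1)\right]$ is the $S_m$-invariant part of the monomial-spanned ring $R_n^m$ described in the Proposition preceding Theorem \ref{algebraic_theorem}: it is spanned by the $S_m$-symmetrizations of monomials $\prod_{i,j}\eta_{ij}^{a_{ij}}\xi_{ij}^{b_{ij}}$ subject to the ``cube'' condition that for each block $i$, if that block is nontrivial then every $\eta_{ik}$ with $k=1,\dots,n$ appears to a strictly positive power. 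Equivalently, an $S_m$-invariant in $R_n^m$ is a symmetric function in $m$ ``super-variables'' $(\eta_{i1},\dots,\eta_{in},\xi_{i1},\dots,\xi_{in})$, each of which is either the constant $1$ (a trivial block) or a monomial of the form $\eta_{i1}^{a_1}\cdots\eta_{in}^{a_n}\xi_{i1}^{b_1}\cdots\xi_{in}^{b_n}$ with all $a_j\geq 1$. Such a ring, for $m$ large, is a polynomial ring on the power sums of these admissible block-monomials — this is the standard fact that symmetric functions of a set of formal monomial ``letters'' are freely generated by the corresponding power sums, once $m$ exceeds the degree. Here the admissible letters in block degree $d$ (meaning $\sum a_j + \sum b_j = d$) are those with $a_j\geq 1$ and $b_j\geq 0$: writing $d = n + e$ with $e \geq 0$, the number of such letters is the number of ways to distribute $e$ extra units among the $2n$ exponent slots $(a_1-1,\dots,a_n-1,b_1,\dots,b_n)$, namely $\binom{e + 2n - 1}{2n-1}$. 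A power sum built from a degree-$d$ letter has degree $d$, so the generators of $A_{fin}\left[\widetilde{Sp}(\A_n^1)\right]$ in total degree $n+e$ number exactly $\binom{e+2n-1}{2n-1}$, and the ring is free on them. Taking the generating function of a free graded-commutative (indeed, since $\mathbb{R}$ has characteristic zero and these are ordinary polynomial generators, truly polynomial) $\R$-algebra with $N_d$ generators in degree $d$ gives $\prod_d (1 - t^d)^{-N_d}$, which with $d = n+e$ is precisely $\prod_{e=0}^{\infty}\left(1 - t^{n+e}\right)^{-\binom{e+2n-1}{2n-1}}$, matching Lemma \ref{subalgebra-lemma} after relabeling $e$ as $d$.

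The step I expect to be the main obstacle is making the stabilization and freeness argument fully rigorous at the level of the graded inverse limit: one must check that the restriction maps $R_n^{m+1}\cap A\left[Sp^{m+1}(\A_n^1)\right] \to R_n^m \cap A\left[Sp^m(\A_n^1)\right]$ (dual to the basepoint inclusion $\iota_m$, which in coordinates sets the last block to the basepoint $\zz = \zz' = 0$, i.e.\ $\eta_{m+1,j} = \xi_{m+1,j} = 0$) are surjective in each degree and become isomorphisms in degree $d$ once $m \geq d$, so that the inverse limit is computed degreewise and there are no lim${}^1$ obstructions. This requires verifying that the cube condition is preserved under this restriction — it is, since killing a whole block only removes terms — and that the power sums $p_{\aa,\bb}$ are compatible across levels, which is immediate from their defining formulas. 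Once stabilization is in hand, the freeness claim reduces to the classical statement about power sums generating the ring of symmetric functions of monomial letters over a characteristic-zero field, applied with the letters restricted to the admissible set; I would either cite this or give the short Newton-identity argument that the admissible power sums are algebraically independent and generate. The counting of admissible letters in each degree is the routine ``balls in buckets'' computation already flagged in Lemma \ref{subalgebra-lemma}, so no real difficulty remains there.
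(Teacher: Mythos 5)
Your argument is correct but takes a genuinely different route from the paper's. The paper proves the Hilbert series identity by a bootstrapping recurrence: it fixes a total order $\prec$ on $\N_+^n\times\N^n$, defines truncated series $f_{\aa,\bb}(t)$ for the subspace spanned by symmetrizations of admissible monomials with block-types bounded by $(\aa,\bb)$, observes that moving to the successor multiplies by $(1-t^{\aa+\bb})^{-1}$, and then exhibits explicit candidate products $g_{\aa,\bb}(t)$ (the functions $D_{\aa,\bb}$, $A_{\aa,\bb,i}$, $B_{\aa,\bb,i}$) which are shown in Lemma~\ref{super-icky} --- a five-case binomial identity --- to satisfy the same recurrence, so that $f = \lim f_{\aa,\bb} = \lim g_{\aa,\bb} = g$. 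You instead work structurally at finite level: using the monomial basis of $R_n^m$ and $A[Sp^m(\A_n^1)] = A[\A_n^m]^{S_m}$, you recognize $R_n^m\cap A[\A_n^m]^{S_m}$ as symmetric functions in a restricted alphabet of ``cube-admissible'' block monomials, argue (via the degreewise stabilization of the graded inverse limit) that the limit is a polynomial ring on the corresponding power sums, count $\binom{e+2n-1}{2n-1}$ admissible letters of degree $n+e$, and read off the product formula from the textbook generating function for a graded polynomial ring. What each buys: the paper's recurrence argument never needs to assert free generation --- it is purely a count of the linear monomial basis --- whereas your argument, if carried through, actually establishes Theorem~\ref{algebraic_theorem} directly and renders the Hilbert-series comparison (Lemmas~\ref{subalgebra-lemma}, \ref{algebra-lemma}) and in particular the five-case computation in Lemma~\ref{super-icky} unnecessary as a route to that theorem; it is shorter and more conceptual, at the price of invoking a slightly nonstandard version of the ``symmetric functions are polynomials in power sums'' fact for a restricted alphabet. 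You correctly flag that this last step is where care is needed; note that Lemma~\ref{multi_symmetric_lemma} gives algebraic independence of the full set of $p_{\aa,\bb}$ (hence of any subset), so what actually remains to check is that the admissible power sums \emph{generate} the $S_m$-invariants of $R_n^m$ --- equivalently, that the triangular change of basis from monomial symmetric functions to power-sum products stays within the admissible alphabet, which uses that any ``coarsening'' (blockwise product) of admissible letters is again admissible. Spelling that out closes the one place you deferred, and no other step is at risk.
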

\begin{proof}
For ease of notation, let $f(t) = HS_{A_{fin}\left[\widetilde{Sp}(\A_n^1)\right]}(t)$. Additionally, we define $g(t) = \prod_{d = 0}^\infty \left(1 - t^{n+d}\right)^{-\binom{d+2n-1}{2n-1}}$. We must show that $f(t) = g(t)$. 

Let $<_{\lex}$ denote the lexicographic order on $\N_+^n \times \N^n$. Let $\prec$ denote the following linear order on $\N_+^n \times \N^n$: for $(\aa, \bb), (\aa^\prime, \bb^\prime) \in \N_+^n \times \N^n$, $(\aa^\prime, \bb^\prime) \prec (\aa, \bb)$ if either \[\aa^\prime + \bb^\prime < \aa + \bb,\] or \[\aa^\prime + \bb^\prime = \aa + \bb \textup{  and  } (\aa^\prime, \bb^\prime) <_{\lex} (\aa, \bb).\] 

%If $\left\{\mu_\alpha\right\}_{\alpha}$ denotes a set of orbit representatives of the action of $S_\infty$ on $A[\A_n^\infty]$, then $\left\{\sum_{\sigma \in S_\infty} \sigma(\mu_\alpha)\right\}_\alpha$ forms an $\R$-basis for $A[Sp^\infty(\R^n \times \R^n)] = A[\A_n^\infty]^{S_\infty}$. We choose to work with the following set of orbit representatives: \[\left\{ \prod_{i = 1}^l \prod_{j=1}^{n} \eta_{ij}^{a_{ij}} \xi_{ij}^{b_{ij}}\ \middle|\ (\aa_1, \bb_1) \succeq (\aa_2, \bb_2) \succeq \cdots \succeq (\aa_l, \bb_l),\, (\aa_i, \bb_i) \in \N_+ \times \N \right\}\]

Let $(\aa, \bb) \in \N_+^n \times \N^n$. Define $f_{\aa, \bb}(t)$ to be the Hilbert series of the free subalgebra of $A_{fin}\left[\widetilde{Sp}(\A_n^1)\right]$ generated by the symmetrizations of the monomials \[\prod_{i = 1}^l \prod_{j=1}^{n} \eta_{ij}^{a_{ij}} \xi_{ij}^{b_{ij}} \textup{  such that  } (\aa, \bb) \succeq (\aa_1, \bb_1) \succeq (\aa_2, \bb_2) \succeq \cdots \succeq (\aa_l, \bb_l),\] where $(\aa_i, \bb_i) \in \N_+^n \times \N^n$. For $(\aa, \bb) \in \N_+^n \times \N^n$, let $(\aa^\prime, \bb^\prime) \in \N_+^n \times \N^n$ be the immediate predecessor to $(\aa, \bb)$ under the $\preceq$ ordering (if it exists). Then we have \[f_{\mathbf{1}, \mathbf{0}}(t) = (1 - t^n)^{-1}\] and \[ f_{\aa, \bb} (t) = (1 - t^{\aa + \bb})^{-1} f_{\aa^\prime, \bb^\prime} (t).\]

We now analyze the function $g(t)$. For \[D_{\aa, \bb}(t) = \prod_{0 \leq d < \aa + \bb - n} \left(1-t^{d+n}\right)^{-\binom{d+2n-1}{2n-1}},\] \[ A_{\aa, \bb, i} (t) = \prod_{0 \leq \alpha < a_i - 1} \left(1-t^{\aa+\bb}\right)^{-\binom{n+\aa+\bb - \left(\sum_{j < i} a_j\right) - \alpha - 2}{2n-i-1}},\] and \[B_{\aa, \bb, i}(t) = \prod_{0 \leq \beta < b_i} \left(1-t^{\aa+\bb}\right)^{-\binom{n+\bb-i-\left(\sum_{j < i} b_j\right) - \beta - 1}{n-i-1}},\] set \[g_{\aa, \bb}(t) = \left(1-t^{\aa+\bb}\right)^{-1}D_{\aa, \bb}(t) \left(\prod_{i=1}^{n}A_{\aa, \bb, i}(t)\right)\left(\prod_{i=1}^{n-1}B_{\aa, \bb, i}(t)\right).\] Note that \[g_{\mathbf{1}, \mathbf{0}}(t) = (1 - t^n)^{-1}.\] Lemma \ref{super-icky} below shows that \[ g_{\aa, \bb} (t) = (1 - t^{\aa + \bb})^{-1} g_{\aa^\prime, \bb^\prime} (t).\]

Since $f_{\mathbf{1}, \mathbf{0}} = g_{\mathbf{1}, \mathbf{0}}$ and the $f_{\aa, \bb}$ and $g_{\aa, \bb}$ satisfy the same recurrence relation, it follows that $f_{\aa, \bb} = g_{\aa, \bb}$ for each $\aa, \bb$, and thus \[f(t) = \lim_{(\aa, \bb) \to \infty} f_{\aa, \bb}(t) = \lim_{(\aa, \bb) \to \infty} g_{\aa, \bb}(t) = g(t).\]
\end{proof}

\begin{lemma}\label{super-icky}
In the notation given in lemma \ref{algebra-lemma}, the functions $g_{\aa, \bb}$ satisfy \[ g_{\aa, \bb} (t) = (1 - t^{\aa + \bb})^{-1} g_{\aa^\prime, \bb^\prime} (t).\]
\end{lemma}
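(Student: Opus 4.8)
The plan is to show that the closed form defining $g_{\aa, \bb}(t)$ is exactly the stars-and-bars count
\[
g_{\aa, \bb}(t) \;=\; \prod_{(\aa'', \bb'') \preceq (\aa, \bb)} \bigl(1 - t^{\aa'' + \bb''}\bigr)^{-1},
\]
the product running over all $(\aa'', \bb'') \in \N_+^n \times \N^n$ with $(\aa'', \bb'') \preceq (\aa, \bb)$. Granting this, the recurrence is immediate: since $(\aa', \bb')$ is the immediate $\preceq$-predecessor of $(\aa, \bb)$, the index set $\{(\aa'', \bb'') \preceq (\aa, \bb)\}$ is the disjoint union of $\{(\aa'', \bb'') \preceq (\aa', \bb')\}$ with the single element $(\aa, \bb)$, so the product for $g_{\aa, \bb}$ is the product for $g_{\aa', \bb'}$ times the one extra factor $(1 - t^{\aa + \bb})^{-1}$; and when $(\aa, \bb)$ has no predecessor it is the $\preceq$-minimum $(\mathbf{1}, \mathbf{0})$, for which the product is $(1 - t^n)^{-1}$, matching the base case recorded in the proof of Lemma \ref{algebra-lemma}.

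So the content is the product formula, which I would prove by matching its two pieces against the two pieces of $g_{\aa, \bb}(t)$. First, the factor $D_{\aa, \bb}(t)$ should account for all $(\aa'', \bb'')$ of strictly smaller total degree: the number of $(\aa'', \bb'') \in \N_+^n \times \N^n$ with $\aa'' + \bb'' = d$ is $\binom{d + n - 1}{2n - 1}$ (choose $2n$ nonnegative integers summing to $d$ with the first $n$ at least $1$), and reindexing $d = n + e$ turns $\prod_{n \le d < \aa + \bb}(1 - t^d)^{-\binom{d+n-1}{2n-1}}$ into precisely $D_{\aa, \bb}(t)$. Second, $\bigl(\prod_{i=1}^n A_{\aa, \bb, i}(t)\bigr)\bigl(\prod_{i=1}^{n-1} B_{\aa, \bb, i}(t)\bigr)$ should equal $(1 - t^{\aa + \bb})^{-N}$ with $N$ the number of $(\aa'', \bb'') \in \N_+^n \times \N^n$ of total degree $\aa + \bb$ that are lexicographically smaller than $(\aa, \bb)$. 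I would enumerate those tuples by the position of the first coordinate at which they disagree with $(\aa, \bb)$. If that position is the $i$-th $\aa$-coordinate, its value ranges over $\{1, \dots, a_i - 1\}$ — the constraint that $\aa$-coordinates be positive is exactly what produces the shift matching $0 \le \alpha < a_i - 1$ in the definition of $A_{\aa, \bb, i}$ — and the remaining $2n - i$ coordinates, the first $n - i$ of them positive, are free subject to a fixed sum, a stars-and-bars count that reproduces the binomial $\binom{n + \aa + \bb - \sum_{j < i} a_j - \alpha - 2}{2n - i - 1}$; the parallel bookkeeping for the $i$-th $\bb$-coordinate reproduces $B_{\aa, \bb, i}$; and the $B$-product stops at $i = n - 1$ because a tuple of the same total degree that agrees with $(\aa, \bb)$ in its first $2n - 1$ coordinates must agree in the last one as well.

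I expect the only real obstacle to be this second step: pinning down the enumeration together with all the off-by-one shifts so that the stars-and-bars counts match the specific binomial coefficients in the (rather opaque) definitions of $A_{\aa, \bb, i}$ and $B_{\aa, \bb, i}$, including the asymmetry between $\aa$- and $\bb$-coordinates and the boundary point that no same-degree tuple can first differ at the final coordinate. The first step is a one-line stars-and-bars identity, and the assembly in the first paragraph is purely formal. One could instead verify the recurrence directly; then the same casework resurfaces organized around an explicit description of the $\preceq$-predecessor — decrement the last nonzero $\bb$-coordinate; or, if $\bb = \mathbf{0}$, the last $\aa$-coordinate exceeding $1$; or, at a lex-minimal tuple $(\mathbf{1}, (0, \dots, 0, \aa + \bb - n))$, drop one in total degree — and one finishes with Pascal's rule and the hockey-stick identity; but the product-formula route seems cleaner.
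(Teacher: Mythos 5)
Your proposal is correct, and it takes a genuinely different route from the paper. The paper proves the recurrence directly by casework on the shape of the immediate $\preceq$-predecessor $(\aa', \bb')$: there are five cases (depending on which coordinate of $(\aa, \bb)$ gets decremented and how the trailing coordinates ``carry''), the paper writes out only two of them, and each case boils down to comparing one or two of the factors $A_{\aa, \bb, i}$ or $B_{\aa, \bb, i}$ against their primed counterparts and invoking Pascal's rule or a hockey-stick identity. Your approach instead establishes the closed form
\[
g_{\aa, \bb}(t) = \prod_{(\aa'', \bb'') \preceq (\aa, \bb)} \bigl(1 - t^{\aa'' + \bb''}\bigr)^{-1},
\]
from which the recurrence drops out in one line because $\preceq$ is a total order. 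I checked your two matching steps: the reindexing of $D_{\aa, \bb}$ against the stars-and-bars count $\binom{d+n-1}{2n-1}$ of tuples of total degree $d$ is right, and your enumeration of same-degree lex-smaller tuples by first point of disagreement does reproduce the binomials in $A_{\aa, \bb, i}$ (with the $a''_i = \alpha + 1$ shift and $n-i$ of the remaining $2n-i$ free coordinates forced positive) and in $B_{\aa, \bb, i}$ (with $n-i$ free nonnegative coordinates), including the truncation of the $B$-product at $i = n-1$. Your route is cleaner: it replaces five-way casework (which the paper only partially carries out) with a single uniform identification, and it is more illuminating because it says what $g_{\aa, \bb}$ \emph{is} — the Hilbert series of a free algebra on one generator of degree $\aa'' + \bb''$ for each $(\aa'', \bb'') \preceq (\aa, \bb)$ — which is visibly the same product formula satisfied by $f_{\aa, \bb}$, so it would in fact let one bypass the recurrence altogether in proving Lemma \ref{algebra-lemma}. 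The one thing to make explicit when writing this up is that the disjoint-union decomposition of the index set $\{(\aa'', \bb'') \preceq (\aa, \bb)\}$ into ``strictly lower total degree'' and ``equal total degree, lex $\leq (\aa, \bb)$'' is what licenses matching $D$ against the former and $(1-t^{\aa+\bb})^{-1} \prod A \prod B$ against the latter.
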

\begin{proof}
Due to the complicated definition of the total order $\prec$ used above, this lemma must be proven separately for the following five cases:
\begin{enumerate}
  \item $\aa^\prime = \aa$, ${b^\prime}_i = b_i$ for $1 \leq i \leq n-2$, ${b^\prime}_{n-1} + 1 = b_{n-1}$, and ${b^\prime}_{n-1} -1 = b_{n-1}$.
  \item $\aa^\prime = \aa$, and there exists an index $i_0 < n-1$ such that ${b^\prime}_i = b_i$ for $i < i_0$, ${b^\prime}_{i_0} + 1 = b_{i_0}$, ${b^\prime}_{i_0 + 1} - 1 = b_n$, ${b^\prime}_n = b_{i_0 + 1} = 0$, and ${b^\prime}_i = b_i = 0$ for $i_0 + 2 \leq i \leq n-1$.
  \item ${a^\prime}_i = a_i$ for $i < n$, ${a^\prime}_n + 1= a_n$, ${b^\prime}_{1} - 1 = b_{n}$, ${b^\prime}_n = b_{1} = 0$, and ${b^\prime}_i = b_i = 0$ for $2 \leq i \leq n-1$.
  \item there exists an index $i_0 \leq n-1$ such that ${a^\prime}_i = a_i$ for $i < i_0$, ${a^\prime}_{i_0} + 1 = a_{i_0}$, ${a^\prime}_{i_0 + 1} - 2 = b_n$, $a_{i_0 + 1} = 1$, ${a^\prime}_i = a_i = 1$ for $i \geq i_0 + 2$, $\bb^\prime = 0$, and $b_i = 0$ for all $i < n$.
  \item ${a^\prime}_1 = b_n$, $a_1 = 1$, ${a^\prime}_i = a_i = 1$ for all $i > 1$, ${b^\prime}_i = b_i = 0$ for $i < n$, and ${b^\prime}_n = 0$.
\end{enumerate}

For case (1), we have: \[(\aa^\prime, \bb^\prime) = \left(\left(\begin{array}{c} a_1 \\ \vdots \\ a_{n-1} \\ a_n \end{array}\right), \left(\begin{array}{c} b_1 \\ \vdots \\ b_{n-1}-1 \\ b_n + 1 \end{array}\right)\right) \textup{ and }(\aa, \bb) = \left(\left(\begin{array}{c} a_1 \\ \vdots \\ a_{n-1} \\ a_n \end{array}\right), \left(\begin{array}{c} b_1 \\ \vdots \\ b_{n-1} \\ b_n \end{array}\right)\right).\] Note that $B_{\aa, \bb, n-1}(t) = \left(1-t^{\aa+\bb}\right)^{-1}B_{\aa^\prime, \bb^\prime, n-1}(t)$. As all other factors of $g_{\aa, \bb}$ and $g_{\aa^\prime, \bb^\prime}$ are identical, the lemma holds.

The proofs of cases (2) through (5) are very similar. Of these, for ease, we only prove case (2). For this case, we have: \[(\aa^\prime, \bb^\prime) = \left(\left(\begin{array}{c} a_1 \\ \vdots \\ a_{i-1} \\ a_i \\ a_{i+1} \\ \vdots \\ a_n \end{array}\right), \left(\begin{array}{c} b_1 \\ \vdots \\ b_{i-1}-1 \\ b_n+1 \\ 0 \\ \vdots \\ 0 \end{array}\right)\right) \textup{ and }(\aa, \bb) = \left(\left(\begin{array}{c} a_1 \\ \vdots \\ a_{i-1} \\ a_i \\ a_{i+1} \\ \vdots \\ a_n \end{array}\right), \left(\begin{array}{c} b_1 \\ \vdots \\ b_{i-1} \\ 0 \\ 0 \\ \vdots \\ b_n \end{array}\right)\right).\] It suffices to show that \[B_{\aa, \bb, i-1}(t) = B_{\aa^\prime, \bb^\prime, i-1}(t) B_{\aa^\prime, \bb^\prime, i}(t)\left(1-t^{\aa+\bb}\right)^{-1},\] as all other factors of $g_{\aa, \bb}$ and $g_{\aa^\prime, \bb^\prime}$ are identical. Both sides of the equation above are merely products of powers of $1 - t^{\aa+\bb}$, so we need only show that the exponents of $1 - t^{\aa+\bb}$ are the same on both sides of the equality. That is, we must show that \[\binom{n+\bb-(i-1)-\left(\sum_{j < i-1}b_j\right) - (b_{i-1}-1) - 1}{n-(i-1)-1} = 1 + \sum_{\beta = 0}^{b_n} \binom{n+\bb-i-(\sum_{j < i} b_j)+1-\beta-1}{n-i-1}.\] This simplifies to \[\binom{n+\bb-i-\sum_{j < i}b_j+1}{n-i} = 1 + \sum_{\beta = 0}^{b_n} \binom{n+\bb-i-(\sum_{j < i} b_j)-\beta}{n-i-1}.\] Because $b_j = 0$ for $i+1 \leq j < n$, we can further simplify to \begin{align*} \binom{(n-i) + (b_n + 1)}{(n-i)} &= 1+\sum_{\beta=0}^{b_n} \binom{(n-i) + (b_n - \beta)}{(n-i)-1} \\ &= \sum_{\beta=0}^{b_n+1} \binom{(n-i) + (b_n - \beta)}{(n-i)-1} \\ &= \sum_{\beta^\prime=0}^{b_n+1} \binom{(n-i) + \beta^\prime - 1}{(n-i)-1}. \end{align*} This equality holds due to the combinatorial identity \[\binom{x+k}{x} = \sum_{k^\prime = 0}^k \binom{x+k^\prime-1}{x-1},\] which is easily proven by induction on $k$ using Pascal's rule.
\end{proof}

\section{Extending and Calculating the Invariants}\label{calculate-invariants}

Having determined the ring of $K$-finite global sections of the ``cubical" persistence modules $\calR(k,n)$, we are now faced with two problems. First, it is not clear how (or if) these functions might extend to the class of arbitrary persistence modules $\calM(k,n)$. Second, although we have determined the ring of $K$-finite global sections of $\calR(k,n)$, it is not clear how one might calculate them. This section addresses both of these issues.

\subsection{Defining the Invariants} \label{defining_invariants}

For a multidimensional persistence module $M$ over a field $k$ and $\uu, \vv \in \R^n$, the rank invariant $\rho_{\uu, \vv}(M)$ is defined by \[\rho_{\uu, \vv}(M) = \begin{cases} \dim_{k} \left(\im\left(M_{\lfloor \uu \rfloor} \to M_{\lceil \vv \rceil}\right)\right) & \uu \leq \vv \\ 0 & \textup{otherwise} \end{cases}.\] For a real (rather than integral) treatment of multidimensional persistence, one would remove the floor and ceiling symbols in the definition of $\rho_{\uu, \vv}$. An algorithm to calculate the rank invariant is given in \cite{computing_multid_persistence}.

% FIXTHIS Equivalence between these and measure theoretic things should also go here.

% FIXTHIS Put stuff about runtime to calculate all rank invariants here, although I should talk to Gunnar first about this.

For $\aa \in \N_+^n$, let $I = \{i\ |\ a_i = 1\}$, and let $J = \{i\ |\ a_i > 1\}$. Let $(\zz, \zz^\prime) \in \R^{n + |J|}$ denote a variable constructed so that $\zz, \zz^\prime \in \R^n$ denote variables such that for all $i \in I$, the $i$th coordinate of $\zz$ is the same as the $i$th coordinate of $\zz^\prime$.

As an example of the above construction, if we have $n=3$ and $\aa = (1, 2, 1)$, then $I = \{1, 3\}$, $J = \{2\}$, and $(\zz, \zz^\prime) \in \R^4$ where $\zz = (z_1, z_2, z_3)$ and $\zz^\prime = (z_1, z_2^\prime, z_3)$.

For any persistence module $M$ and $(\aa, \bb) \in \N_+^n \times \N^n$, define \[F_{\aa, \bb}(M) = \int_{\R^{n+|J|}} \left(\zz^\prime - \zz\right)^{\aa_J - 2}\left(\zz\right)^{\bb_I} \left(\zz + \zz^\prime\right)^{\bb_J}  \left(\rho_{\zz, \zz^\prime}(M)\right) \, d(\zz, \zz^\prime).\]

\subsection{Proving Equivalence of Invariants}

In this section, we will prove that on $\calR(k,n)$, the invariants $\{F_{\aa, \bb}\}$ defined in Section \ref{defining_invariants} are equivalent to the invariants $\{p_{\aa, \bb}\}$ defined in theorem \ref{algebraic_theorem} (equivalent in the sense that they encode the same information about an element $M \in \calR(k,n)$). This shows that the invariants $\{F_{\aa, \bb}\}$ extend to all of $\calM(k,n)$ the invariants $\{p_{\aa, \bb}\}$ defined on $\calR(k,n)$.

More concretely, we will prove that on $\calR(k,n)$, \[\Span\left(\left\{ F_{\aa, \bb}\right\}_{(\aa, \bb) \in \N_+^n \times \N^n}\right) = \Span\left(\left\{ p_{\aa, \bb}\right\}_{(\aa, \bb) \in \N_+^n \times \N^n}\right).\] We prove this fact in three iterations - first for $\calR^1(k,1)$, then for $\calR^1(k,n)$, and finally for $\calR(k,n)$.

In proving equivalence, it will be necessary to define a partial order $\preceq$ on $\N_+^n \times \N^n$ (which is different than the linear order $\preceq$ defined in the proof of lemma \ref{algebra-lemma}). For $(\aa, \bb), (\aa^\prime, \bb^\prime) \in \N_+^n \times \N^n$, we say that $(\aa, \bb) \preceq (\aa^\prime, \bb^\prime)$ if and only if \[\aa + \bb = \aa^\prime + \bb^\prime \textup{  and  } \aa \leq \aa^\prime.\] We extend this partial order to monomials: $\xx^{\aa}\yy^{\bb} \preceq \xx^{\aa^{\prime}}\yy^{\bb^{\prime}}$ if and only if $(\aa, \bb) \preceq (\aa^\prime, \bb^\prime)$, and we consider monomials $\xx^{\aa^{\prime}}\yy^{\bb^{\prime}}$ with $\xx^{\aa^{\prime}}\yy^{\bb^{\prime}} \succeq \xx^{\aa}\yy^{\bb}$ as ``higher order terms" (it will be clear from context what the monomial represented here by $\xx^{\aa}\yy^{\bb}$ is).

\begin{lemma}\label{span-1-1}
On $\calR^1(k,1)$, \[\Span\left(\left\{ F_{a, b}\right\}_{(a, b) \in \N_+ \times \N}\right) = \Span\left(\left\{ p_{a, b}\right\}_{(a, b) \in \N_+ \times \N}\right).\]
\end{lemma}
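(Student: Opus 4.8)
The plan is to evaluate both families explicitly on $\calR^1(k,1)$ and then to read off a triangular change of basis relating them. Up to isomorphism an element of $\calR^1(k,1)$ is either the zero module or a single summand $M=\left(k[x]/(x^d)\right)_v$ with $0<d<\infty$, which under Corollary~\ref{rect_to_sp} and the coordinates $\eta=y-x$, $\xi=y+x$ of Section~\ref{multisymmetric_section} is the point $(\eta,\xi)=(d,\,2v+d)$ with $\eta>0$. Both $F_{a,b}$ and $p_{a,b}$ (with $a\ge1$) vanish on the zero module, so it suffices to compare them on a single summand, where one checks directly that $p_{a,b}(M)=\eta^{a}\xi^{b}$. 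Thus the lemma reduces to showing that the functions $(\eta,\xi)\mapsto F_{a,b}(M)$ and the monomials $\eta^{a}\xi^{b}$ ($a\ge1$) have the same $\R$-span.

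The second and main step is the computation of $F_{a,b}(M)$. Using the (real formulation of the) rank invariant, $M_{z}\to M_{z'}$ is an isomorphism exactly when $v\le z\le z'<v+d$ and is zero otherwise, so $\rho_{z,z'}(M)$ is the indicator of that triangular region. For $a=1$ we have $J=\varnothing$, the integral collapses to the diagonal $\zz=\zz'=z$, and $F_{1,b}(M)=\int_{v}^{v+d}z^{b}\,dz$, which on substituting $v=(\xi-\eta)/2$, $v+d=(\xi+\eta)/2$ and expanding becomes $\bigl((b+1)2^{b}\bigr)^{-1}\sum_{k\text{ odd}}\binom{b+1}{k}\eta^{k}\xi^{b+1-k}$. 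For $a\ge2$ we have $J=\{1\}$, $I=\varnothing$, and $F_{a,b}(M)=\iint_{v\le z\le z'<v+d}(z'-z)^{a-2}(z+z')^{b}\,dz\,dz'$; passing to the variables $\tau=z'-z$ and $z+z'$, carrying out the inner integral, rescaling $\tau$ by $\eta$ and evaluating a Beta integral gives $F_{a,b}(M)=(b+1)^{-1}\sum_{k\text{ odd}}\binom{b+1}{k}B(a-1,k+1)\,\eta^{a-1+k}\xi^{b+1-k}$. In both cases every monomial $\eta^{a'}\xi^{b'}$ that occurs satisfies $a'+b'=a+b$ and $a'\ge a$, so $F_{a,b}$ is a finite $\R$-linear combination of the $p_{a',b'}$ with $(a,b)\preceq(a',b')$, and the coefficient of $p_{a,b}$ itself (the $k=1$ term) equals $2^{-b}$ if $a=1$ and $1/\bigl(a(a-1)\bigr)$ if $a\ge2$, in particular nonzero. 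Hence $F_{a,b}=c_{a,b}\,p_{a,b}+\HOT$ with $c_{a,b}\neq0$.

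The third step deduces the lemma from this triangularity. The inclusion $\Span\{F_{a,b}\}\subseteq\Span\{p_{a,b}\}$ is immediate from the formulas. For the reverse, fix a total degree $D$: the relevant generators are $p_{a,D-a}$ for $1\le a\le D$, a finite set, with $(D,0)$ maximal under $\preceq$, and $F_{D,0}=c_{D,0}\,p_{D,0}$ exactly, so $p_{D,0}$ lies in $\Span\{F_{a,b}\}$. Descending on $a$: once $p_{a',D-a'}\in\Span\{F_{a,b}\}$ for all $a'>a$, the expression of $F_{a,D-a}$ as $c_{a,D-a}\,p_{a,D-a}$ plus a combination of the $p_{a',D-a'}$ with $a'>a$, together with $c_{a,D-a}\neq0$, shows $p_{a,D-a}\in\Span\{F_{a,b}\}$. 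Letting $D$ vary gives $\Span\{p_{a,b}\}\subseteq\Span\{F_{a,b}\}$, so the two spans coincide. The one genuinely computational ingredient --- and where I expect the main difficulty --- is the evaluation of the double integral for $a\ge2$ together with the check that its $k=1$ coefficient does not vanish, since that nonvanishing is exactly what drives the triangular induction; everything else is formal.
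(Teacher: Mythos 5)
Your proof is correct and follows essentially the same route as the paper's: compute $F_{a,b}$ on a single interval in the $(\eta,\xi)$ coordinates, observe the resulting expression is $c_{a,b}\,p_{a,b}$ plus $\prec$-higher terms of the same total degree with $c_{a,b}\ne 0$ (with the same leading coefficients $2^{-b}$ and $1/(a(a-1))$ as the paper finds), and invert the finite triangular system in each total degree. The only cosmetic differences are that you package the $a\ge 2$ integral via a Beta function where the paper iterates integration by parts, and you spell out the descending induction on $a$ that the paper calls ``back substitution.''
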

\begin{proof}
It suffices to prove that for all $k$, the following holds (on $\calR^1(k,1)$): \[\Span\left(\left\{ F_{a, b}\right\}_{a+b=k}\right) = \Span\left(\left\{ p_{a,b}\right\}_{a+b=k}\right).\] More concretely, we will show that $F_{a,b}$ can be written as a (finite) linear combination of $p_{a,b}$ and higher order terms. It follows from this computation that \[\Span\left(\left\{ F_{a, b}\right\}_{a+b=k}\right) \subseteq \Span\left(\left\{ p_{a,b}\right\}_{a+b=k}\right).\] Since there are only finitely many pairs $(a^\prime, b^\prime) \in \N_+ \times \N$ with $a^\prime + b^\prime = a+b = k$, it follows from back substitution that \[\Span\left(\left\{ F_{a, b}\right\}_{a+b=k}\right) \supseteq \Span\left(\left\{ p_{a,b}\right\}_{a+b=k}\right),\] effectively proving the lemma.

We return now to the matter of calculating $F_{a,b}(M)$ for $M \in \calR^1(k,1)$. Assume $M = [x, y] \in \calR^1(k,1)$. Let $(a, b) \in \N_+ \times \N$. We first consider the case $a = 1$. In this case, we have that $\rho_{z,z}(M)$ is simply the characteristic function $\one_{[x,y]}$, and so:
\begin{align*}
F_{1, b}(M) &= \int_{\R^{1}} z^b \left(\rho_{z,z}(M)\right) \, dz \\
%&= \int_{\R^{1}} z^b \left(\one_{[x, y]}\right) \, dz \\
&= \int_x^y z^b \, dz \\
&= \frac{1}{b+1} \left( y^{b+1} - x^{b+1} \right) \\
&= \left( \frac{1}{b+1} \right) \left(\left(\frac{(y+x)+(y-x)}{2}\right)^{b+1} - \left(\frac{(y+x)-(y-x)}{2}\right)^{b+1}\right) \\
&= \left(\frac{1}{(b+1)2^{b+1}}\right)\left( \sum_{i=0}^{b+1} \left(1-(-1)^i \right)\binom{b+1}{i}(y-x)^i(y+x)^{b+1-i} \right) \\
%&= \left( \frac{1}{2^{b}} \right)(y-x)(y+x)^{b} + \HOT. \\
&= \left( \frac{1}{2^{b}} \right)p_{1,b}(M) + \HOT. 
\end{align*}

Next, we calculate $F_{a, b}(M)$ for $M = [x, y] \in \calR^1(k,1)$ and $(a, b) \in \N_+ \times \N$ with $a \geq 2$. Note that in this case, $\rho_{z,z^{\prime}}(M)$ is the characteristic function of the solid triangular region \[T_{x, y} = \left\{ (z, z^{\prime}) \in \R^2\ \middle|\ (z, z^\prime) \in [x, y] \times [x, y] \textup{ and } z^\prime \geq z \right\}.\] In the calculation of $F_{a,b}$, we will change variables by putting $\alpha = z^\prime - z$ and $\beta = z^\prime + z$. With these coordinates, \[T_{x, y} = \left\{ (\alpha, \beta) \in \R^2\ \middle|\ 2x + \alpha \leq \beta \leq 2y - \alpha \textup{ and } 0 \leq \alpha \leq y-x \right\}.\] We calculate:
\begin{align*}
F_{a, b}(M) &= \int_{\R^{2}} (z^{\prime} - z)^{a-2}(z^{\prime} + z)^b \left(\rho_{z,z^{\prime}}(M)\right) \, d(z, z^{\prime}) \\
&= \frac{1}{2} \int_0^{y-x} \int_{2x+\alpha}^{2y-\alpha} \alpha^{a-2} \beta^b \, d\beta \, d\alpha \\
&= \frac{1}{2(b+1)} \int_0^{y-x} \alpha^{a-2}\left((2y-\alpha)^{b+1} - (2x + \alpha)^{b+1}\right)  \, d\alpha \\
&= \frac{1}{2(b+1)} \int_0^{y-x} \alpha^{a-2}(2y-\alpha)^{b+1} - \alpha^{a-2} (2x + \alpha)^{b+1} \, d\alpha \\
&= \frac{1}{2(b+1)}\left[ \sum_{i=0}^{b+1} \left( \frac{(a-2)!\, (b+1)!}{(a-1+i)!\, (b+1-i)!} \right) \alpha^{a-1+i} (2y-\alpha)^{b+1-i} \right. \\ & \qquad \qquad \qquad \qquad\ \left. - (-1)^i \left( \frac{(a-2)!\, (b+1)!}{(a-1+i)!\, (b+1-i)!} \right) \alpha^{a-1+i} (2x+\alpha)^{b+1-i}\right]_{\alpha = 0}^{y-x} \\
&= \sum_{i=0}^{b+1} \left(\frac{1-(-1)^i}{2}\right) \left( \frac{(a-2)!\, b!}{(a-1+i)!\, (b+1-i)!} \right) (y-x)^{a-1+i} (y+x)^{b+1-i} \\
%&= \left(\frac{1}{a(a-1)}\right)(y-x)^a(y+x)^b + \HOT
&= \left(\frac{1}{a(a-1)}\right)p_{a,b}(M) + \HOT.
\end{align*}
\end{proof}

\begin{lemma}\label{span-1-n}
On $\calR^1(k,n)$, \[\Span\left(\left\{ F_{\aa, \bb}\right\}_{(\aa, \bb) \in \N_+^n \times \N^n}\right) = \Span\left(\left\{ p_{\aa, \bb}\right\}_{(\aa, \bb) \in \N_+^n \times \N^n}\right).\]
\end{lemma}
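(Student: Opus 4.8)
The plan is to reduce the $n$-dimensional statement to the one-dimensional statement (Lemma \ref{span-1-1}) by a product/tensor decomposition, tracking the $\preceq$-order so that a back-substitution argument closes the span. First I would observe that an element $M = \left(\frac{k[x_1,\dots,x_n]}{(x_1^{d_1},\dots,x_n^{d_n})}\right)_{\vv} \in \calR^1(k,n)$ is, as a persistence module, an external tensor product of the one-dimensional modules $M_j = [v_j, v_j + d_j] \in \calR^1(k,1)$ along the $n$ coordinate axes; concretely, $\rho_{\zz,\zz'}(M) = \prod_{j=1}^n \rho_{z_j, z_j'}(M_j)$ whenever $\zz \le \zz'$ (and both sides vanish otherwise). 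Plugging this factorization into the definition of $F_{\aa,\bb}(M)$ and using Fubini, the integral over $\R^{n+|J|}$ splits as a product of $n$ one-dimensional integrals, one for each coordinate $j$: for $j \in I$ (i.e. $a_j = 1$) the $j$th factor is $\int_{\R} z_j^{b_j}\, \rho_{z_j,z_j}(M_j)\, dz_j = F_{1,b_j}(M_j)$, and for $j \in J$ (i.e. $a_j \ge 2$) the $j$th factor is $\int_{\R^2} (z_j' - z_j)^{a_j - 2}(z_j' + z_j)^{b_j}\, \rho_{z_j,z_j'}(M_j)\, d(z_j,z_j') = F_{a_j, b_j}(M_j)$. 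Hence $F_{\aa,\bb}(M) = \prod_{j=1}^n F_{a_j, b_j}(M_j)$.

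Next I would feed in the one-dimensional computation from Lemma \ref{span-1-1}, which gives $F_{a_j,b_j}(M_j) = c_{a_j,b_j}\, p_{a_j, b_j}(M_j) + \HOT$ with a nonzero constant $c_{a_j,b_j}$ ($= 2^{-b_j}$ when $a_j = 1$ and $= \tfrac{1}{a_j(a_j-1)}$ when $a_j \ge 2$), where the higher-order terms are with respect to the one-dimensional order $\preceq$ on $\N_+ \times \N$ (fixed total degree $a_j + b_j$, larger $\aa$-part). Multiplying these $n$ expansions together and noting that, on $\calR^1(k,n)$, $p_{\aa,\bb}(M) = \prod_{j=1}^n p_{a_j, b_j}(M_j)$ (each $p_{\aa,\bb}$ evaluated on a single summand is just the monomial $\eta^{\aa}\xi^{\bb}$ in that summand's coordinates, which factors), the leading term of $F_{\aa,\bb}(M)$ is $\left(\prod_j c_{a_j,b_j}\right) p_{\aa,\bb}(M)$ and every other term is a product $\prod_j p_{a_j',b_j'}(M_j) = p_{\aa',\bb'}(M)$ with $a_j' + b_j' = a_j + b_j$ for each $j$ and $\aa' \ge \aa$ coordinatewise, with $\aa' \ne \aa$ in at least one coordinate — i.e. $(\aa,\bb) \prec (\aa',\bb')$ in the componentwise order. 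So $F_{\aa,\bb} = \left(\prod_j c_{a_j,b_j}\right) p_{\aa,\bb} + \HOT$ on $\calR^1(k,n)$, and in particular $\Span\{F_{\aa,\bb}\} \subseteq \Span\{p_{\aa,\bb}\}$.

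For the reverse inclusion I would fix the total degree $k = \aa + \bb$ and run the back-substitution: within a fixed total degree the relation $(\aa,\bb) \preceq (\aa',\bb')$ (meaning $\aa \le \aa'$, same total sum) is a partial order on a finite set, the constants $\prod_j c_{a_j,b_j}$ are nonzero, and $F_{\aa,\bb}$ is triangular with respect to it with the $p$'s below, so one inverts the (finite, unitriangular-up-to-scalars) transition matrix to write each $p_{\aa,\bb}$ as a linear combination of $F$'s of degree $k$. This gives $\Span\{p_{\aa,\bb}\}_{\aa+\bb=k} \subseteq \Span\{F_{\aa,\bb}\}_{\aa+\bb=k}$, and summing over $k$ yields the claimed equality. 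The main obstacle — and the only step requiring genuine care rather than bookkeeping — is the very first one: verifying cleanly that $\rho_{\zz,\zz'}$ of the tensor-product module factors as the product of the coordinatewise rank invariants (including the correct behavior of the floor/ceiling in the integral formula, or equivalently passing to the real formulation), since everything downstream is a formal consequence of Fubini plus Lemma \ref{span-1-1} plus finite linear algebra.
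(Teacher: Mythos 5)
Your proposal matches the paper's proof essentially step for step: both factor the rank function of a single cube $M = \prod_j [x_j, y_j] \in \calR^1(k,n)$ over the coordinate axes (splitting by $I$ and $J$ into indicator functions $\one_{[x_j,y_j]}$ and $\one_{T_{x_j,y_j}}$), use Fubini to split $F_{\aa,\bb}(M)$ into a product of one-dimensional integrals handled by Lemma~\ref{span-1-1}, identify the leading coefficient $\prod_{i\in I} 2^{-b_i} \prod_{i\in J}\tfrac{1}{a_i(a_i-1)}$, and close by the same fixed-total-degree triangularity and back-substitution argument. The only cosmetic difference is that you name the intermediate identity $F_{\aa,\bb}(M) = \prod_j F_{a_j,b_j}(M_j)$, which the paper carries out in-line without stating separately.
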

\begin{proof}
The proof of this lemma will be similar to that of lemma \ref{span-1-1}. That is, we prove that for all $k$, the following holds (on $\calR^1(k,n)$): \[\Span\left(\left\{ F_{\aa, \bb}\right\}_{\aa+\bb=k}\right) = \Span\left(\left\{ p_{\aa,\bb}\right\}_{\aa+\bb=k}\right)\] by showing that $F_{\aa,\bb}$ can be written as a (finite) linear combination of $p_{\aa,\bb}$ and higher order terms.

Let $M \in \calR^1(k,n)$. Write $M$ as a product $M = \prod_{i=1}^n [x_i, y_i]$. Note that $M$ is determined by vertices $\xx, \yy \in \R^n$ with $\xx \leq \yy$. Let $(\aa, \bb) \in \N_+^n \times \N^n$. As discussed previously, we let $I = \{i\ |\ a_i = 1\}$ and $J = \{i\ |\ a_i > 1\}$. We can write the rank function $\rho_{\zz, \zz^\prime}(M)$ as a convenient product: \[ \rho_{\zz, \zz^\prime}(M) = \left(\prod_{i \in I} \one_{[x_i, y_i]}\right)\left(\prod_{i \in J} \one_{T_{x_i, y_i}}\right), \] where the region $T_{x_i, y_i}$ is as defined in the proof of lemma $\ref{span-1-1}$. This allows us to reduce the calculation of $F_{\aa, \bb}(M)$ to the calculations performed in lemma \ref{span-1-1}:
\begin{align*}
F_{\aa, \bb}(M) &= \int_{\R^{n+|J|}} \left(\zz^\prime - \zz\right)^{\aa_J - 2}\left(\zz\right)^{\bb_I} \left(\zz + \zz^\prime\right)^{\bb_J}  \left(\rho_{\zz, \zz^\prime}(M)\right) \, d(\zz, \zz^\prime)\\
&= \left(\prod_{i \in I} \int_{\R^{1}} z_i^{b_i} \left(\one_{[x_i,y_i]}(z_i)\right) \, dz_i \right) \\ & \quad \qquad \left( \prod_{i \in J} \int_{\R^{2}} (z_i^{\prime} - z_i)^{a_i-2}(z_i^{\prime} + z_i)^{b_i} \left(\one_{T_{x_i, y_i}} (z_i, z_i^\prime)\right) \, d(z_i, z_i^{\prime}) \right) \\
&= \left(\prod_{i \in I} \left( \frac{1}{2^{b_i}} \right) (y_i - x_i)(y_i + x_i)^{b_i} + \HOT\right) \\ & \qquad \quad\left( \prod_{i \in J} \left(\frac{1}{a_i(a_i-1)}\right)(y_i-x_i)^{a_i}(y_i+x_i)^{b_i} + \HOT \right) \\
%&=  \left(\prod_{i \in I} \frac{1}{2^{b_i}} \right) \left(\prod_{i \in J} \frac{1}{a_i(a_i-1)}\right) (\yy - \xx)^{\aa}(\yy + \xx)^{\bb} + \HOT \\
&=  \left(\prod_{i \in I} \frac{1}{2^{b_i}} \right) \left(\prod_{i \in J} \frac{1}{a_i(a_i-1)}\right) p_{\aa, \bb}(M) + \HOT.
\end{align*}
\end{proof}

The previous two lemmas allow us to prove the main theorem of this section:

\begin{theorem}\label{span-n-n}
On $\calR(k,n)$, \[\Span\left(\left\{ F_{\aa, \bb}\right\}_{(\aa, \bb) \in \N_+^n \times \N^n}\right) = \Span\left(\left\{ p_{\aa, \bb}\right\}_{(\aa, \bb) \in \N_+^n \times \N^n}\right).\]
\end{theorem}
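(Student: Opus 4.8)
The statement is essentially Lemma~\ref{span-1-n} together with the fact that both families of invariants are \emph{additive} under direct sums, so the plan is to bootstrap from Lemma~\ref{span-1-n} and then read off the two inclusions by a triangular-change-of-basis argument within each fixed total degree.

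\textbf{Step 1 (additivity).} First I would record the two additivity statements. If $M = \bigoplus_{i} M_i$ is a finite direct sum of cubical modules, then for all $\uu \leq \vv$ the homomorphism $M_{\lfloor\uu\rfloor} \to M_{\lceil\vv\rceil}$ is the direct sum of the homomorphisms $(M_i)_{\lfloor\uu\rfloor}\to (M_i)_{\lceil\vv\rceil}$, so its image is the direct sum of the images and hence $\rho_{\uu,\vv}(M) = \sum_i \rho_{\uu,\vv}(M_i)$. Since each integrand in the definition of $F_{\aa,\bb}$ is supported on a compact set and there are only finitely many summands, linearity of the integral gives $F_{\aa,\bb}(M) = \sum_i F_{\aa,\bb}(M_i)$. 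Dually, $p_{\aa,\bb}$ is a power sum over the summands, so $p_{\aa,\bb}(M) = \sum_i p_{\aa,\bb}(M_i)$ as well.

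\textbf{Step 2 (the single-summand identity).} Unwinding the computation in the proof of Lemma~\ref{span-1-n} — equivalently, re-expanding its final line in the coordinates $\eta, \xi$ — shows that for a single cubical module $M_i$ there is an identity
\[
F_{\aa,\bb}(M_i) = c_{\aa,\bb}\, p_{\aa,\bb}(M_i) + \sum_{\substack{(\aa^\prime,\bb^\prime)\succ(\aa,\bb)\\ \aa^\prime+\bb^\prime = \aa+\bb}} C^{\aa^\prime,\bb^\prime}_{\aa,\bb}\, p_{\aa^\prime,\bb^\prime}(M_i),
\]
where $c_{\aa,\bb} = \bigl(\prod_{j\in I}2^{-b_j}\bigr)\bigl(\prod_{j\in J}(a_j(a_j-1))^{-1}\bigr) \neq 0$ and, crucially, the constants $c_{\aa,\bb}$ and $C^{\aa^\prime,\bb^\prime}_{\aa,\bb}$ are \emph{universal}: they depend only on $(\aa,\bb)$ and $(\aa^\prime,\bb^\prime)$, not on which cubical module is substituted. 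The index set is exactly $\{(\aa^\prime,\bb^\prime) \succ (\aa,\bb)\}$ because each per-coordinate factor of $F_{\aa,\bb}$ raises the $\eta$-exponent while preserving the per-coordinate degree, which is precisely what ``higher order terms'' means for the partial order $\preceq$.

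\textbf{Step 3 (summing over summands).} Given $M \in \calR^m(k,n)$, write $M = \bigoplus_i M_i$, apply Step~2 to each $M_i$, sum over $i$, and invoke Step~1: the universal constants factor out of the sum and $\sum_i p_{\aa^\prime,\bb^\prime}(M_i) = p_{\aa^\prime,\bb^\prime}(M)$, so the displayed identity holds verbatim with every $M_i$ replaced by $M$. Since $\calR(k,n) = \bigcup_m \calR^m(k,n)$, the identity holds on all of $\calR(k,n)$, which gives $\Span\{F_{\aa,\bb}\}_{\aa+\bb=k} \subseteq \Span\{p_{\aa,\bb}\}_{\aa+\bb=k}$ for each $k$. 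For the reverse inclusion (Step~4): for fixed total degree $k$ there are only finitely many pairs $(\aa,\bb) \in \N_+^n \times \N^n$ with $\aa+\bb = k$, and the identity just obtained expresses $(F_{\aa,\bb})_{\aa+\bb=k}$ in terms of $(p_{\aa,\bb})_{\aa+\bb=k}$ by a matrix triangular with respect to $\preceq$ whose diagonal entries $c_{\aa,\bb}$ are nonzero; such a matrix is invertible, so each $p_{\aa,\bb}$ is conversely a finite linear combination of the $F_{\aa^\prime,\bb^\prime}$ with $\aa^\prime+\bb^\prime=k$. Taking the union over all $k$ finishes the proof.

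The step I expect to require the most care is Step~3: the assertion that the higher-order terms produced for each individual cubical summand reassemble, after summation, into the power sums $p_{\aa^\prime,\bb^\prime}$ of the whole module $M$. This is exactly where additivity of the power sums is essential, and it is what promotes Lemma~\ref{span-1-n} from $\calR^1(k,n)$ to all of $\calR(k,n)$; everything else is routine linear algebra in the finite-dimensional space spanned by the monomials of a fixed total degree.
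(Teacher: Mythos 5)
Your proof is correct and follows essentially the same route as the paper's: write $M \in \calR(k,n)$ as a finite direct sum of $\calR^1(k,n)$-summands, use additivity of the rank invariant (and hence linearity of $F_{\aa,\bb}$ and of the power sums $p_{\aa,\bb}$) to reduce to the single-summand identity of Lemma~\ref{span-1-n}, and then obtain both inclusions from the resulting triangular change of basis in each fixed total degree. The only difference is that you spell out explicitly the universality of the coefficients and the invertibility of the triangular matrix, which the paper leaves implicit under the phrase ``back substitution.''
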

\begin{proof}
The proof of this lemma will be similar to that of Lemma \ref{span-1-n}. That is, we prove that for all $k$, the following holds (on $\calR(k,n)$): \[\Span\left(\left\{ F_{\aa, \bb}\right\}_{\aa+\bb=k}\right) = \Span\left(\left\{ p_{\aa,\bb}\right\}_{\aa+\bb=k}\right)\] by showing that $F_{\aa,\bb}$ can be written as a (finite) linear combination of $p_{\aa,\bb}$ and higher order terms.

Each element $M$ of $\calR(k,n)$ can be viewed as the finite direct sum \[M = \bigoplus_k M_k,\] where $M_k \in \calR^1(k,n)$. Hence, for $(\aa, \bb) \in \N_+^n \times \N^n$, we have that \[\rho_{\zz, \zz^\prime}(M) = \sum_k \rho_{\zz, \zz^\prime}(M_k).\] This allows us to calculate $F_{\aa, \bb}(M)$ from the calculation of $F_{\aa, \bb}(M_k)$ from lemma \ref{span-1-n}:
\begin{align*}
F_{\aa, \bb}(M) &= \int_{\R^{n+|J|}} \left(\zz^\prime - \zz\right)^{\aa_J - 2}\left(\zz\right)^{\bb_I} \left(\zz + \zz^\prime\right)^{\bb_J}  \left(\rho_{\zz, \zz^\prime}(M)\right) \, d(\zz, \zz^\prime)\\
&= \int_{\R^{n+|J|}} \left(\zz^\prime - \zz\right)^{\aa_J - 2}\left(\zz\right)^{\bb_I} \left(\zz + \zz^\prime\right)^{\bb_J}  \left( \sum_k \rho_{\zz, \zz^\prime}(M_k)\right) \, d(\zz, \zz^\prime)\\
&= \sum_k \int_{\R^{n+|J|}} \left(\zz^\prime - \zz\right)^{\aa_J - 2}\left(\zz\right)^{\bb_I} \left(\zz + \zz^\prime\right)^{\bb_J}  \left(\rho_{\zz, \zz^\prime}(M_k)\right) \, d(\zz, \zz^\prime)\\
&= \sum_k \left(\left(\prod_{i \in I} \frac{1}{2^{b_i}} \right) \left(\prod_{i \in J} \frac{1}{a_i(a_i-1)}\right) p_{\aa, \bb}(M_k) + \HOT \right)\\
&= \left(\prod_{i \in I} \frac{1}{2^{b_i}} \right) \left(\prod_{i \in J} \frac{1}{a_i(a_i-1)}\right) p_{\aa, \bb}(M) + \HOT. \\
\end{align*}
\end{proof}

\section{Group Completions and Geometric Insights}
\label{group_completions_geo_insights}

Recall that we have defined spaces $Sp^{\infty}(\calJ) = \varinjlim Sp^m(\calJ)$ and $\widetilde{Sp}(\R^n \times \R^n)$ by \[\widetilde{Sp}(\R^n \times \R^n) = \frac{\coprod_{m} Sp^m(\R^n \times \R^n)}{\simeq}.\] Both $Sp^{\infty}(\R^n \times \R^n)$ and $\widetilde{Sp}(\R^n \times \R^n)$ are commutative monoids generated by $Sp^1(\R^n \times \R^n)$ with monoid operation given by the natural maps \[+ : Sp^m(\R^n \times \R^n) \otimes Sp^{m^\prime}(\R^n \times \R^n) \to Sp^{m+m^\prime}(\R^n \times \R^n).\]

This section is dedicated to the definition and properties of the group completions $K\left(Sp^{\infty}(\R^n \times \R^n)\right)$ and $K\left(\widetilde{Sp}(\R^n \times \R^n)\right)$ of these monoids. We show that there is a bijection between elements of $K\left(Sp^{\infty}(\R^n \times \R^n)\right)$ and (suitably defined) generalized rank invariants. This bijection provides an alternative method to finding $p_{\aa, \bb}(M)$ which is computationally faster and simpler than the measure-theoretic method proposed in Section \ref{calculate-invariants}. We also show that $A_{fin}\left[\widetilde{Sp}(\A_n^1)\right]$ separates the elements of $K\left(\widetilde{Sp}(\R^n \times \R^n)\right)$.

\subsection{$K\left(Sp^{\infty}(\R^n \times \R^n)\right)$ and $K\left(\widetilde{Sp}(\R^n \times \R^n)\right)$.}

For a (cancellative) monoid $M$, we define the group completion $K(M)$ of $M$ by \[K(M) = \frac{M \times M}{\simeq},\] where for $(s_1, t_1), (s_2, t_2) \in M \times M$, we have $(s_1, t_1) \simeq (s_2, t_2)$ if and only if $s_1 + t_2 = s_1 + t_1$. It is implied that the element $(s,t) \in K(M)$ should be thought of as ``$s-t$".

We have discussed in Lemma \ref{sp_lemma} and Remark \ref{sp_remark} that $Sp^{\infty}(\R^n \times \R^n)$ can be identified with a certain subset of the $\R$-points of $Sp^{\infty}(\A_n^1)$. We now discuss the extension of this identification to $K(Sp^{\infty}(\R^n \times \R^n))$.

For $X \in Sp^{\infty}(\R^n \times \R^n)$, we may associate an $\R$-point $\varphi_X^*$ of $Sp^{\infty}(\A_n^1)$. The underlying homomorphism $\varphi_X : A_{fin}[Sp^{\infty}(\A_n^1)] \to \R$ has the potential of producing geometrically meaningful summaries of $X$, especially in the case where $X$ has arisen (in the sense of Corollary \ref{rect_to_sp}) from an element of $\calR(k,n)$. Indeed, if we fix some $f_i \in A[Sp^{\infty}(\A_n^1)]$, the numbers $\varphi_X(f_i)$ might provide insightful information about $X$. We now extend this methodology to the case $X \in K(Sp^{\infty}(\R^n \times \R^n))$ and $X \in K(\widetilde{Sp}(\R^n \times \R^n))$.

Concretely, for $(X_1, X_2) \in Sp^{\infty}(\R^n \times \R^n) \times Sp^{\infty}(\R^n \times \R^n)$, we define a ring homomorphism \[\varphi_{(X_1, X_2)} : A_{fin}[Sp^{\infty}(\A_n^1)] \to \R\] on the generators $p_{\aa, \bb}$ of $A_{fin}[Sp^{\infty}(\A_n^1)]$ (defined in Lemma \ref{multi_symmetric_lemma}) by \[\varphi_{(X_1, X_2)}(p_{\aa, \bb}) = \varphi_{X_1}(p_{\aa, \bb}) - \varphi_{X_2}(p_{\aa, \bb}).\]

\begin{proposition}\label{varphi_sp}
The map \[Sp^{\infty}(\R^n \times \R^n) \times Sp^{\infty}(\R^n \times \R^n) \longrightarrow \left( A_{fin}[Sp^{\infty}(\A_n^1)] \to \R \right)\] defined by \[(X_1, X_2) \mapsto \varphi_{(X_1, X_2)}\] factors through $K(Sp^{\infty}(\R^n \times \R^n))$, and can thus be viewed as a map \[K(Sp^{\infty}(\R^n \times \R^n)) \longrightarrow \left( A_{fin}[Sp^{\infty}(\A_n^1)] \to \R \right).\]
\end{proposition}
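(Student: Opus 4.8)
The plan is to exploit that the multi-symmetric power sums $p_{\aa,\bb}$ are \emph{additive} with respect to the monoid operation $+$, and then to invoke the universal property of the group completion.

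First I would record two reductions. Recall from Lemma \ref{multi_symmetric_lemma} and the surrounding discussion that, as an ungraded ring, $A_{fin}[Sp^\infty(\A_n^1)]$ is the polynomial ring $\R[p_{\aa,\bb}]$ on the power sums; hence a ring homomorphism out of it is uniquely determined by its values on the $p_{\aa,\bb}$ (so in particular the formula defining $\varphi_{(X_1,X_2)}$ does specify a bona fide ring homomorphism), and two such homomorphisms coincide iff they agree on every $p_{\aa,\bb}$. Recall also that $K\left(Sp^{\infty}(\R^n \times \R^n)\right)$ is the quotient of $Sp^{\infty}(\R^n \times \R^n) \times Sp^{\infty}(\R^n \times \R^n)$ by the relation $(X_1,X_2) \simeq (X_1',X_2')$ exactly when $X_1 + X_2' = X_1' + X_2$. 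Thus it suffices to prove: if $X_1 + X_2' = X_1' + X_2$ in $Sp^{\infty}(\R^n \times \R^n)$, then $\varphi_{(X_1,X_2)}(p_{\aa,\bb}) = \varphi_{(X_1',X_2')}(p_{\aa,\bb})$ for every generator $p_{\aa,\bb}$.

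The crux is the claim that, for each fixed $p_{\aa,\bb}$, the assignment $X \mapsto \varphi_X(p_{\aa,\bb})$ is a monoid homomorphism $\left(Sp^{\infty}(\R^n \times \R^n),+\right) \to (\R,+)$. To see this I would unwind the definitions: an element of $Sp^{\infty}(\R^n \times \R^n)$ is a finite unordered configuration $\{(\xx_i,\yy_i)\}_i$ of points of $\R^n \times \R^n$, the operation $+$ is concatenation (union with repetition) of configurations, and on such a configuration $p_{\aa,\bb}$ evaluates to $\sum_i \xx_i^{\aa}\yy_i^{\bb}$; since a sum over a disjoint union of index sets is the sum of the partial sums, $\varphi_{X+Y}(p_{\aa,\bb}) = \varphi_X(p_{\aa,\bb}) + \varphi_Y(p_{\aa,\bb})$. (The only subtlety is the basepoint $(\zero,\zero)$ introduced by the maps $\iota_m$; it contributes $0$ to every $p_{\aa,\bb}$ with $(\aa,\bb) \neq (\zero,\zero)$ and so is irrelevant.) Equivalently, one may phrase this scheme-theoretically: the comultiplication $+^*$ carries $p_{\aa,\bb}$ in $A[Sp^{m+m'}(\A_n^1)]$ to $p_{\aa,\bb} \otimes 1 + 1 \otimes p_{\aa,\bb}$ in $A[Sp^{m}(\A_n^1)] \otimes A[Sp^{m'}(\A_n^1)]$, i.e. the power sums are primitive.

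Granting additivity, the proposition is immediate: from $X_1 + X_2' = X_1' + X_2$ we obtain $\varphi_{X_1}(p_{\aa,\bb}) + \varphi_{X_2'}(p_{\aa,\bb}) = \varphi_{X_1'}(p_{\aa,\bb}) + \varphi_{X_2}(p_{\aa,\bb})$, hence $\varphi_{X_1}(p_{\aa,\bb}) - \varphi_{X_2}(p_{\aa,\bb}) = \varphi_{X_1'}(p_{\aa,\bb}) - \varphi_{X_2'}(p_{\aa,\bb})$, which is the required equality $\varphi_{(X_1,X_2)}(p_{\aa,\bb}) = \varphi_{(X_1',X_2')}(p_{\aa,\bb})$; since both sides are ring homomorphisms and the $p_{\aa,\bb}$ generate, $\varphi_{(X_1,X_2)} = \varphi_{(X_1',X_2')}$, so the map descends to $K\left(Sp^{\infty}(\R^n \times \R^n)\right)$. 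I expect the only real work to be in making the additivity statement precise — that on $\R$-points the monoid law is concatenation of configurations and that $p_{\aa,\bb}$ is literally the corresponding sum over points — but this is bookkeeping rather than a genuine obstacle; in particular nothing here uses cancellativity or any finiteness hypothesis beyond what is built into $Sp^{\infty}(\R^n \times \R^n)$.
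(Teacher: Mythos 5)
Your proof is correct and rests on the same essential observation as the paper's, namely the additivity of the power sums $p_{\aa,\bb}$ under the monoid operation; the paper's one-line proof just notes that $\varphi_{(X_1+Y,\,X_2+Y)} = \varphi_{(X_1,X_2)}$ ``by definition,'' which implicitly uses exactly the additivity you verify, plus the (cancellative-monoid) fact that every instance of $\simeq$ reduces to adding a common $Y$ to both coordinates. Your version spells out the additivity and the basepoint subtlety and checks the general relation $X_1 + X_2' = X_1' + X_2$ directly, which is slightly more self-contained but not a different argument.
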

\begin{proof}
It suffices to observe that for $X_1, X_2, Y \in Sp^{\infty}(\R^n \times \R^n)$, the following equality holds (by definition): \[\varphi_{(X_1 + Y,\ X_2 + Y)} = \varphi_{(X_1, X_2)}.\] 
\end{proof}

By combining Theorem \ref{algebraic_theorem} and Proposition \ref{varphi_sp}, we obtain:

\begin{proposition}
The map \[Sp^{\infty}(\R^n \times \R^n) \longrightarrow \left( A_{fin}[Sp^{\infty}(\A_n^1)] \to \R \right)\] defined by \[X \mapsto \varphi_X\] descends to a map \[\widetilde{Sp}(\R^n \times \R^n) \longrightarrow \left( A_{fin}[\widetilde{Sp}(\A_n^1)] \to \R \right).\] Moreover, the map \[\widetilde{Sp}(\R^n \times \R^n) \times \widetilde{Sp}(\R^n \times \R^n) \longrightarrow \left( A_{fin}[\widetilde{Sp}(\A_n^1)] \to \R \right)\] defined by \[(X_1, X_2) \mapsto \varphi_{(X_1, X_2)}\] factors through $K(\widetilde{Sp}(\R^n \times \R^n))$, and can thus be viewed as a map \[K(\widetilde{Sp}(\R^n \times \R^n)) \longrightarrow \left( A_{fin}[\widetilde{Sp}(\A_n^1)] \to \R \right).\]
\end{proposition}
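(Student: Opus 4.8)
The plan is to reduce both assertions to the explicit set of algebra generators of $A_{fin}[\widetilde{Sp}(\A_n^1)]$ produced by Theorem \ref{algebraic_theorem}, together with the fact that power sums are additive, following the pattern of the proof of Proposition \ref{varphi_sp}. With this in mind, the descent of $X \mapsto \varphi_X$ is essentially immediate. Since $A_{fin}[\widetilde{Sp}(\A_n^1)]$ is by construction a subring of $A_{fin}[Sp^\infty(\A_n^1)]$, restricting each ring homomorphism $\varphi_X$ to this subring yields a ring homomorphism $A_{fin}[\widetilde{Sp}(\A_n^1)] \to \R$, which shrinks the target as required; and the defining property of $A_{fin}[\widetilde{Sp}(\A_n^1)]$ --- that it consists exactly of those $f$ with $\varphi_{X_1}(f) = \varphi_{X_2}(f)$ whenever $X_1 \simeq X_2$ --- says precisely that this restriction depends only on the class $[X] \in \widetilde{Sp}(\R^n \times \R^n)$. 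Hence $[X] \mapsto \varphi_X|_{A_{fin}[\widetilde{Sp}(\A_n^1)]}$ is the desired descent.

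For the ``moreover'' part I would first invoke Theorem \ref{algebraic_theorem}, which identifies $A_{fin}[\widetilde{Sp}(\A_n^1)]$ as the $\R$-algebra generated by the power sums $p_{\aa, \bb}$ with $\aa \geq \mathbf{1}$. For each such pair, $X \mapsto \varphi_X(p_{\aa, \bb})$ is additive on $Sp^\infty(\R^n \times \R^n)$ (a power sum evaluated on the union, with repetition, of two point configurations equals the sum of the two evaluations), and, since these particular $p_{\aa,\bb}$ lie in $A_{fin}[\widetilde{Sp}(\A_n^1)]$, the first paragraph shows this map factors through $\widetilde{Sp}(\R^n \times \R^n)$; it therefore descends to a monoid homomorphism $\pi_{\aa, \bb} : \widetilde{Sp}(\R^n \times \R^n) \to (\R, +)$ and so extends uniquely to a group homomorphism $\pi_{\aa, \bb} : K(\widetilde{Sp}(\R^n \times \R^n)) \to \R$ (here one records that $\widetilde{Sp}(\R^n \times \R^n)$ is cancellative, being the free commutative monoid on the positive-volume cubes). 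Now, for $(X_1, X_2) \in Sp^\infty(\R^n \times \R^n) \times Sp^\infty(\R^n \times \R^n)$ with image $\kappa$ in $K(\widetilde{Sp}(\R^n \times \R^n))$ and any generator $p_{\aa, \bb}$ of $A_{fin}[\widetilde{Sp}(\A_n^1)]$,
\[
\varphi_{(X_1, X_2)}(p_{\aa, \bb}) \;=\; \varphi_{X_1}(p_{\aa, \bb}) - \varphi_{X_2}(p_{\aa, \bb}) \;=\; \pi_{\aa, \bb}([X_1]) - \pi_{\aa, \bb}([X_2]) \;=\; \pi_{\aa, \bb}(\kappa).
\]
Because the restriction of $\varphi_{(X_1, X_2)}$ to $A_{fin}[\widetilde{Sp}(\A_n^1)]$ is a ring homomorphism, it is determined by its values on the generators, which by the display depend only on $\kappa$. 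This at once shows that $(X_1, X_2) \mapsto \varphi_{(X_1, X_2)}|_{A_{fin}[\widetilde{Sp}(\A_n^1)]}$ is well-defined on $\widetilde{Sp}(\R^n \times \R^n) \times \widetilde{Sp}(\R^n \times \R^n)$ and that it factors through $K(\widetilde{Sp}(\R^n \times \R^n))$. One could also phrase the last point exactly as in Proposition \ref{varphi_sp}: additivity of the $\pi_{\aa, \bb}$ gives $\varphi_{(X_1 + Y, X_2 + Y)} = \varphi_{(X_1, X_2)}$ on $A_{fin}[\widetilde{Sp}(\A_n^1)]$, and a map out of $M \times M$ invariant under the diagonal action of a cancellative monoid $M$ automatically descends to $K(M)$.

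The step I expect to be the real content is the appeal to Theorem \ref{algebraic_theorem}; everything else is bookkeeping with restrictions of ring homomorphisms. The reduction to generators with $\aa \geq \mathbf{1}$ cannot be avoided: a general power sum $p_{\aa, \bb}$ with some $a_j = 0$ does not vanish on volume-zero cubes, so it does not descend to $\widetilde{Sp}(\R^n \times \R^n)$ and its $\varphi$-values cannot be organized into a homomorphism out of $K(\widetilde{Sp}(\R^n \times \R^n))$; it is precisely Theorem \ref{algebraic_theorem} that guarantees no such generator is ever required.
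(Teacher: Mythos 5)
Your proof is correct and matches the paper's (unwritten) argument: the paper states only that the proposition follows ``by combining Theorem~\ref{algebraic_theorem} and Proposition~\ref{varphi_sp},'' and your write-up is precisely that combination made explicit---the first descent is definitional from the construction of $A_{fin}[\widetilde{Sp}(\A_n^1)]$, and the $K$-factoring reduces via Theorem~\ref{algebraic_theorem} to additivity of the power sums $p_{\aa,\bb}$ with $\aa \geq \mathbf{1}$, exactly as in Proposition~\ref{varphi_sp}. Your closing remark that Theorem~\ref{algebraic_theorem} cannot be avoided (because $\varphi_{(X_1,X_2)}$ agrees with $\varphi_{X_1}-\varphi_{X_2}$ only on power-sum generators, so one must know the subring is generated by power sums) identifies correctly why that theorem is a genuine hypothesis here and not mere decoration.
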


\subsection{Injectivity Lemmata}

Having defined maps $\varphi_X$ for $X$ in various spaces (e.g., $Sp^{\infty}(\R^n \times \R^n)$, $K(\widetilde{Sp}(\R^n \times \R^n))$), we now show that $\varphi_X \neq \varphi_Y$ given $X \neq Y$. These lemmata generalize a result of \cite{algebraic_functions}. 

\begin{lemma} \label{finite_injectivity}
Let $X,\ Y \in Sp^m(\R^n \times \R^n)$ such that $X \neq Y$. Then $\varphi_X \neq \varphi_Y$ (as maps $A[Sp^m(\A_n^1)] \to \R$).
\end{lemma}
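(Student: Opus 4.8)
The plan is to recover the multiset $X \in Sp^m(\R^n \times \R^n)$ from the collection of values $\{\varphi_X(p_{\aa, \bb, m})\}_{\aa, \bb \in \N^n}$, so that if $\varphi_X = \varphi_Y$ on all of $A[Sp^m(\A_n^1)]$ — equivalently, on all the power sums $p_{\aa, \bb, m}$, since these generate the ring by Lemma \ref{multi_symmetric_lemma} — then $X = Y$ as elements of $Sp^m(\R^n \times \R^n)$. Write $X = \{(\xx_1, \yy_1), \dots, (\xx_m, \yy_m)\}$ with each $(\xx_i, \yy_i) \in \R^n \times \R^n \cong \R^{2n}$; then $\varphi_X(p_{\aa, \bb, m}) = \sum_{i=1}^m \xx_i^{\aa} \yy_i^{\bb}$ is precisely the value of the multi-symmetric power sum of bidegree $(\aa, \bb)$ on the $2n$-tuple of coordinates. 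So the lemma is the statement that a finite multiset of points in $\R^{2n}$ is determined by its multi-symmetric power sums — i.e. that the evaluation map from $Sp^m(\A_n^1)(\R) = (\A_n^m)^{S_m}(\R)$ down to actual unordered $m$-tuples of $\R$-points is injective on genuine $\R$-points.

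The cleanest way to run this is by separating coordinates one at a time. First I would use the single-variable power sums $p_{(a, 0, \dots, 0), \zero, m} = \sum_i (x_{i1})^a$ for $a = 0, 1, \dots, m$: by Newton's identities over the characteristic-zero field $\R$, these determine the elementary symmetric functions of the scalars $x_{11}, \dots, x_{m1}$, hence determine that multiset of first coordinates. More generally, for a fixed choice of one of the $2n$ coordinate slots, the power sums supported in that slot recover the multiset of values appearing in that slot. The subtlety is matching up the slots: knowing each coordinate-multiset separately is not enough to reconstruct $X$. To fix this, I would induct on $m$, peeling off one point at a time. Order the points so that $(\xx_i, \yy_i)$ is non-increasing in, say, the lexicographic order $\leq_{\lex}$ on $\R^{2n}$; the power sums $p_{\aa, \bb, m}$ with the exponent vector $(\aa, \bb)$ ranging over a cofinal family detect the lexicographically largest point $(\xx_1, \yy_1)$ as a limit of ratios $\varphi_X(p_{N\mathbf{e}})^{1/N}$-type expressions, or more simply: take $N$ large and look at $\varphi_X(p_{(N, 0, \dots), \zero, m})$, $\varphi_X(p_{(N,0,\dots,0), (1,0,\dots,0), m})$, etc.; dividing the second by the first and letting $N \to \infty$ picks out $y_{11}$ for the point(s) with the largest $x_{11}$, and iterating over coordinates pins down one extremal point exactly. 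Subtract that point (i.e. pass to the power sums of $X \setminus \{(\xx_1,\yy_1)\}$, which are $\varphi_X(p_{\aa, \bb, m}) - \xx_1^{\aa}\yy_1^{\bb}$) and invoke the inductive hypothesis with $m-1$.

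The main obstacle I expect is precisely this reconstruction-of-an-extremal-point step: making rigorous the claim that a finite multiset of points in $\R^{2n}$ is recoverable from its multi-symmetric power sums, i.e. that multi-symmetric polynomials separate genuine $\R$-points of $(\A_n^m)^{S_m}$. This is a standard fact (the multi-symmetric power sums, or equivalently the polarized elementary symmetric functions, generate the invariant ring, and over an infinite field the invariant ring of a finite group separates orbits of closed points), so I would state it as such with a one-line justification — apply the separating power sums coordinate-by-coordinate plus a limiting argument to break ties — rather than belabor it. It is worth emphasizing, as Remark \ref{sp_remark} already flags, that the hypothesis $X, Y \in Sp^m(\R^n \times \R^n)$ — that both are \emph{actual} $\R$-points coming from honest $m$-tuples of real points, not arbitrary $\R$-points of the symmetrized scheme — is exactly what is needed: the analogous statement fails for general $\R$-points of $Sp^m(\A_n^1)$, since over $\R$ not every $\R$-point of the quotient lifts to an $\R$-point upstairs.
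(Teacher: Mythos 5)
Your overall strategy is sound and, in fact, the paper itself notes it: the remark following this lemma observes that it ``also follows from the Nullstellensatz and the fact that for any real variety $V$, the affine coordinate ring $A[V]$ separates the (real) points of $V$.'' That is precisely the abstract fact you invoke — that the multisymmetric invariants separate genuine $\R$-orbits — and if you simply cite it your proof is correct.

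However, the paper's actual proof is much more concrete and sidesteps all the technicalities you flag. Given $Z = \{(\zz_i, \ww_i)\}_{i=1}^m$, it exhibits the single invariant polynomial
\[
f(\xx, \yy) = \prod_{\sigma \in S_m} \left( \sum_{i=1}^m \sum_{j=1}^n \left( x_{\sigma(i),j} - z_{i,j} \right)^2 + \left( y_{\sigma(i),j} - w_{i,j} \right)^2 \right),
\]
which is $S_m$-invariant by construction (it is a product over all $\sigma$), hence lies in $A[Sp^m(\A_n^1)]$, and whose vanishing set among real $m$-tuples is exactly the single orbit $Z$: a factor vanishes iff $W$ matches $\sigma(Z)$ exactly, and over $\R$ a sum of squares vanishes only when every term does. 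So $\varphi_W(f) = 0 \Longleftrightarrow W = Z$, and choosing $Z = X$ gives $\varphi_X(f) = 0 \neq \varphi_Y(f)$. This is one polynomial, no limits, no ordering.

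The detailed limiting argument you sketch has genuine gaps that you should be aware of. First, the dominance claim for $\varphi_X(p_{(N,0,\dots),\zero,m})$ as $N \to \infty$ fails when coordinates can be negative: if $x_{11} = 5$ and $x_{21} = -5$, then $\sum_i x_{i1}^N$ oscillates between $0$ and $2 \cdot 5^N$ depending on the parity of $N$, and the ratio trick picks out an average of $y_{11}$ and $y_{21}$ rather than either one. Restricting to even $N$ tells you $|x_{i1}|$-information but cannot distinguish $\pm x_{i1}$, which is precisely the kind of sign ambiguity that makes the real (non-algebraically-closed) setting delicate. Second, you defer the tie-breaking problem with ``plus a limiting argument to break ties,'' but this is where the real work is: when several points share the largest $x_{11}$, the iterated ratio only sees their average in the next slot. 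Resolving this cleanly requires something like the lexicographic peeling used in Lemma~\ref{recovering-technical} of the paper, where the authors work with strictly \emph{positive} quantities precisely to avoid the sign problem. In short: keep the one-line citation of the separating-points fact as your proof, or use the paper's explicit $f$; drop the $N \to \infty$ reconstruction unless you are prepared to add hypotheses or considerably more care.
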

\begin{proof}
Let $Z = \{(\zz_{i}, \ww_{i})\}_{i=1}^m$. Let $f \in A[Sp^m(\A_n^1)]$ be defined by \[f(\xx, \yy) = \prod_{\sigma \in S_m} \left( \sum_{i=1}^m \sum_{j = 1}^n \left( x_{\sigma(i),j} - z_{i, j} \right)^2 + \left( y_{\sigma(i),j} - w_{i, j} \right)^2 \right).\] Then for any $W \in Sp^m(\R^n \times \R^n)$, $\varphi_{W}(f) = 0$ if and only if $W = Z$.
\end{proof}

\begin{remark}
We have presented a concrete proof to Lemma~\ref{finite_injectivity}. However, this lemma also follows from the Nullstellensatz and the fact that for any real variety $V$, the affine coordinate ring $A[V]$ separates the (real) points of $V$.
\end{remark}

\begin{lemma} \label{infinite_injectivity}
Let $X,\ Y \in Sp^{\infty}(\R^n \times \R^n)$ such that $X \neq Y$. Then $\varphi_X \neq \varphi_Y$ (as maps $A_{fin}[Sp^{\infty}(\A_n^1)] \to \R$).
\end{lemma}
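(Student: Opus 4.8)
The plan is to bootstrap from the finite-level statement, Lemma~\ref{finite_injectivity}. First I would observe that $Sp^{\infty}(\R^n \times \R^n) = \varinjlim_m Sp^m(\R^n \times \R^n)$ is a directed colimit along the maps $\iota_m$, each of which adjoins one copy of the basepoint $(\zero,\zero)$ to a size-$m$ multiset; adjoining a basepoint to distinct multisets produces distinct multisets (compare multiplicities), so every $\iota_m$, hence every structure map $Sp^m(\R^n \times \R^n) \into Sp^{\infty}(\R^n \times \R^n)$, is injective. Choosing $m$ large enough that both $X$ and $Y$ lie in the image of $Sp^m(\R^n \times \R^n)$ therefore yields representatives $\widetilde X, \widetilde Y \in Sp^m(\R^n \times \R^n)$ with $\widetilde X \neq \widetilde Y$.

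Next I would apply Lemma~\ref{finite_injectivity} to obtain $f \in A[Sp^m(\A_n^1)]$ with $\varphi_{\widetilde X}(f) \neq \varphi_{\widetilde Y}(f)$, and then lift $f$ to a $K$-finite element of $A_{fin}\left[Sp^{\infty}(\A_n^1)\right]$. Concretely: by Lemma~\ref{multi_symmetric_lemma}, $A[Sp^m(\A_n^1)] = A[\A_n^m]^{S_m}$ is generated as an $\R$-algebra by the finite multi-symmetric power sums $p_{\aa,\bb,m}$ with $(\aa,\bb)\neq(\zero,\zero)$ (the remaining one being the scalar $m$), so I may write $f = P\big(p_{\aa_1,\bb_1,m},\dots,p_{\aa_r,\bb_r,m}\big)$ for an ordinary polynomial $P$ over $\R$. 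Setting $\widetilde f := P\big(p_{\aa_1,\bb_1},\dots,p_{\aa_r,\bb_r}\big)$, we have $\widetilde f \in \R[p_{\aa,\bb}] = A_{fin}\left[Sp^{\infty}(\A_n^1)\right]$ (using the description of the $K$-finite elements recorded after Lemma~\ref{multi_symmetric_lemma}), and the inverse-limit structure map $\pi_m \colon A_{fin}\left[Sp^{\infty}(\A_n^1)\right] \to A[Sp^m(\A_n^1)]$, being the ring homomorphism sending $p_{\aa,\bb} \mapsto p_{\aa,\bb,m}$, satisfies $\pi_m(\widetilde f) = f$.

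Finally I would check the evaluation compatibility $\varphi_X = \varphi_{\widetilde X}\circ \pi_m$ and $\varphi_Y = \varphi_{\widetilde Y}\circ \pi_m$ as maps $A_{fin}\left[Sp^{\infty}(\A_n^1)\right]\to\R$: both sides of each identity are $\R$-algebra homomorphisms, and they agree on every generator $p_{\aa,\bb}$ because $\varphi_X(p_{\aa,\bb})$ is the finite power sum $\sum_i \widetilde x_i^{\aa}\,\widetilde y_i^{\bb}$ (the basepoint summands vanishing since $(\aa,\bb)\neq(\zero,\zero)$), which equals $\varphi_{\widetilde X}(p_{\aa,\bb,m}) = \varphi_{\widetilde X}(\pi_m(p_{\aa,\bb}))$, and similarly for $Y$. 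Then $\varphi_X(\widetilde f) = \varphi_{\widetilde X}(f) \neq \varphi_{\widetilde Y}(f) = \varphi_Y(\widetilde f)$, so $\varphi_X \neq \varphi_Y$. The only point requiring care — more bookkeeping than substance — is this last compatibility step together with the lifting: one must pin down that the $\R$-point $\varphi_X$ attached (via Lemma~\ref{sp_lemma}) to a finitely supported multiset really does send $p_{\aa,\bb}$ to the corresponding finite power sum, so that it factors through the level-$m$ projection; once that is in place the argument is purely formal.
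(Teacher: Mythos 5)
Your proof is correct and takes essentially the same route as the paper: represent $X,Y$ at a finite level, invoke Lemma~\ref{finite_injectivity}, and lift the separating function through the generator correspondence $p_{\aa,\bb,m}\leftrightarrow p_{\aa,\bb}$. You are in fact a bit more careful than the paper on two minor points — you write $f_{fin}$ as a polynomial in the power sums rather than a mere sum, and you explicitly verify the evaluation compatibility $\varphi_X=\varphi_{\widetilde X}\circ\pi_m$ on generators — but the underlying argument is the same.
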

\begin{proof}
We may represent $X$ and $Y$ by elements $X_{fin},\ Y_{fin} \in Sp^m(\R^n \times \R^n)$. By Lemma~\ref{finite_injectivity}, there exists $f_{fin} \in A[Sp^m(\A_n^1)]$ such that $\varphi_{X_{fin}}(f_{fin}) \neq \varphi_{Y_{fin}}(f_{fin})$. By Lemma~\ref{multi_symmetric_lemma}, we may write $f_{fin}$ as a finite sum $f_{fin} = \sum_i p_{\aa_i, \bb_i, m}$. Define $f \in A_{fin}[Sp^{\infty}(\A_n^1)]$ by $f = \sum_i p_{\aa_i, \bb_i}$. Then \[\varphi_{X}(f) = \varphi_{X_{fin}}(f_{fin}) \neq \varphi_{Y_{fin}}(f_{fin}) = \varphi_{Y}(f).\]
\end{proof}

\begin{lemma} \label{group_infinite_injectivity}
Let $X,\ Y \in K(Sp^{\infty}(\R^n \times \R^n))$ such that $X \neq Y$. Then $\varphi_X \neq \varphi_Y$ (as maps $A_{fin}[Sp^{\infty}(\A_n^1)] \to \R$).
\end{lemma}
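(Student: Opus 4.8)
The plan is to reduce this to the injectivity statement for $Sp^\infty(\R^n \times \R^n)$ already established in Lemma~\ref{infinite_injectivity}, exploiting the fact that $\varphi_{(X_1,X_2)}$ was \emph{defined} on generators by $\varphi_{(X_1,X_2)}(p_{\aa,\bb}) = \varphi_{X_1}(p_{\aa,\bb}) - \varphi_{X_2}(p_{\aa,\bb})$. First I would write $X = (X_1, X_2)$ and $Y = (Y_1, Y_2)$ for representatives in $Sp^\infty(\R^n \times \R^n) \times Sp^\infty(\R^n \times \R^n)$, so that by Proposition~\ref{varphi_sp} the map $\varphi_X$ is $\varphi_{(X_1,X_2)}$ and depends only on the class of $(X_1,X_2)$ in $K(Sp^\infty(\R^n\times\R^n))$. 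The hypothesis $X \neq Y$ in the group completion means precisely that $X_1 + Y_2 \neq Y_1 + X_2$ as elements of $Sp^\infty(\R^n\times\R^n)$ (using that this monoid is cancellative, which is implicit in the definition of $K(M)$).

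Next I would apply Lemma~\ref{infinite_injectivity} to the pair $X_1 + Y_2$ and $Y_1 + X_2$: since these are distinct elements of $Sp^\infty(\R^n\times\R^n)$, there is some $f \in A_{fin}[Sp^\infty(\A_n^1)]$ with $\varphi_{X_1 + Y_2}(f) \neq \varphi_{Y_1 + X_2}(f)$. By Lemma~\ref{multi_symmetric_lemma} I may take $f$ to be one of the power-sum generators $p_{\aa,\bb}$ — or rather, expanding $f$ in the free generators $p_{\aa,\bb}$ and using that $\varphi_W$ is a ring homomorphism, it is cleaner to first observe that $\varphi$ is additive on these generators: for any $U, V \in Sp^\infty(\R^n\times\R^n)$ one has $\varphi_{U+V}(p_{\aa,\bb}) = \varphi_U(p_{\aa,\bb}) + \varphi_V(p_{\aa,\bb})$, since $p_{\aa,\bb}$ is a sum over points and the multiset $U+V$ is the disjoint union of the multisets $U$ and $V$. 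Hence on generators
\begin{align*}
\varphi_{X_1+Y_2}(p_{\aa,\bb}) &= \varphi_{X_1}(p_{\aa,\bb}) + \varphi_{Y_2}(p_{\aa,\bb}), \\
\varphi_{Y_1+X_2}(p_{\aa,\bb}) &= \varphi_{Y_1}(p_{\aa,\bb}) + \varphi_{X_2}(p_{\aa,\bb}).
\end{align*}

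Therefore $\varphi_{X_1+Y_2}(p_{\aa,\bb}) \neq \varphi_{Y_1+X_2}(p_{\aa,\bb})$ for some $\aa,\bb$ is equivalent to
\[
\varphi_{X_1}(p_{\aa,\bb}) - \varphi_{X_2}(p_{\aa,\bb}) \neq \varphi_{Y_1}(p_{\aa,\bb}) - \varphi_{Y_2}(p_{\aa,\bb}),
\]
which says exactly $\varphi_X(p_{\aa,\bb}) \neq \varphi_Y(p_{\aa,\bb})$. Since the $p_{\aa,\bb}$ generate $A_{fin}[Sp^\infty(\A_n^1)]$, this gives $\varphi_X \neq \varphi_Y$, completing the proof. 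The only genuinely substantive point — and the step I would state carefully rather than wave at — is the additivity $\varphi_{U+V}(p_{\aa,\bb}) = \varphi_U(p_{\aa,\bb}) + \varphi_V(p_{\aa,\bb})$, which is what lets the difference structure of $K(Sp^\infty(\R^n\times\R^n))$ pass through $\varphi$; everything else is bookkeeping with the definition of the group completion and an invocation of Lemma~\ref{infinite_injectivity}. I do not expect any real obstacle here, but if one wanted to avoid even that additivity remark, one could instead argue directly that $\varphi_X = \varphi_Y$ forces $\varphi_{X_1} + \varphi_{Y_2}$ and $\varphi_{Y_1} + \varphi_{X_2}$ to agree on all generators hence on all of $A_{fin}[Sp^\infty(\A_n^1)]$, and then note that $X_1 + Y_2$ and $Y_1 + X_2$ induce the same $\R$-point, contradicting Lemma~\ref{infinite_injectivity}.
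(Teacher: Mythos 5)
Your proposal is correct and follows essentially the same route as the paper's own proof: pick coset representatives, observe $X_1 + Y_2 \neq Y_1 + X_2$, invoke Lemma~\ref{infinite_injectivity}, pass to a power-sum generator via the fact that the $p_{\aa,\bb}$ generate $A_{fin}[Sp^\infty(\A_n^1)]$, and use the additivity of $\varphi_{-}(p_{\aa,\bb})$ under the monoid operation. The only (cosmetic) difference is that the paper isolates the passage from an arbitrary $f$ to a generator $p_{\aa_0,\bb_0}$ as an explicit intermediate step, whereas you absorb it into the ``agree on generators implies agree everywhere'' remark at the end.
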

\begin{proof}
Choose coset representatives for $X$ and $Y$ in $Sp^{\infty}(\R^n \times \R^n) \times Sp^{\infty}(\R^n \times \R^n)$, and write $X = (X_+, X_-)$ and $Y = (Y_+, Y_-)$. Since $X_+ + Y_- \neq Y_+ + X_-$, there exists by Lemma~\ref{infinite_injectivity} some $f^\prime \in A_{fin}[Sp^{\infty}(\A_n^1)]$ such that $\varphi_{X_+ + Y_-}(f^{\prime}) \neq \varphi_{Y_+ + X_-}(f^{\prime})$.

Since $A_{fin}[Sp^{\infty}(\A_n^1)]$ is generated as an algebra by the power sums $p_{\aa, \bb}$, there exists a power sum $f = p_{\aa_0, \bb_0} \in A_{fin}[Sp^{\infty}(\A_n^1)]$ such that $\varphi_{X_+ + Y_-}(f) \neq \varphi_{Y_+ + X_-}(f)$. Because $f$ is a power sum, \[\varphi_{X_+ + Y_-}(f) = \varphi_{X_+}(f) + \varphi_{Y_-}(f) \qquad \textup{and} \qquad \varphi_{Y_+ + X_-}(f) = \varphi_{Y_+}(f) + \varphi_{X_-}(f).\] Combining these equalities and inequalities shows that \[\varphi_{X_+}(f) - \varphi_{X_-}(f) \neq \varphi_{Y_+}(f) - \varphi_{Y_-}(f),\] and hence
\begin{align*}
\varphi_{X}(f) &= \varphi_{(X_+, X_-)}(f) \\ &= \varphi_{X_+}(f) - \varphi_{X_-}(f) \\ &\neq \varphi_{Y_+}(f) - \varphi_{Y_-}(f) \\ &= \varphi_{(Y_+, Y_-)}(f) \\ &= \varphi_{Y}(f).
\end{align*}
\end{proof}

\begin{lemma} \label{sp_infinite_injectivity}
Let $X,\ Y \in \widetilde{Sp}(\R^n \times \R^n)$ such that $X \neq Y$. Then $\varphi_X \neq \varphi_Y$ (as maps $A_{fin}[\widetilde{Sp}(\A_n^1)] \to \R$).
\end{lemma}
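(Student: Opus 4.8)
The plan is to mirror the argument of Lemma~\ref{infinite_injectivity}, using that $\widetilde{Sp}(\R^n\times\R^n)$ is (by construction) a quotient of $\coprod_m Sp^m(\R^n\times\R^n)$ and that $A_{fin}[\widetilde{Sp}(\A_n^1)]$ is a subring of $A_{fin}[Sp^{\infty}(\A_n^1)]$ whose elements, by definition, respect the equivalence $\simeq$. First I would choose representatives $\widetilde X,\widetilde Y\in Sp^m(\R^n\times\R^n)$ of $X$ and $Y$ for some common $m$, where (crucially) I normalize the representatives by \emph{deleting every summand $(\zz,\zz')$ having a coordinate $z_i=z'_i$}: this is possible because any such degenerate summand is $\simeq$-equivalent to its deletion, and padding with additional copies of the basepoint $(\zero,\zero)$ lets me arrange both representatives to lie in the same $Sp^m$. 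Since $X\neq Y$ in $\widetilde{Sp}(\R^n\times\R^n)$, these reduced representatives are genuinely distinct elements of $Sp^m(\R^n\times\R^n)$ — here I must invoke that two reduced (degeneracy-free) configurations are $\simeq$-equivalent only if they are already equal, which follows from the explicit description of the generating relations of $\simeq$.

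Next I would apply Lemma~\ref{finite_injectivity} to get $f_{fin}\in A[Sp^m(\A_n^1)]$ with $\varphi_{\widetilde X}(f_{fin})\neq\varphi_{\widetilde Y}(f_{fin})$, write $f_{fin}=\sum_i p_{\aa_i,\bb_i,m}$ via Lemma~\ref{multi_symmetric_lemma}, and pass to $f=\sum_i p_{\aa_i,\bb_i}\in A_{fin}[Sp^{\infty}(\A_n^1)]$ as in Lemma~\ref{infinite_injectivity}. The remaining issue is that this $f$ need not lie in $A_{fin}[\widetilde{Sp}(\A_n^1)]$, so I cannot evaluate it on $X,Y\in\widetilde{Sp}(\R^n\times\R^n)$ directly. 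To fix this I would instead work with the monomials in the $\eta,\xi$-coordinates: by Theorem~\ref{algebraic_theorem}, $A_{fin}[\widetilde{Sp}(\A_n^1)]$ is freely generated by the power sums $p_{\aa,\bb}$ with all $a_j\geq 1$. Because the chosen representatives are degeneracy-free, each $\eta_{ij}=z'_{ij}-z_{ij}\neq 0$, so evaluation of the full power-sum symmetric functions is unaffected if I discard the ``degenerate'' generators; concretely, for degeneracy-free $W$, the values $\varphi_W(p_{\aa,\bb})$ for $\aa\geq\mathbf1$ already determine $W$ up to $S_m$. Thus I would redo the separating-polynomial construction of Lemma~\ref{finite_injectivity} using only coordinates that are nonvanishing on reduced configurations — e.g. replace the squared-distance factors by expressions symmetric in the data and expressible through the $p_{\aa,\bb}$ with $\aa\geq\mathbf1$ — obtaining $f_{fin}\in R_n^m\cap A[Sp^m(\A_n^1)]$ that still separates $\widetilde X$ from $\widetilde Y$.

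Then, by Proposition~\ref{sp_inv_lim}, lifting $f_{fin}\in R_n^m\cap A[Sp^m(\A_n^1)]$ to the inverse limit produces $f\in\varprojlim_m\big(R_n^m\cap A[Sp^m(\A_n^1)]\big)=A_{fin}[\widetilde{Sp}(\A_n^1)]$, and since $f$ depends only on the $\simeq$-classes, $\varphi_X(f)=\varphi_{\widetilde X}(f_{fin})\neq\varphi_{\widetilde Y}(f_{fin})=\varphi_Y(f)$, proving $\varphi_X\neq\varphi_Y$. The main obstacle I anticipate is precisely the middle step: producing a genuinely degeneracy-insensitive separating function, i.e.\ showing that on the reduced locus the power sums $p_{\aa,\bb}$ with $a_j\geq1$ already separate distinct $S_m$-orbits. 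I would handle this by a Vandermonde/Newton-type argument — the extra factor $\prod_j\eta_{ij}$ built into every allowed monomial is invertible on the degeneracy-free locus, so multiplying an arbitrary separating polynomial by a large power of $\prod_{i,j}\eta_{ij}$ brings it into $R_n^m$ while preserving the strict inequality of its values at $\widetilde X$ and $\widetilde Y$, provided $\varphi_{\widetilde X}\big(\prod\eta_{ij}\big)$ and $\varphi_{\widetilde Y}\big(\prod\eta_{ij}\big)$ are nonzero, which holds exactly because the representatives are reduced.
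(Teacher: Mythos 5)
Your overall strategy — produce a separating element of $R_n^m\cap A[Sp^m(\A_n^1)]$ and lift it via Proposition~\ref{sp_inv_lim} — matches the paper's, and the idea of forcing the separating function to vanish on degenerate loci by attaching a factor of $\prod_{i,j}\eta_{ij}$ is exactly the paper's idea ($g_{fin}$). However, there is a genuine gap in your normalization step, and a second gap in the ``preserves strict inequality'' claim.

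\textbf{Gap 1: you cannot make both representatives degeneracy-free in a common $Sp^m$.} Your proposal is to delete every summand $(\zz,\zz')$ with some $z_i=z'_i$, then pad with copies of the basepoint $(\zero,\zero)$ so that $\widetilde X$ and $\widetilde Y$ live in the same $Sp^m$. But the basepoint $(\zero,\zero)$ is itself a degenerate summand, so padding immediately destroys the degeneracy-free property you just arranged. Unless $X$ and $Y$ happen to have the same number of essential (non-degenerate) summands, whichever representative had fewer will acquire degenerate padding, making $\varphi(\prod\eta_{ij})=0$ on it — precisely the situation you explicitly assume does not occur at the end of your argument. The paper handles this asymmetrically: it picks $m$ \emph{minimal} across joint representatives, which gives only the weaker but sufficient fact that $\varphi_{X_{fin}}(g_{fin})$ and $\varphi_{Y_{fin}}(g_{fin})$ are \emph{not both zero}, and then argues by cases.

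\textbf{Gap 2: multiplying by $\prod\eta_{ij}$ does not automatically preserve a strict inequality of values.} Even in the setting where both $\varphi_{\widetilde X}(g_{fin})$ and $\varphi_{\widetilde Y}(g_{fin})$ are nonzero, if they are \emph{unequal} — say $\varphi_{\widetilde X}(g_{fin})=a$, $\varphi_{\widetilde Y}(g_{fin})=b$, $a\neq b$ — and $\varphi_{\widetilde X}(h_{fin})\neq\varphi_{\widetilde Y}(h_{fin})$, you can still have $a\cdot\varphi_{\widetilde X}(h_{fin})=b\cdot\varphi_{\widetilde Y}(h_{fin})$. So $g_{fin}h_{fin}$ need not separate. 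The paper's case split resolves this cleanly: if $\varphi_{X_{fin}}(g_{fin})\neq\varphi_{Y_{fin}}(g_{fin})$, take $f_{fin}=g_{fin}$ and you are done (no appeal to Lemma~\ref{finite_injectivity} at all); only when $\varphi_{X_{fin}}(g_{fin})=\varphi_{Y_{fin}}(g_{fin})\neq 0$ does one multiply by a separating $h_{fin}$, and then the common nonzero scalar makes the product separate. Your last paragraph gestures at a heavier Vandermonde/Newton argument about the power sums $p_{\aa,\bb}$ with $\aa\geq\mathbf 1$ separating $S_m$-orbits on the reduced locus; that may be true but is not carried out, and is in any case unnecessary once the two points above are corrected.
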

\begin{proof}
By the definition of $\widetilde{Sp}(\R^n \times \R^n)$, we can represent any $Z \in \widetilde{Sp}(\R^n \times \R^n)$ by some $Z_{fin} \in Sp^{m^\prime}(\R^n \times \R^n)$ for some $m^\prime$. Choose $m$ minimal and $X_{fin},\ Y_{fin} \in Sp^{m}(\R^n \times \R^n)$ so that we can represent $X$ by $X_{fin}$ and $Y$ by $Y_{fin}$. By Proposition~\ref{sp_inv_lim}, it now suffices to produce some $f_{fin} \in R_n^m \cap A[Sp^m(\A_n^1)]$ such that $\varphi_{X_{fin}}(f_{fin}) \neq \varphi_{Y_{fin}}(f_{fin})$. Indeed, if $f \in A_{fin}[\widetilde{Sp}(\A_n^1)]$ restricts to $f_{fin}$, then \[\varphi_{X}(f) = \varphi_{X_{fin}}(f_{fin}) \neq \varphi_{Y_{fin}}(f_{fin}) = \varphi_{Y}(f).\]

Let \[g_{fin} = \prod_{i=1}^m \prod_{j=1}^n (y_{ij} - x_{ij}).\] Note that $g_{fin} \in A[Sp^m(\A_n^1)]$. By the minimality of $m$, it is not the case that $\varphi_{X_{fin}}(g_{fin}) = \varphi_{Y_{fin}}(g_{fin}) = 0$.

If $\varphi_{X_{fin}}(g_{fin}) \neq \varphi_{Y_{fin}}(g_{fin})$, then we may let $f_{fin} = g_{fin}$. On the other hand, if $\varphi_{X_{fin}}(g_{fin}) = \varphi_{Y_{fin}}(g_{fin}) \neq 0$, then by Lemma~\ref{finite_injectivity}, there exists $h_{fin} \in A[Sp^m(\A_n^1)]$ such that $\varphi_{X_{fin}}(h_{fin}) \neq \varphi_{Y_{fin}}(h_{fin})$. In this case, let $f_{fin} = g_{fin} h_{fin}$, and observe that
\begin{align*}
\varphi_{X_{fin}}(f_{fin}) &= \varphi_{X_{fin}}(g_{fin}h_{fin}) \\
&= \varphi_{X_{fin}}(g_{fin})\varphi_{X_{fin}}(h_{fin}) \\
&= \varphi_{Y_{fin}}(g_{fin})\varphi_{X_{fin}}(h_{fin}) \\
&\neq \varphi_{Y_{fin}}(g_{fin})\varphi_{Y_{fin}}(h_{fin}) \\
&= \varphi_{Y_{fin}}(g_{fin}h_{fin}) \\
&= \varphi_{Y_{fin}}(f_{fin}).
\end{align*}

Regardless of which definition we choose for $f_{fin}$, we have that $f_{fin} \in R_n^m$ by Proposition~\ref{diff_char} and that $f_{fin} \in A[Sp^m(\A_n^1)]$ because its constituent factors are elements of $A[Sp^m(\A_n^1)]$.
\end{proof}

\begin{lemma} \label{group_sp_infinite_injectivity}
Let $X,\ Y \in K(\widetilde{Sp}(\R^n \times \R^n))$ such that $X \neq Y$. Then $\varphi_X \neq \varphi_Y$ (as maps $A_{fin}[\widetilde{Sp}(\A_n^1)] \to \R$).
\end{lemma}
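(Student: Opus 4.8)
The plan is to mimic the proof of Lemma~\ref{group_infinite_injectivity}, using Lemma~\ref{sp_infinite_injectivity} as the ``finite-difference'' input in place of Lemma~\ref{infinite_injectivity}, but being careful about one subtlety: the separating function produced by Lemma~\ref{sp_infinite_injectivity} need not be a single power sum $p_{\aa,\bb}$, whereas the additivity trick $\varphi_{X_++X_-}(p_{\aa,\bb}) = \varphi_{X_+}(p_{\aa,\bb}) + \varphi_{X_-}(p_{\aa,\bb})$ is exactly what makes the group-completion argument work. So the real content is to upgrade an arbitrary separating function in $A_{fin}[\widetilde{Sp}(\A_n^1)]$ to a separating \emph{generator}, i.e. a single $p_{\aa,\bb}$ with $\aa \geq \mathbf{1}$.

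Concretely, I would proceed as follows. First, choose coset representatives and write $X = (X_+, X_-)$, $Y = (Y_+, Y_-)$ with all four entries in $\widetilde{Sp}(\R^n\times\R^n)$; the hypothesis $X \neq Y$ in $K(\widetilde{Sp}(\R^n\times\R^n))$ says precisely that $X_+ + Y_- \neq Y_+ + X_-$ as elements of $\widetilde{Sp}(\R^n\times\R^n)$. Second, apply Lemma~\ref{sp_infinite_injectivity} to these two elements to obtain some $f \in A_{fin}[\widetilde{Sp}(\A_n^1)]$ with $\varphi_{X_++Y_-}(f) \neq \varphi_{Y_++X_-}(f)$. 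Third — the key step — invoke Theorem~\ref{algebraic_theorem}: $A_{fin}[\widetilde{Sp}(\A_n^1)]$ is freely generated as an $\R$-algebra by the power sums $p_{\aa,\bb}$ with $\aa \geq \mathbf{1}$, so $f$ is a polynomial in finitely many such generators. Since a ring homomorphism to $\R$ is determined by its values on generators, if $\varphi_{X_++Y_-}$ and $\varphi_{Y_++X_-}$ agreed on every generator $p_{\aa,\bb}$ they would agree on $f$; hence there exists a single generator $p_{\aa_0,\bb_0}$ (with $\aa_0 \geq \mathbf{1}$) on which they differ. Fourth, use that $p_{\aa_0,\bb_0}$ is additive under the monoid operation $+$ on $\widetilde{Sp}(\R^n\times\R^n)$ — each variable-power term $\eta_{i1}^{a_1}\cdots\xi_{in}^{b_n}$ is simply summed over the disjoint index set of a concatenation of tuples, and this is compatible with the equivalence relation $\simeq$ because deleting a degenerate cube $(\zz,\zz')$ with $z_k = z'_k$ contributes $\eta$-factor $0$ to every term (since $a_j \geq 1$ for all $j$) — so $\varphi_{Z_1+Z_2}(p_{\aa_0,\bb_0}) = \varphi_{Z_1}(p_{\aa_0,\bb_0}) + \varphi_{Z_2}(p_{\aa_0,\bb_0})$. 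Fifth, expand both sides of $\varphi_{X_++Y_-}(p_{\aa_0,\bb_0}) \neq \varphi_{Y_++X_-}(p_{\aa_0,\bb_0})$ using this additivity and rearrange to get $\varphi_{X_+}(p_{\aa_0,\bb_0}) - \varphi_{X_-}(p_{\aa_0,\bb_0}) \neq \varphi_{Y_+}(p_{\aa_0,\bb_0}) - \varphi_{Y_-}(p_{\aa_0,\bb_0})$, which by the definition of $\varphi_{(\cdot,\cdot)}$ on generators is exactly $\varphi_X(p_{\aa_0,\bb_0}) \neq \varphi_Y(p_{\aa_0,\bb_0})$, as desired.

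The main obstacle I anticipate is not in the logical skeleton — which is essentially identical to Lemma~\ref{group_infinite_injectivity} — but in making the additivity claim for $p_{\aa_0,\bb_0}$ on $\widetilde{Sp}(\R^n\times\R^n)$ watertight: one must check both that the power sums are additive on $Sp^\infty(\R^n\times\R^n)$ and that they descend to $\widetilde{Sp}(\R^n\times\R^n)$, i.e. are constant on $\simeq$-classes. The latter is where the restriction $\aa_0 \geq \mathbf{1}$ in Theorem~\ref{algebraic_theorem} does the work: a degenerate summand $(\zz,\zz')$ with some $z_k = z'_k$ has $\eta_{\cdot k} = 0$, and since every term of $p_{\aa_0,\bb_0}$ contains a positive power of that $\eta$-coordinate, the degenerate summand contributes nothing, exactly matching the relation defining $\simeq$. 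Once this is noted, the rest is a formal rearrangement mirroring the previous lemma, and could even be abbreviated with a pointer to that proof.

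\begin{proof}
Choose coset representatives for $X$ and $Y$ in $\widetilde{Sp}(\R^n\times\R^n) \times \widetilde{Sp}(\R^n\times\R^n)$, and write $X = (X_+, X_-)$ and $Y = (Y_+, Y_-)$. The hypothesis $X \neq Y$ in $K(\widetilde{Sp}(\R^n\times\R^n))$ means precisely that $X_+ + Y_- \neq Y_+ + X_-$ in $\widetilde{Sp}(\R^n\times\R^n)$. By Lemma~\ref{sp_infinite_injectivity}, there exists $f^\prime \in A_{fin}[\widetilde{Sp}(\A_n^1)]$ such that $\varphi_{X_+ + Y_-}(f^{\prime}) \neq \varphi_{Y_+ + X_-}(f^{\prime})$.

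By Theorem~\ref{algebraic_theorem}, $A_{fin}[\widetilde{Sp}(\A_n^1)]$ is (freely) generated as an $\R$-algebra by the power sums $p_{\aa, \bb}$ with $\aa \geq \mathbf{1}$. Since a ring homomorphism to $\R$ is determined by its values on algebra generators, the fact that $\varphi_{X_+ + Y_-}$ and $\varphi_{Y_+ + X_-}$ differ on $f^{\prime}$ implies that they differ on some generator. That is, there is a power sum $f = p_{\aa_0, \bb_0} \in A_{fin}[\widetilde{Sp}(\A_n^1)]$ with $\aa_0 \geq \mathbf{1}$ such that $\varphi_{X_+ + Y_-}(f) \neq \varphi_{Y_+ + X_-}(f)$.

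We claim $f$ is additive with respect to the monoid operation on $\widetilde{Sp}(\R^n\times\R^n)$, i.e. $\varphi_{Z_1 + Z_2}(f) = \varphi_{Z_1}(f) + \varphi_{Z_2}(f)$ for all $Z_1, Z_2 \in \widetilde{Sp}(\R^n\times\R^n)$. Indeed, on $Sp^{\infty}(\R^n\times\R^n)$ the value $\varphi_{Z}(p_{\aa_0, \bb_0})$ is the sum of $\eta_{i1}^{(a_0)_1}\cdots\eta_{in}^{(a_0)_n}\xi_{i1}^{(b_0)_1}\cdots\xi_{in}^{(b_0)_n}$ over the summands $(\xx_i,\yy_i)$ of $Z$ (with $\eta_i = \yy_i - \xx_i$, $\xi_i = \yy_i + \xx_i$), and this is manifestly additive under concatenation of tuples. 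Moreover, this sum descends to $\widetilde{Sp}(\R^n\times\R^n)$: if one coordinate $z_k$ of $\zz$ equals the corresponding coordinate $z_k^\prime$ of $\zz^\prime$, then the $k$th component of $\eta = \zz^\prime - \zz$ vanishes, so the term corresponding to $(\zz, \zz^\prime)$ is $0$ (here we use $(\aa_0)_k \geq 1$), matching the relation defining $\simeq$. This proves the claim.

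Using additivity,
\begin{align*}
\varphi_{X_+ + Y_-}(f) &= \varphi_{X_+}(f) + \varphi_{Y_-}(f), \\
\varphi_{Y_+ + X_-}(f) &= \varphi_{Y_+}(f) + \varphi_{X_-}(f),
\end{align*}
so the inequality $\varphi_{X_+ + Y_-}(f) \neq \varphi_{Y_+ + X_-}(f)$ rearranges to
\[
\varphi_{X_+}(f) - \varphi_{X_-}(f) \neq \varphi_{Y_+}(f) - \varphi_{Y_-}(f).
\]
By the definition of $\varphi_{(\cdot,\cdot)}$ on the generators $p_{\aa,\bb}$, the left side is $\varphi_{(X_+, X_-)}(f) = \varphi_X(f)$ and the right side is $\varphi_{(Y_+, Y_-)}(f) = \varphi_Y(f)$. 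Hence $\varphi_X(f) \neq \varphi_Y(f)$, so $\varphi_X \neq \varphi_Y$.
\end{proof}
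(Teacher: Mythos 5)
Your proof is correct and follows exactly the route the paper intends: the paper's own proof is the one-line pointer ``identical to the proof of Lemma~\ref{group_infinite_injectivity} from Lemma~\ref{infinite_injectivity},'' and your argument is a faithful expansion of that pointer, correctly substituting Theorem~\ref{algebraic_theorem} for Lemma~\ref{multi_symmetric_lemma} to extract a single separating generator $p_{\aa_0,\bb_0}$ with $\aa_0 \geq \mathbf{1}$. The extra paragraph verifying that $p_{\aa_0,\bb_0}$ is additive and descends to $\widetilde{Sp}(\R^n\times\R^n)$ is exactly the hidden subtlety the paper leaves implicit, and you have addressed it correctly.
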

\begin{proof}
The proof of Lemma~\ref{group_sp_infinite_injectivity} from Lemma~\ref{sp_infinite_injectivity} is identical to the proof of Lemma~\ref{group_infinite_injectivity} from Lemma~\ref{infinite_injectivity}.
\end{proof}

\subsection{Generalized rank invariants.}
\label{generalized_rank_invariants}

The definition of the invariants $F_{\aa, \bb}$ of Section \ref{defining_invariants} relies quite heavily on the rank invariants $\rho_{\uu, \vv} : \calM(k,n) \to \N$. Having fixed $M \in \calM(k,n)$, we may view the rank invariant $\rho_{\uu, \vv}(M)$ as a function of its subscripts \[\rho_{-,-}(M) : \R^n \times \R^n \to \N\] of its subscripts. Note that $\rho_{\uu, \vv}(M) = 0$ unless $\uu \leq \vv$. Moreover, because we require elements of $\calM(k,n)$ to be finite, there exist $\uu_0, \vv_0 \in \Z^n$ such that for all $\uu, \vv \in \R^n$, we have that $\rho_{\uu, \vv}(M) = 0$ unless $\uu \geq \uu_0$ and $\vv \leq \vv_0$. Finally, even though we may evaluate $\rho_{\uu,\vv}(M)$ for arbitrary $\uu, \vv \in \R^n$, the values $\rho_{\uu,\vv}(M)$ are fully determined for $\uu, \vv \in \Z^n$.

These observations inspire the following definition:

\begin{definition}\label{generalized_rank}
A generalized rank invariant is a function $\rho_{-,-} : \Z^n \times \Z^n \to \Z$ such that:
\begin{enumerate}
\item $\rho_{\uu, \vv} = 0$ unless $\uu \leq \vv$.
\item There exist $\uu_0, \vv_0 \in \Z^n$ such that $\rho_{\uu, \vv} = 0$ for all $\uu, \vv \in \Z^n$ except if $\uu \geq \uu_0$ and $\vv \leq \vv_0$.
\end{enumerate}
\end{definition}

For the remainder of Section~\ref{generalized_rank_invariants}, we define \[J(n) = \left\{ (\xx, \yy) \in \Z^n \times \Z^n\ \middle| \xx \leq \yy \right\}.\] For $(\xx, \yy) \in J(n)$, we can define a generalized rank invariant $\rho_{-,-}((\xx, \yy))$ by \[ \rho_{\uu, \vv}((\xx, \yy)) = \begin{cases} 1 & \xx \leq \uu \leq \vv \leq \yy \\ 0 & \textup{otherwise} \end{cases}. \] We may symmetrize the above definition: for $X = \sum_{i=1}^m (\xx_i, \yy_i) \in Sp^{\infty}(J(n))$, define a generalized rank invariant $\rho_{-,-}(X)$ by \[\rho_{\uu, \vv}(X) = \sum_{i=1}^m \rho_{\uu, \vv}((\xx_i, \yy_i)).\] Finally, for $X \in K(Sp^{\infty}(J(n)))$ represented by $(X_+, X_-) \in Sp^{\infty}(J(n)) \times Sp^{\infty}(J(n))$, define a generalized rank invariant $\rho_{-,-}(X)$ by \[\rho_{\uu, \vv}(X) = \rho_{\uu, \vv}(X_+) - \rho_{\uu, \vv}(X_-).\]

\begin{lemma} \label{group_sp_rank_inj} 
The map from $K(Sp^{\infty}(J(n)))$ to the set of generalized rank invariants defined by $X \mapsto \rho_{-,-}(X)$ is injective.
\end{lemma}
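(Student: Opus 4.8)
The plan is to show that a generalized rank invariant $\rho_{-,-}(X)$ for $X \in K(Sp^\infty(J(n)))$ determines the multiset-with-signs $X_+ - X_-$, by recovering the coefficients one indicator-block $(\xx, \yy)$ at a time, working from ``smallest" blocks outward. Concretely, represent $X$ by $(X_+, X_-)$ and let $c_{(\xx,\yy)} \in \Z$ be the signed multiplicity of $(\xx,\yy)$ in $X_+ - X_-$ (finitely many nonzero, all supported in the box $[\uu_0, \vv_0]$ guaranteed by Definition~\ref{generalized_rank}). Then by definition $\rho_{\uu,\vv}(X) = \sum_{(\xx,\yy)} c_{(\xx,\yy)} \cdot \one[\xx \le \uu \le \vv \le \yy]$, so $X$ and $Y$ give the same generalized rank invariant iff the coefficient functions $c^X$ and $c^Y$ satisfy $\sum_{(\xx,\yy)} (c^X_{(\xx,\yy)} - c^Y_{(\xx,\yy)}) \cdot \one[\xx \le \uu \le \vv \le \yy] = 0$ for all $(\uu,\vv)$. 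Setting $d = c^X - c^Y$, it suffices to prove: if $\sum_{(\xx,\yy) \in J(n)} d_{(\xx,\yy)} \one[\xx \le \uu \le \vv \le \yy] = 0$ for all $(\uu,\vv) \in \Z^n \times \Z^n$, and $d$ has finite support, then $d \equiv 0$.

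The key step is a Möbius-inversion / extremality argument. Order the pairs $(\xx, \yy) \in J(n)$ by the partial order where $(\xx, \yy) \sqsubseteq (\xx', \yy')$ iff $\xx \ge \xx'$ and $\yy \le \yy'$ (i.e.\ the block $[\xx,\yy]$ sits inside $[\xx',\yy']$). Suppose $d \not\equiv 0$ and pick $(\xx^*, \yy^*)$ minimal under $\sqsubseteq$ among pairs with $d_{(\xx,\yy)} \ne 0$. Now evaluate the hypothesis at $\uu = \xx^*$, $\vv = \yy^*$: a pair $(\xx,\yy)$ contributes iff $\xx \le \xx^* \le \yy^* \le \yy$, i.e.\ iff $(\xx^*, \yy^*) \sqsubseteq (\xx, \yy)$. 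Among all such pairs with $d_{(\xx,\yy)} \ne 0$, minimality of $(\xx^*,\yy^*)$ forces $(\xx,\yy) = (\xx^*,\yy^*)$: if $(\xx,\yy)$ also satisfied $(\xx^*,\yy^*) \sqsubseteq (\xx,\yy)$ and $d_{(\xx,\yy)} \ne 0$ and $(\xx,\yy) \ne (\xx^*,\yy^*)$, then $(\xx^*,\yy^*) \sqsubset (\xx,\yy)$ would still be consistent with $(\xx^*,\yy^*)$ being minimal — so I should instead pick $(\xx^*,\yy^*)$ \emph{maximal} under $\sqsubseteq$ among the support of $d$. Redoing with $(\xx^*,\yy^*)$ maximal: evaluating at $(\uu,\vv) = (\xx^*,\yy^*)$, the only surviving term with nonzero $d$ is $(\xx^*,\yy^*)$ itself (any strictly larger pair in the $\sqsubseteq$-order that would also survive has zero coefficient by maximality), giving $d_{(\xx^*,\yy^*)} = 0$, a contradiction. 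Hence $d \equiv 0$, proving injectivity. Finiteness of the support (from condition (2) of Definition~\ref{generalized_rank}, applied to $X_+, X_-, Y_+, Y_-$) guarantees that a $\sqsubseteq$-maximal element of the support exists.

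The main obstacle — really the only subtlety — is getting the direction of the extremality argument right and confirming that the ``triangle" indicator $\one[\xx \le \uu \le \vv \le \yy]$, rather than a full rectangle indicator, still makes the inversion go through; it does, because the single evaluation point $(\uu,\vv) = (\xx^*,\yy^*)$ is exactly the ``corner" that isolates the block $[\xx^*,\yy^*]$, and every other block containing this corner is $\sqsupseteq (\xx^*,\yy^*)$ and hence (by maximality) has zero coefficient. A minor point to state carefully: one must first reduce from arbitrary coset representatives in $K(\cdot)$ to the well-defined signed coefficient function $d = c^X - c^Y$, using that $\rho_{-,-}$ is additive under $+$ and that passing to $K$ only records the difference — this is immediate from the displayed definitions of $\rho_{-,-}(X)$ for $X \in Sp^\infty(J(n))$ and $X \in K(Sp^\infty(J(n)))$. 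I would write the lemma's proof in three short movements: (i) reduce to showing $d$ has finite support and kills all corner-indicators implies $d \equiv 0$; (ii) invoke finiteness to extract a $\sqsubseteq$-maximal support element; (iii) plug that corner into the hypothesis to force its coefficient to vanish, contradiction.
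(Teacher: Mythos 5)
Your argument is correct and is essentially the paper's proof: both isolate the block in the support whose interval is maximal under containment and evaluate the rank invariant at that block's corner, which kills every other term; the paper just packages the extremality by fixing a total order refining the dual of your $\sqsubseteq$ and taking the minimal element, rather than taking a $\sqsubseteq$-maximal element of a finite poset directly. One small point to fix before writing this up: you justify finiteness of the support of $d$ by appealing to condition~(2) of Definition~\ref{generalized_rank}, but that condition constrains the \emph{rank invariant}, and inferring finite support of the coefficient function from it would be circular (it is essentially the statement being proved); the finiteness you need is already automatic because $X_+, X_-, Y_+, Y_-$ lie in $Sp^{\infty}(J(n))$, a filtered colimit of the $Sp^m(J(n))$, so each is a finite formal sum.
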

\begin{proof}
Fix a total order $\preceq$ on $J(n)$ such that $(\xx, \yy) \preceq (\zz, \ww)$ if $\xx \leq \zz$ or if $\xx = \zz$ and $\yy \geq \ww$. 

Let $X \in K(Sp^{\infty}(J(n)))$ such that $\rho_{\uu,\vv}(X) = 0$ for all $(\uu, \vv) \in J(n)$. If $X \neq 0$, then we may write $X = \sum_{i = 1}^m c_i(\xx_i, \yy_i)$, where $(\xx_i, \yy_i) \prec (\xx_j, \yy_j)$ for $i < j$, $c_i \in \Z^*$, and $m \geq 1$. Then by the minimality of $(\xx_1, \yy_1)$ under the order $\preceq$, we have that $\rho_{\xx_1, \yy_1}(X) = c_1 \neq 0$, which is a contradiction.
\end{proof}

\begin{lemma} \label{group_sp_rank_surj} 
The map from $K(Sp^{\infty}(J(n)))$ to the set of generalized rank invariants defined by $X \mapsto \rho_{-,-}(X)$ is surjective.
\end{lemma}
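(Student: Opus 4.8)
The plan is to show that an arbitrary generalized rank invariant $\rho$ can be realized by an explicit element of $K(Sp^{\infty}(J(n)))$, built up by a descending induction along the total order $\preceq$ on $J(n)$ from the proof of Lemma \ref{group_sp_rank_inj}. First I would fix a generalized rank invariant $\rho : \Z^n \times \Z^n \to \Z$ with the support bound provided by Definition \ref{generalized_rank}(2), so that $\rho_{\uu,\vv} = 0$ unless $\uu_0 \leq \uu \leq \vv \leq \vv_0$; this confines all the bookkeeping to the finite box $[\uu_0, \vv_0] \cap \Z^n$, and in particular only finitely many pairs $(\xx,\yy) \in J(n)$ are relevant. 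I would then order these finitely many relevant pairs as $(\xx_1,\yy_1) \prec (\xx_2,\yy_2) \prec \cdots \prec (\xx_N,\yy_N)$ using the total order $\preceq$ (defined by $(\xx,\yy) \preceq (\zz,\ww)$ if $\xx \leq \zz$, or $\xx = \zz$ and $\yy \geq \ww$, broken arbitrarily to a total order as before).

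The key observation driving the induction is the same ``triangularity'' used in Lemma \ref{group_sp_rank_inj}: the basic generalized rank invariant $\rho_{-,-}((\xx,\yy))$ attached to a pair is supported on $\{(\uu,\vv) : \xx \leq \uu \leq \vv \leq \yy\}$, and evaluated at $(\uu,\vv) = (\xx,\yy)$ it equals $1$, while $\rho_{\xx_j,\yy_j}((\xx_i,\yy_i)) = 0$ whenever $(\xx_j,\yy_j) \prec (\xx_i,\yy_i)$ — because $\xx_i \leq \xx_j$ forces either $\xx_i < \xx_j$ (so $(\xx_i,\yy_i)$ is not supported below $\xx_j$ in the first coordinate) or $\xx_i = \xx_j$ with $\yy_i \leq \yy_j$, which is incompatible with $(\xx_j,\yy_j) \prec (\xx_i,\yy_i)$. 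Hence the matrix $\bigl(\rho_{\xx_j,\yy_j}((\xx_i,\yy_i))\bigr)_{i,j}$ is unitriangular with respect to $\preceq$, so it is invertible over $\Z$, and one can solve for integer coefficients $c_1, \dots, c_N \in \Z$ such that $X := \sum_{i=1}^N c_i (\xx_i,\yy_i)$ satisfies $\rho_{\uu,\vv}(X) = \rho_{\uu,\vv}$ for all $(\uu,\vv)$. Concretely I would do this as a greedy descending sweep: process $(\xx_1,\yy_1), (\xx_2,\yy_2), \dots$ in increasing $\preceq$-order, at each step setting $c_i$ equal to the value of $\rho - \rho_{-,-}\!\bigl(\sum_{k<i} c_k(\xx_k,\yy_k)\bigr)$ at $(\xx_i,\yy_i)$, and checking that this choice does not disturb any pair already handled (again by triangularity). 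Since the $c_i$ may be negative, $X$ lives in the group completion $K(Sp^{\infty}(J(n)))$ rather than in $Sp^{\infty}(J(n))$ itself — this is exactly why we passed to the group completion — and one packages $X = (X_+, X_-)$ by separating positive and negative coefficients.

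The main obstacle, though it is mild, is verifying that after the greedy sweep terminates one has $\rho_{-,-}(X) = \rho$ everywhere, not just at the finitely many ``corner'' points $(\xx_i,\yy_i)$; here one uses the support bound from Definition \ref{generalized_rank}(2) together with the fact that any $(\uu,\vv) \in J(n)$ with $\rho_{\uu,\vv} \neq 0$ lies in $[\uu_0,\vv_0]$, so it equals some $(\xx_i,\yy_i)$ in our enumeration (taking $\xx = \uu$, $\yy = \vv$), and both $\rho$ and $\rho_{-,-}(X)$ are already forced to agree there. A small care point is ensuring the enumeration really is a total order extending $\preceq$ on pairs with equal first and second coordinates (it is a genuine total order since such a pair is unique), and that finiteness of $N$ genuinely holds, which follows because $[\uu_0,\vv_0] \cap \Z^n$ is finite. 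With these checks in place the map is surjective, completing the proof.
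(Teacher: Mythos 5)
Your proposal is correct and follows essentially the same route as the paper: both arguments exploit the same triangularity of $\rho_{-,-}((\xx,\yy))$ with respect to the total order $\preceq$, and both perform a greedy sweep over the finite box $[\uu_0,\vv_0]$, peeling off $\rho_{\xx,\yy}\cdot(\xx,\yy)$ at the $\preceq$-minimal pair where the residual is nonzero. The paper phrases this as an induction on the size of $\{(\uu,\vv) : (\xx,\yy)\preceq(\uu,\vv),\ \uu_0\leq\uu\leq\vv\leq\vv_0\}$ while you phrase it as solving a unitriangular linear system over $\Z$, but these are the same argument.
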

\begin{proof}
As in the proof of Lemma~\ref{group_sp_rank_inj}, we fix a total order $\preceq$ on $J(n)$ such that $(\xx, \yy) \preceq (\zz, \ww)$ if $\xx \leq \zz$ or if $\xx = \zz$ and $\yy \geq \ww$. Let $\rho_{-,-}$ be a generalized rank invariant. We now describe an algorithm that produces an $X \in K(Sp^{\infty}(J(n)))$ such that $\rho_{-,-} = \rho_{-,-}(X)$.

If $\rho_{\uu, \vv} = 0$ for all $(\uu, \vv) \in \Z^n \times \Z^n$, let $X = 0$.

Otherwise, by conditions (1) and (2) of Definition~\ref{generalized_rank}, there exist $\uu_0, \vv_0 \in \Z^n$ such that $\rho_{\uu, \vv} = 0$ for all $\uu, \vv \in \Z^n$ except if $\uu_0 \leq \uu \leq \vv \leq \vv_0$. Note that the set all such $(\uu, \vv) \in \Z^n \times \Z^n$ such that $\uu_0 \leq \uu \leq \vv \leq \vv_0$ is finite. Choose $(\xx, \yy)$, minimal under the total order $\preceq$, such that $\rho_{\xx, \yy} \neq 0$. We complete the proof of Lemma~\ref{group_sp_rank_surj} by induction on \[\left|\left\{(\uu, \vv) \in \Z^n \times \Z^n\ \middle|\ (\xx, \yy) \preceq (\uu, \vv) \textup{ and } \uu_0 \leq \uu \leq \vv \leq \vv_0 \right\}\right|.\]

Define a generalized rank invariant $\rho^{\prime}_{-,-}$ by \[\rho^{\prime}_{-,-} = \rho_{-,-} - \rho_{-,-}((\xx, \yy)).\] Then $\rho_{\uu,\vv}^{\prime} = 0$ for all $(\uu, \vv) \preceq (\xx, \yy)$. Moreover, $\rho_{\uu, \vv} = 0$ for all $\uu, \vv \in \Z^n$ except if $\uu_0 \leq \uu \leq \vv \leq \vv_0$. By induction, there exists $X^{\prime} \in K(Sp^{\infty}(J(n)))$ such that $\rho^{\prime}_{-,-} = \rho_{-,-}(X^\prime)$. Let $X = X^{\prime} + \rho_{\xx, \yy} \cdot (\xx, \yy)$. Then $\rho_{-,-} = \rho_{-,-}(X)$.
\end{proof}

\begin{theorem} \label{group_sp_rank_bij} 
The map from $K(Sp^{\infty}(J(n)))$ to the set of generalized rank invariants defined by $X \mapsto \rho_{-,-}(X)$ is a bijection.
\end{theorem}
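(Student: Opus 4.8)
The statement is an immediate consequence of the two preceding lemmas. The plan is to observe that Lemma~\ref{group_sp_rank_inj} establishes that the map $X \mapsto \rho_{-,-}(X)$ from $K(Sp^{\infty}(J(n)))$ to the set of generalized rank invariants is injective, while Lemma~\ref{group_sp_rank_surj} establishes that the same map is surjective. A map which is both injective and surjective is a bijection, so combining these two lemmas proves the theorem directly.

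There is essentially no additional work to carry out; the two lemmas were stated and proved precisely so that this theorem would follow formally. If one wished to be slightly more self-contained, one could restate the key idea behind each lemma: injectivity follows because, given a nonzero $X = \sum_i c_i(\xx_i,\yy_i)$ with the $(\xx_i,\yy_i)$ listed in increasing $\preceq$-order, the minimal generator $(\xx_1,\yy_1)$ satisfies $\rho_{\xx_1,\yy_1}(X) = c_1 \neq 0$; surjectivity follows by a finite downward induction that peels off the $\preceq$-minimal pair $(\xx,\yy)$ with $\rho_{\xx,\yy} \neq 0$, subtracting $\rho_{\xx,\yy}\cdot(\xx,\yy)$ and recursing on the remaining generalized rank invariant, which is supported on a strictly smaller finite set.

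Since nothing in the argument is delicate, there is no real obstacle. The only point that requires the slightest care is ensuring the two lemmas genuinely refer to the same map (they do: both are the map $X \mapsto \rho_{-,-}(X)$ defined via $\rho_{\uu,\vv}(X) = \rho_{\uu,\vv}(X_+) - \rho_{\uu,\vv}(X_-)$ on representatives), and that this map is well-defined on $K(Sp^{\infty}(J(n)))$ rather than just on $Sp^{\infty}(J(n)) \times Sp^{\infty}(J(n))$ — but well-definedness is clear from additivity of $\rho_{-,-}$ under the monoid operation, which was already implicitly used in the statements of both lemmas. Hence the proof is simply: by Lemma~\ref{group_sp_rank_inj} the map is injective and by Lemma~\ref{group_sp_rank_surj} it is surjective, so it is a bijection.
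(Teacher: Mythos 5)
Your proposal is correct and is exactly what the paper does: its proof reads, in full, that injectivity follows from Lemma~\ref{group_sp_rank_inj} and surjectivity from Lemma~\ref{group_sp_rank_surj}. The extra remarks on well-definedness are fine but not needed.
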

\begin{proof}
Injectivity follows Lemma~\ref{group_sp_rank_inj}. Surjectivity follows from Lemma~\ref{group_sp_rank_surj}.
\end{proof}

We may now use Theorem \ref{group_sp_rank_bij} to assist in the computation of $p_{\aa, \bb}(M)$ for arbitrary $M \in \calM(k,n)$. Recall that by Theorem \ref{span-n-n}, $p_{\aa, \bb}(M)$ is a linear combination of the invariants $F_{\aa^{\prime}, \bb^{\prime}}(M)$ (introduced in Section \ref{defining_invariants}), which are defined solely in terms of the rank invariants $\rho_{\uu, \vv}(M)$. That is, for $M \in \calM(k,n)$ and $X \in K(Sp^{\infty}(J(n)))$, we will have $p_{\aa, \bb}(M) = p_{\aa, \bb}(X)$ for all $\aa, \bb$ provided that $\rho_{\uu, \vv}(M) = \rho_{\uu, \vv}(X)$ for all $\uu, \vv$. 

Thus, to compute $p_{\aa, \bb}(M)$, we first use the rank invariants $\rho_{-,-}(M)$ and the algorithm presented in the proof of Lemma \ref{group_sp_rank_surj} to find some $X \in K(Sp^{\infty}(J(n)))$ such that $\rho_{-,-}(M) = \rho_{-,-}(X)$. We may now calculate $p_{\aa, \bb}(M)$ using the equality $p_{\aa, \bb}(M) = p_{\aa, \bb}(X)$. 

\begin{remark}
The results of this section also hold when working in the real (rather than integral) formulation of multidimensional persistence. However, in the real case, we would need to replace $\Z^n$ with $\R^n$ throughout this subsection and add a condition to the definition of a generalized rank invariant that ensures the finite termination of the algorithm in Lemma~\ref{group_sp_rank_surj}.
\end{remark}

\section{Recovering the Point Cloud}\label{recovering}

Section \ref{the-ring} examines the $K$-finite sections of $\calR(k,n)$, and Section \ref{calculate-invariants} provides a concrete method for calculating them. In this section, we provide a way to recover information about $M \in \calR(k,n)$ (including $M$ itself) from the invariants $p_{\aa, \bb}(M)$ (where $(\aa, \bb) \in \N_+^n \times \N^n$). In fact, these techniques will allow us to recover any $X \in \widetilde{Sp}(J(n))$ from the values $p_{\aa, \bb}(X)$.

\begin{remark}
We believe that an extension of the techniques used here will additionally allow us to recover any $X \in K(\widetilde{Sp}(\R^n \times \R^n))$. However, the algorithms and limits presented below are already sufficiently unwieldy to be computed in practice, and so we refrain from presenting the extended methodology.
\end{remark}

First, we prove two technical lemmas:

\begin{lemma}\label{silly-limit}
Let $a_1, ..., a_n \in \R$ with $a_j > 0$ for all $j$. Assume that $a_1 < 1$. Then \[\lim_{k \to \infty} \left(\prod_{j=1}^n a_j^{k^{n+1-j}} \right) = 0 .\]
\end{lemma}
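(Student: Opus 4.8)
The plan is to isolate the $j=1$ factor, which carries a base strictly less than $1$ raised to the overwhelmingly dominant exponent $k^n$, and show it crushes all the other factors. First I would take logarithms: it suffices to show that
\[
\sum_{j=1}^n k^{n+1-j}\log a_j \longrightarrow -\infty \quad \text{as } k \to \infty.
\]
Set $c_j = \log a_j$, so $c_1 < 0$ (since $a_1 < 1$) while the remaining $c_j$ are arbitrary real numbers. The sum is
\[
k^n c_1 + \sum_{j=2}^n k^{n+1-j} c_j,
\]
and since each term with $j \geq 2$ has exponent $n+1-j \leq n-1$, the whole tail is bounded in absolute value by $(n-1)\bigl(\max_{j\geq 2}|c_j|\bigr)k^{n-1} = O(k^{n-1})$, which is $o(k^n)$. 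Hence the expression behaves like $k^n c_1(1+o(1))$ and tends to $-\infty$ because $c_1 < 0$; exponentiating gives the claim.

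Concretely, the one clean estimate I would record is: let $C = \max\{|c_j| : 2 \leq j \leq n\}$ (taking $C=0$ if $n=1$). Then for $k \geq 1$,
\[
\sum_{j=1}^n k^{n+1-j}c_j \;\leq\; k^n c_1 + (n-1)C\,k^{n-1} \;=\; k^{n-1}\bigl(k\,c_1 + (n-1)C\bigr).
\]
Once $k > (n-1)C / (-c_1)$, the bracket is negative, and since $k^{n-1} \to \infty$ the right side $\to -\infty$. Therefore $\prod_{j=1}^n a_j^{k^{n+1-j}} = \exp\!\bigl(\sum_j k^{n+1-j} c_j\bigr) \to 0$.

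There is essentially no obstacle here; the only thing to be careful about is the degenerate case $n=1$, where the sum is just $k\,c_1 \to -\infty$ directly, and the reassurance that nothing about the $a_j$ for $j \geq 2$ matters beyond their being positive (so that $\log a_j$ is a finite real number). I would state the argument in this logarithmic form precisely because it makes the "exponent gap" ($k^n$ versus $O(k^{n-1})$) completely transparent, rather than trying to manipulate the product directly.
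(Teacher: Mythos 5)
Your proof is correct, but it is a genuinely different argument from the one in the paper. You take logarithms and make the dominance explicit: the term $k^n\log a_1$ is negative of order $k^n$ and overwhelms the tail, which you bound by $(n-1)\bigl(\max_{j\ge 2}|\log a_j|\bigr)\,k^{n-1} = O(k^{n-1})$; the conclusion then follows by exponentiating. The paper instead argues by induction on $n$, rewriting
\[
\prod_{j=1}^{n} a_j^{k^{n+1-j}} \;=\; \left(a_n \prod_{j=1}^{n-1} a_j^{k^{n-j}}\right)^{\!k},
\]
so that once the inner product tends to $0$ by the inductive hypothesis, the base of the outer $k$-th power is eventually below $1$ (in fact tends to $0$), and raising it to the $k$-th power sends the whole expression to $0$. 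Both arguments are short and complete; the paper's induction exploits the nested self-similar structure of the exponent tower, while your logarithmic estimate buys a cleaner quantitative picture, since it produces an explicit threshold $k > (n-1)C/(-\log a_1)$ beyond which the quantity is strictly decreasing, and it isolates exactly which hypothesis is doing the work (only $a_1<1$ matters; the other $a_j$ need only be positive so their logarithms are finite). Your handling of the degenerate case $n=1$ is also fine.
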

\begin{proof}
We prove this lemma by induction on $n$. The case $n=1$ is trivial. If the lemma is true when $n = m-1$, it is also true for the case $n=m$, since \[\lim_{k \to \infty} \left(\prod_{j=1}^{m-1} a_j^{k^{(m-1)+1-j}}\right) = 0\] implies \[ \lim_{k \to \infty} \left(\prod_{j=1}^m a_j^{k^{m+1-j}} \right) = \lim_{k \to \infty} \left(a_m \left(\prod_{j=1}^{m-1} a_j^{k^{(m-1)+1-j}} \right)\right)^k = 0.\]
\end{proof}

\begin{lemma}\label{recovering-technical}
Let $w_i \in \R$ and $\zz_i \in \R^n$ for $1 \leq i \leq m$ such that $z_{i,j} > 0$ and $w_i > 0$. Further assume that the $\zz_i$ are in decreasing lexicographic order. Then \[\lim_{k \to \infty} \left( \sum_{i=1}^n \left( w_i \prod_{j=1}^n \left(\frac{z_{i,j}}{z_{1,j}}\right)^{k^{n+1-j}} \right) \right)^{1/k} = 1.\]
\end{lemma}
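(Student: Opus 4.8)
The strategy is to show that the expression inside the limit is squeezed between $w_1^{1/k}$ and something that still tends to $1$. Since the $\zz_i$ are in strictly decreasing lexicographic order, the $i=1$ term of the sum has $z_{1,j}/z_{1,j} = 1$ for all $j$, so that term equals exactly $w_1$. Hence the inner sum is at least $w_1$, and therefore the full expression is at least $w_1^{1/k}$, which tends to $1$ as $k \to \infty$. For the upper bound, I would bound the sum above by $m \cdot w_{\max} \cdot c_k$, where $w_{\max} = \max_i w_i$ and $c_k = \max_i \prod_{j=1}^n (z_{i,j}/z_{1,j})^{k^{n+1-j}}$, and then argue that $c_k$ is bounded (in fact $c_k \to $ either $0$ or stays $\le 1$ eventually), so that the whole expression is at most $(m\, w_{\max})^{1/k} \cdot c_k^{1/k} \to 1 \cdot 1 = 1$.

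\textbf{The key point.} The crux is understanding the behavior of $\prod_{j=1}^n (z_{i,j}/z_{1,j})^{k^{n+1-j}}$ for $i > 1$. Because $\zz_1$ is the lexicographically largest of the $\zz_i$, for each $i > 1$ there is a first coordinate $j_0 = j_0(i)$ with $z_{i,j_0} < z_{1,j_0}$, while $z_{i,j} = z_{1,j}$ for $j < j_0$. So the ratio $a_j := z_{i,j}/z_{1,j}$ satisfies $a_j = 1$ for $j < j_0$ and $a_{j_0} < 1$; the remaining ratios $a_j$ for $j > j_0$ are positive but possibly $> 1$. I would apply Lemma~\ref{silly-limit} — after discarding the leading $j < j_0$ factors (which are all $1$) and relabeling — to conclude that $\prod_{j} (z_{i,j}/z_{1,j})^{k^{n+1-j}} \to 0$ for each $i > 1$. (The exponent pattern $k^{n+1-j}$ is exactly what makes the decay in the first non-equal coordinate dominate any growth in later coordinates, which is the content of Lemma~\ref{silly-limit}.) In particular each such term is bounded, so the sum $S_k$ inside the limit is bounded: $w_1 \le S_k \le w_1 + (m-1)\varepsilon$ for $k$ large, and more crudely $S_k$ is bounded above by a constant $C$ independent of $k$.

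\textbf{Conclusion.} Combining the two bounds: for $k$ large enough, $w_1 \le S_k \le C$ for a fixed constant $C$ (e.g. $C = \sum_i w_i$ works for \emph{all} $k$ since each ratio-product is eventually $\le 1$, or one simply takes $C = \max(S_1, S_2, \ldots)$ once the tail is controlled). Then $w_1^{1/k} \le S_k^{1/k} \le C^{1/k}$, and since $w_1 > 0$ and $C > 0$ are fixed positive constants, both $w_1^{1/k} \to 1$ and $C^{1/k} \to 1$. By the squeeze theorem, $S_k^{1/k} \to 1$.

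\textbf{Main obstacle.} The only subtle point — and thus the main thing to get right — is the uniform upper bound: one must ensure the sum $S_k$ does not blow up with $k$. This is where Lemma~\ref{silly-limit} is essential, since a priori the later coordinates could have $z_{i,j}/z_{1,j} > 1$ raised to a large power $k^{n+1-j}$; the lemma guarantees that the single factor $a_{j_0}^{k^{n+1-j_0}}$ with $a_{j_0} < 1$, carrying the largest relevant exponent, overwhelms everything. A small bookkeeping nuisance is that Lemma~\ref{silly-limit} as stated requires the \emph{first} base to be $< 1$, whereas here the small base sits at position $j_0$; I would handle this by noting the $j < j_0$ factors are identically $1$ and restricting the product to $j \ge j_0$, then reindexing so the hypothesis of Lemma~\ref{silly-limit} applies. (One should also note the typo-level point that the outer sum is written $\sum_{i=1}^n$ but should range over $1 \le i \le m$; this does not affect the argument.)
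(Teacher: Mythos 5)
Your proposal is correct and follows essentially the same route as the paper's proof: bound the inner sum below by $w_1$ (the $i=1$ term), bound it above uniformly by showing each term with $\zz_i \neq \zz_1$ tends to $0$ via Lemma~\ref{silly-limit} after dropping the leading coordinates where $z_{i,j} = z_{1,j}$ and reindexing, and then squeeze with $w_1^{1/k} \le S_k^{1/k} \le C^{1/k}$. You also correctly flag the reindexing step needed to match the hypothesis of Lemma~\ref{silly-limit} and the $\sum_{i=1}^n$ vs.\ $\sum_{i=1}^m$ typo, both of which the paper handles the same way.
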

\begin{proof}
It suffices to show that there exist uniform (positive) lower and upper bounds on the quantity \[Q_k = \sum_{i=1}^n \left( w_i \prod_{j=1}^n \left(\frac{z_{i,j}}{z_{1,j}}\right)^{k^{n+1-j}} \right).\] Note that, as a sum of positive terms, $Q_k$ is greater than its first term $w_1$. It remains to show that the $Q_k$ are uniformly bounded from above.

Since $Q_k$ is the finite sum of terms of the form \[T_{k, i} = w_i \prod_{j=1}^n \left(\frac{z_{i,j}}{z_{1,j}}\right)^{k^{n+1-j}},\] it suffices to bound the $T_{k,i}$ uniformly from above. Because the $\zz_i$ are decreasingly lexicographically ordered, if $T_{k,i}$ is not identically equal to $w_i$, then there must be some least $j$, denoted $j_i$, such that $z_{i,j} < z_{1,j}$. Note that $\frac{z_{i, j_i}}{z_{1, j_i}} < 1$. Moreover, for all $j < j_i$, we must have $z_{i,j} = z_{1,j}$, and so \[T_{k, i} = w_i \prod_{j=i_j}^n \left(\frac{z_{i,j}}{z_{1,j}}\right)^{k^{n+1-j}} = w_i \prod_{j=1}^{n+1-i_j} \left(\frac{z_{i,j+i_j-1}}{z_{1,j+i_j-1}}\right)^{k^{n+2-i_j- j}}.\] This last term now falls under the auspices of lemma \ref{silly-limit}; thus, $\lim_{k \to \infty} T_{k,i} = 0$.
\end{proof}

The next theorem provides a means to recover information about the cubes which constitute some $M \in \calR(k,n)$ from the values $p_{\aa, \bb}(M)$.

\begin{theorem}\label{recovering-algebra}
Let $\calA$ be an algebra of nonnegative real functions on a set $\calX$. Suppose $f_1, \dots, f_n \in \calA$ and $x_1, \dots , x_m \in \calX$. Further assume that the $x_i$ are ordered so that the vectors $\{[f_j(x_i)]_{j \geq 1}\}_i \in \R^n$ are arranged in decreasing lexicographic order. Furthermore, assume that we only have access to the values \[\left\{\sum_{i=1}^m f(x_i)\right\}_{f \in \calA}.\] Then we can recover the set $\{f_j(x_i)\}$ inductively via the formula \[f_j(x_i) = \lim_{k \to \infty} \left( \frac{\sum_{i=1}^m \left( \prod_{j^\prime \leq j} \left(f_{j^\prime}(x_i)\right)^{k^{j+1-j^\prime}} \right) - \sum_{i^\prime < i} \left(\prod_{j^\prime \leq j} \left(f_{j^\prime}(x_i)\right)^{k^{j+1-j^\prime}}\right)}{\prod_{j^\prime < j} \left(f_{j^\prime}(x_i)\right)^{k^{j+1-j^\prime}}} \right)^{1/k}.\]
\end{theorem}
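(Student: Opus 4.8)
The plan is to recover the values $\{f_j(x_i)\}$ one ``column'' $j$ at a time, and within each column from $i=1$ to $i=m$, using the sums $\sum_i g(x_i)$ for cleverly chosen $g \in \calA$ as our only data. The starting point is Lemma~\ref{recovering-technical}: if one raises a positive quantity $z_{i,j}$ to the power $k^{n+1-j}$ and multiplies across $j$, the lexicographically largest vector dominates, and taking a $k$-th root sends the contributions of all non-maximal terms to $1$ in the limit. So the first thing I would do is establish the base case $j=1$, $i=1$: apply $\calA$ to the function $g_k = f_1^{k}$ (which is in $\calA$ since $\calA$ is an algebra of functions and $k \in \N$), obtain $\sum_i f_1(x_i)^k$, and show $\left(\sum_i f_1(x_i)^k\right)^{1/k} \to \max_i f_1(x_i) = f_1(x_1)$ by the standard $\ell^\infty$-norm-as-limit-of-$\ell^p$-norms argument (sandwich between $f_1(x_1)$ and $m^{1/k} f_1(x_1)$). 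This handles $f_1(x_1)$.

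Next I would set up the induction. The inductive hypothesis is that all values $f_{j'}(x_{i'})$ with $j' < j$ (for all $i'$), together with $f_j(x_{i'})$ for $i' < i$, have already been recovered. Under this hypothesis I want to extract $f_j(x_i)$. The key observation is that the function
\[
h_k \;=\; \prod_{j' \le j} f_{j'}^{\,k^{j+1-j'}} \;\in\; \calA,
\]
so I have access to $S_k := \sum_{i'=1}^m \prod_{j' \le j}\bigl(f_{j'}(x_{i'})\bigr)^{k^{j+1-j'}}$. By the lexicographic ordering of the vectors $[f_{j'}(x_{i'})]_{j'}$ and the hierarchy of exponents $k^j \gg k^{j-1} \gg \cdots \gg k^{1}$, the term with $i'=1$ dominates inside the $k$-th root; subtracting off the already-known contributions $\sum_{i' < i} \prod_{j' \le j}(f_{j'}(x_{i'}))^{k^{j+1-j'}}$ (these are computable from the inductive hypothesis), the surviving leading term among $i' \ge i$ is the one with index $i$. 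Dividing by the known quantity $\prod_{j' < j}(f_{j'}(x_i))^{k^{j+1-j'}}$ isolates $f_j(x_i)^{k}$ up to a factor that tends to $1$, exactly as in Lemma~\ref{recovering-technical}; taking the $k$-th root and the limit yields $f_j(x_i)$, matching the displayed formula in the statement. (Technically one must also split the data point $x_i$ into those $x_{i'}$ with $i' \ge i$ that happen to agree with $x_i$ in the first $j$ coordinates, but those contribute the same leading term and only inflate the coefficient by a bounded integer, which dies under the $k$-th root — the same mechanism as the $m^{1/k}$ factor in the base case.)

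The main obstacle I expect is the careful bookkeeping of \emph{which} subtracted terms are actually available: the formula subtracts $\sum_{i' < i}\prod_{j' \le j}(f_{j'}(x_{i'}))^{k^{j+1-j'}}$, and for this to be legitimate we need every $f_{j'}(x_{i'})$ appearing there to have been recovered at an earlier stage of the double induction — $f_{j'}(x_{i'})$ with $j' < j$ comes from earlier columns, and $f_j(x_{i'})$ with $i' < i$ comes from earlier steps of the current column. Making the induction order precise (outer loop on $j$, inner loop on $i$, with the correct hypothesis carried along) and then verifying that the remaining sum over $i' \ge i$ has its unique dominant term at $i'=i$ — which requires the lexicographic hypothesis on the vectors $[f_{j'}(x_{i'})]_{j' \ge 1}$ to force, for every $i' > i$ not agreeing with $x_i$ through coordinate $j$, a strict drop $f_{j'_0}(x_{i'}) < f_{j'_0}(x_i)$ at some least index $j'_0 \le j$ — is where the real content lies; once that is in place, Lemma~\ref{recovering-technical} (applied after dividing through by $\prod_{j' < j}(f_{j'}(x_i))^{k^{j+1-j'}}$ and factoring out $f_j(x_i)^k$) closes the argument mechanically.
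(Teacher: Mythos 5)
Your proposal is correct and takes essentially the same route as the paper: the paper's proof consists of a single line, namely that the theorem follows from Lemma~\ref{recovering-technical} by setting $w_i = 1$ and $z_{i,j} = f_j(x_i)$. Your write-up simply unpacks that application — the double induction on $(j,i)$, the factoring-out of $f_j(x_i)^k$, and the observation that ties among data points agreeing with $x_i$ through coordinate $j$ only inflate the bounded coefficient that the $k$-th root kills — all of which is exactly what Lemma~\ref{recovering-technical}'s proof already absorbs.
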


Letting $\calA$ equal the algebra generated by the $p_{\aa, \bb}$ for $(\aa, \bb) \in \N_+^n \times \N^n$ and letting $\calX$ equal $\calR^1(k,n)$ will allow us to recover the values of various functions on the individual cubes that constitute an element $M$ of $\calR(k,n)$. For example, letting $f_1 = p_{\mathbf{1}, \mathbf{0}}$ will allow us to determine in decreasing order the volumes of the cubes which constitute $M$.

\begin{proof}
Theorem \ref{recovering-algebra} follows from lemma \ref{recovering-technical} by setting $w_i = 1$ and $z_{i,j} = f_j (x_i)$.
\end{proof}

Theorem \ref{recovering-algebra} allows us to recover the values of any function in the ring of $K$-finite global sections of $\widetilde{Sp}(\R^n \times \R^n)$. However, the individual coordinate functions $\eta_i$ and $\xi_i$ are not, unfortunately, elements of this ring. The following variant of theorem \ref{recovering-algebra} provides a solution to this quandary.

\begin{theorem}
Let $M \in \calR(k,n)$, and assume that \[M = \bigoplus_{i=1}^m M_i,\] for $M_i \in \calR^1(k,n)$. Assume that the $M_i$ are ordered so that the vectors \[\left[ \begin{array}{c} p_{\mathbf{1}, \mathbf{0}}(M_i) \\ \eta_{j}(M_i) \\ \xi_j(M_i) \end{array} \right] \in \R^{2n+1}\] are arranged in decreasing lexicographic order. Let
\[f_{j}(M_i) = \left(p_{\mathbf{1}, \mathbf{0}}(M_i)\right)^{k^{j+1}} \left( \prod_{j^\prime \leq j} \left(\eta_{j^\prime}(M_i)\right)^{k^{j+1-j^\prime}} \right)\]
\[g_{j}(M_i) = \left(p_{\mathbf{1}, \mathbf{0}}(M_i)\right)^{k^{n+j+1}} \left( \prod_{j^\prime = 1}^n \left(\eta_{j^\prime}(M_i)\right)^{k^{n+j+1-j^\prime}}  \right)\left(\prod_{j^\prime \leq j} \left(\xi_{j^\prime}(M_i)\right)^{k^{j+1-j^\prime}} \right).\]

Then we can recover the values $\eta_j(M_i)$ and $\xi_j(M_i)$ inductively from the values that $A[\widetilde{Sp}(\R^n \times \R^n)]$ takes on $M$ via the following:

\[\eta_j(M_i) = \lim_{k \to \infty} \left(\frac{ \sum_{i^{\prime}=1}^m f_j(M_{i^\prime}) - \sum_{i^\prime < i} f_j(M_{i^\prime})}{\left(p_{\mathbf{1}, \mathbf{0}}(M_i)\right)^{k^{j+1}} \prod_{j^\prime < j} \left(\eta_{j^\prime}(M_i)\right)^{k^{j+1-j^\prime}}} \right)^{1/k}\]

\[\xi_j(M_i) = \lim_{k \to \infty} \left( \frac{\sum_{i^{\prime}=1}^m g_j(M_{i^\prime}) - \sum_{i^\prime < i} g_j(M_{i^\prime}) }{\left(p_{\mathbf{1}, \mathbf{0}}(M_i)\right)^{k^{n+j+1}} \left( \prod_{j^\prime = 1}^n \left(\eta_{j^\prime}(M_i)\right)^{k^{n+j+1-j^\prime}}  \right) \prod_{j^\prime < j} \left(\xi_{j^\prime}(M_i)\right)^{k^{j+1-j^\prime}}} \right)^{1/k}.\]
\end{theorem}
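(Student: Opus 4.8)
The plan is to run the argument of Theorem~\ref{recovering-algebra}, which rests on Lemma~\ref{recovering-technical}, one coordinate at a time, the single new wrinkle being that $\eta_j$ and $\xi_j$ are \emph{not} elements of $A[\widetilde{Sp}(\R^n\times\R^n)]$, so one cannot read off $\sum_i\eta_j(M_i)$ or $\sum_i\xi_j(M_i)$ directly. The functions $f_j$ and $g_j$ are built precisely to repair this: on a cube $M_i\in\calR^1(k,n)$ each of them is a \emph{monomial} in the strictly positive quantities $\eta_{j'}(M_i)$ and $\xi_{j'}(M_i)$ in which every $\eta_{j'}$ occurs to a power $\geq k^{j+1}$ (respectively $\geq k^{n+j+1}$), since $p_{\mathbf{1},\mathbf{0}}(M_i)=\prod_{j'}\eta_{j'}(M_i)$ and the leading factor is $(p_{\mathbf{1},\mathbf{0}}(M_i))^{k^{j+1}}$ (respectively $(p_{\mathbf{1},\mathbf{0}}(M_i))^{k^{n+j+1}}$). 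Hence, for each fixed $k$, there are exponent vectors $\mathbf{c},\mathbf{d}\in\N_+^n$ and $\mathbf{e}\in\N^n$ with
\[
\sum_{i=1}^m f_j(M_i)=p_{\mathbf{c},\mathbf{0}}(M)\qquad\text{and}\qquad\sum_{i=1}^m g_j(M_i)=p_{\mathbf{d},\mathbf{e}}(M),
\]
and these are among the given data by Theorem~\ref{algebraic_theorem}; so the numerators in the two displayed formulas are accessible once the subtracted partial sums are known. (All $\eta_{j'}(M_i)$ and $\xi_{j'}(M_i)$ are strictly positive on $\calR^1(k,n)$, the births being nonnegative and the cubes nondegenerate, so the $k$-th roots make sense.)

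First I would recover the volumes $p_{\mathbf{1},\mathbf{0}}(M_i)$ for every $i$ by applying Theorem~\ref{recovering-algebra} with $\calA$ the algebra generated by the $p_{\aa,\bb}$, taking $f_1=p_{\mathbf{1},\mathbf{0}}$ together with finitely many further $p_{\aa,\bb}$ to fix the ordering; this is needed because $(p_{\mathbf{1},\mathbf{0}}(M_i))^{k^{j+1}}$ and $(p_{\mathbf{1},\mathbf{0}}(M_i))^{k^{n+j+1}}$ occur in the denominators, and $p_{\mathbf{1},\mathbf{0}}(M_{i'})$ enters $f_j(M_{i'})$, $g_j(M_{i'})$. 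The main step is then a double induction: the outer loop recovers the coordinates in the lexicographic order $p_{\mathbf{1},\mathbf{0}},\eta_1,\dots,\eta_n,\xi_1,\dots,\xi_n$ used to list the $M_i$, and for each of them the inner loop recovers $\eta_j(M_i)$ (resp.\ $\xi_j(M_i)$) by induction on $i$, the subtracted sum $\sum_{i'<i}f_j(M_{i'})$ (resp.\ $\sum_{i'<i}g_j(M_{i'})$) discarding the cubes already treated. For $\eta_j$, substituting the definition of $f_j$ into the ratio and pulling out the common factor $\eta_j(M_i)^k$ turns the formula into the assertion
\[
\lim_{k\to\infty}\left(\sum_{i'\geq i}\left(\frac{p_{\mathbf{1},\mathbf{0}}(M_{i'})}{p_{\mathbf{1},\mathbf{0}}(M_i)}\right)^{k^{j+1}}\left(\frac{\eta_1(M_{i'})}{\eta_1(M_i)}\right)^{k^{j}}\cdots\left(\frac{\eta_j(M_{i'})}{\eta_j(M_i)}\right)^{k}\right)^{1/k}=1,
\]
which is exactly Lemma~\ref{recovering-technical}, with all weights $1$, applied to the vectors $(p_{\mathbf{1},\mathbf{0}}(M_{i'}),\eta_1(M_{i'}),\dots,\eta_j(M_{i'}))$ for $i'\geq i$: by the chosen ordering these are in decreasing lexicographic order with $M_i$ realizing the maximum, and the exponents $k^{j+1}>k^{j}>\dots>k$ fall off in exactly that priority. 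The $\xi_j$-formula is handled identically after pulling out $\xi_j(M_i)^k$, the relevant (longer) vector now being $(p_{\mathbf{1},\mathbf{0}}(M_{i'}),\eta_1(M_{i'}),\dots,\eta_n(M_{i'}),\xi_1(M_{i'}),\dots,\xi_j(M_{i'}))$ with exponents $k^{n+j+1}>\dots>k$; Lemma~\ref{recovering-technical} again yields limit $1$, and the two formulas follow.

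The step I expect to require genuine care is the exponent bookkeeping that makes Lemma~\ref{recovering-technical} literally applicable: one must verify that the tower $k^{j+1},k^{j},\dots,k$ (and $k^{n+j+1},\dots,k$) is matched term for term with the lexicographic priority of $(p_{\mathbf{1},\mathbf{0}},\eta_1,\dots)$, so that at the first coordinate on which a later cube $M_{i'}$ drops strictly below $M_i$ the corresponding ratio $<1$ is raised to a power of $k$ exceeding all those that follow and hence, by Lemma~\ref{silly-limit}, annihilates that term. A lesser subtlety is that during the induction only a prefix of each cube's coordinate vector has been recovered, so the ordering of the remaining cubes---and hence the exact meaning of $\sum_{i'<i}$---is not yet pinned down among cubes agreeing on that prefix; this is harmless since $f_j(M_{i'})$ and $g_j(M_{i'})$ depend only on the already-recovered prefix coordinates of $M_{i'}$, so the subtracted sum is well-defined and computable regardless of how such ties are broken.
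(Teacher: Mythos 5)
Your proposal is correct and follows the same route as the paper: first recover $p_{\mathbf{1},\mathbf{0}}(M_i)$ via Theorem~\ref{recovering-algebra}, observe that $\sum_i f_j(M_i)$ and $\sum_i g_j(M_i)$ are the values of power sums $p_{\mathbf{c},\mathbf{0}}$ and $p_{\mathbf{d},\mathbf{e}}$ (with $\mathbf{c},\mathbf{d}\in\N_+^n$ because the $p_{\mathbf{1},\mathbf{0}}$ factor contributes $k^{j+1}$ or $k^{n+j+1}$ to every $\eta$-exponent), and then apply Lemma~\ref{recovering-technical} after pulling out the common $k$-th power; the paper simply cites Lemma~\ref{recovering-technical} with $z_{i,1}=p_{\mathbf{1},\mathbf{0}}(x_i)$, $z_{i,j+1}=\eta_j(x_i)$, $z_{i,n+j+1}=\xi_j(x_i)$, which is exactly the bookkeeping you spell out. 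One small caveat on your closing remark: $f_j(M_{i'})$ does depend on $\eta_j(M_{i'})$ (not only on the shorter recovered prefix), so for cubes tied on $(p_{\mathbf{1},\mathbf{0}},\eta_1,\dots,\eta_{j-1})$ the tie is not broken arbitrarily but is resolved by the inner induction itself, since the limit returns the maximum of the tied $\eta_j$-values at each stage; this does not affect the validity of the theorem, whose statement presupposes the full lexicographic ordering.
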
 
\begin{proof}
We can determine the necessary values of $p_{\mathbf{1}, \mathbf{0}}(M_i)$ using theorem \ref{recovering-algebra}. The values $\sum_{i^{\prime}=1}^m f_j(M_{i^\prime})$ and $\sum_{i^{\prime}=1}^m g_j(M_{i^\prime})$ are in the ring of algebraic functions on $\calR(k,n)$. All other values of $\eta_j(M_i)$ and $\xi_j(M_i)$ in the limits shown above can be determined by induction.

The evaluation of the limits follows from lemma \ref{recovering-technical} with $w_i = 1$, $z_{i,1} = p_{\mathbf{1}, \mathbf{0}}(x_i)$, $z_{i,j+1} = \eta_j (x_i)$ for $1 \leq j \leq n$, and $z_{i,j+1} = \xi_j (x_i)$ for $n+1 \leq j \leq 2n$.
\end{proof}

\begin{corollary}
A one dimensional persistence module $M$ is completely recoverable from the values $\{p_{a,b}(M)\}_{(a,b) \in \N_+ \times \N}$ (equivalently, from the values $\{F_{a,b}(M)\}_{(a,b) \in \N_+ \times \N}$).
\end{corollary}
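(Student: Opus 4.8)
The plan is to derive this corollary by specializing the recovery machinery of this section to the case $n = 1$. First I would fix the decomposition of $M$. By the structure theorem for finitely generated modules over the PID $k[x]$ --- equivalently, Theorem~\ref{correspond} in the case $n = 1$ --- the module $M$ is isomorphic, uniquely up to reordering of summands, to a direct sum of interval modules $\bigoplus_{i=1}^{m} [x_i, y_i]$. Since $F_{1,b}([x,\infty)) = \int_x^\infty z^b\,dz = \infty$, while by Theorem~\ref{span-n-n} the $p_{a,b}(M)$ and the $F_{a,b}(M)$ are finite linear combinations of one another, the hypothesis that these invariants are at our disposal forces every bar to have finite positive length. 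Thus $x_i < y_i < \infty$ for all $i$, so $M \in \calR^m(k,1) \subseteq \calR(k,1)$, and recovering $M$ (up to isomorphism) amounts to recovering the multiset $\{(x_i, y_i)\}_{i=1}^m$.

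Next I would apply the final theorem above (the variant of Theorem~\ref{recovering-algebra}) with $n = 1$. In that case the coordinates $\eta$ and $\xi$ are scalars on summands, with $\eta(M_i) = y_i - x_i$, $\xi(M_i) = y_i + x_i$, and $p_{\mathbf 1, \mathbf 0}(M_i) = \eta(M_i)$; in particular, once we know $\eta(M_i)$ and $\xi(M_i)$ we know $x_i$ and $y_i$. The only thing to check is that the auxiliary symmetric sums occurring in the numerators of the limit formulas are in fact among the given invariants: a direct computation shows that, for $n=1$, one has $\sum_{i'} f_1(M_{i'}) = p_{k^2 + k,\, 0}(M)$ and $\sum_{i'} g_1(M_{i'}) = p_{k^3 + k^2,\, k}(M)$, both of which lie in the algebra generated by the $p_{a,b}$. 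The values $p_{\mathbf 1, \mathbf 0}(M_i)$ needed to start the induction are supplied by Theorem~\ref{recovering-algebra} (with $f_1 = p_{\mathbf 1, \mathbf 0}$), and the $M_i$ can then be put in the decreasing lexicographic order on $(p_{\mathbf 1, \mathbf 0}(M_i), \eta(M_i), \xi(M_i))$ required by the theorem. Its limit formulas then return $\eta(M_i)$ and $\xi(M_i)$, hence $(x_i, y_i)$, for each $i$, which recovers $M$.

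For the parenthetical equivalence, I would point to Lemma~\ref{span-1-1} together with Theorem~\ref{span-n-n}: the computation there yields $F_{a,b} = c_{a,b}\, p_{a,b} + \HOT$ with $c_{a,b} \neq 0$, so on each finite grading slice $\{\, a + b = k \,\}$ the passage between $\{F_{a,b}\}$ and $\{p_{a,b}\}$ is triangular with respect to the order $\preceq$ and has invertible diagonal; hence the families $\{p_{a,b}(M)\}$ and $\{F_{a,b}(M)\}$ determine each other, and recoverability of $M$ from one is equivalent to recoverability from the other. I expect the only genuine obstacle to be the reduction step --- making precise that, for the invariants to be finite, a ``one-dimensional persistence module'' must in fact be an element of $\calR(k,1)$ (no free summands, no zero-length bars) --- so that the theorems of this section apply verbatim; everything else is a matter of setting $n = 1$ and observing that the symmetric sums $\sum f_j$ and $\sum g_j$ are honest power sums $p_{a,b}$.
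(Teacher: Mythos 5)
Your proof is correct and takes the same route as the paper's one-line argument, namely the reduction $\calM(k,1) = \calR(k,1)$, after which the recovery theorems of this section apply directly. You simply spell out that reduction (structure theorem over $k[x]$, ruling out infinite bars) and verify the $n = 1$ specializations $\sum_{i'} f_1(M_{i'}) = p_{k^2+k,\,0}(M)$ and $\sum_{i'} g_1(M_{i'}) = p_{k^3+k^2,\,k}(M)$, details the paper leaves implicit.
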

\begin{proof}
This follows from the fact that $\calM(k,1) = \calR(k,1)$.
\end{proof}

The previous two theorems merely give an idea of how one may use lemma \ref{recovering-technical} to recover information about an element $M \in \calR(k,n)$. One may, of course, use lemma \ref{recovering-technical} with other functions defined on $\calR(k,n)$.

\bibliographystyle{alpha}
\bibliography{multi_invariants}
\end{document}